\pgfplotsset{compat=1.18}
\newtheorem*{theorem-non}{Theorem}
\newtheorem{definition}{Definition}
\newtheorem{remark}{Remark}
\newtheorem{proposition}{Proposition}[section]
\newtheorem{theorem}{Theorem}[section]
\newtheorem{corollary}{Corollary}[section]
\newtheorem{lemma}{Lemma}[section]
\newtheorem{nono-theorem}{Theorem}
\newtheorem{conjecture}{Conjecture}[section]
\newtheorem{claim}{Claim}
\newcommand{\be}{\begin{equation}}
\newcommand{\ee}{\end{equation}}
\newcommand{\bm}{\begin{align}*}
\newcommand{\enm}{\end{align}*}
\newcommand{\bespeq}{\begin{equation}\begin{split}}
\newcommand{\espeq}{\end{split}\end{equation}}
\def\i {\infty}
\renewcommand{\div}{\mbox{div }}
\newcommand{\tr}{\mbox{tr}}
\newcommand\restri[2]{{
		\left.\kern-\nulldelimiterspace 
		#1 
		\right|_{#2} 
}}
\definecolor{ffqqqq}{rgb}{1.,0.,0.}
\definecolor{uuuuuu}{rgb}{0.26666666666666666,0.26666666666666666,0.26666666666666666}
\def\ps@pprintTitle{%
  \let\@oddhead\@empty
  \let\@evenhead\@empty
  \let\@oddfoot\@empty
  \let\@evenfoot\@oddfoot
}
\def\@author#1{\g@addto@macro\elsauthors{\normalsize%
    \def\baselinestretch{1}%
    \upshape\authorsep#1\unskip\textsuperscript{%
      \ifx\@fnmark\@empty\else\unskip\sep\@fnmark\let\sep=,\fi
      \ifx\@corref\@empty\else\unskip\sep\@corref\let\sep=,\fi
      }%
    \def\authorsep{\unskip,\space}%
    \global\let\@fnmark\@empty
    \global\let\@corref\@empty  
    \global\let\sep\@empty}%
    \@eadauthor={#1}
}
\begin{document}

\title{A large data result for vacuum Einstein's equations}

\author{Puskar Mondal}\footnote{ e-mail:pushkarmondal@gmail.com}


\maketitle

\begin{abstract}
We prove a global well-posedness and asymptotic convergence theorem for the \((3+1)\)-dimensional vacuum Einstein equations with positive cosmological constant \(\Lambda\) on globally hyperbolic spacetimes \(\widetilde M \cong M \times \mathbb R\), where \(M\) is a closed three-manifold of negative Yamabe type. In constant-mean-curvature transported spatial coordinates, an open set of large initial data gives rise to future-global solutions whose renormalized spatial metrics converge smoothly to a limiting metric of constant negative scalar curvature. The key new ingredient is an integrable damping mechanism, induced by the cosmological constant in this gauge and absent in the \(\Lambda=0\) vacuum problem, which yields time-integrable decay for the nonlinear evolution. As a consequence, the Einstein--\(\Lambda\) flow does not in general canonically encode the Thurston geometrization of the underlying three-manifold. This confirms a conjecture of Ringstr\"om on the asymptotic topological indistinguishability of large-data Einstein--\(\Lambda\) dynamics. An analogous theorem is also proved for manifolds of positive Yamabe type, under an additional technical hypothesis.
\end{abstract}

\setcounter{tocdepth}{2}
{\hypersetup{linkcolor=black}
\small
\tableofcontents
}

\section{Introduction}
\noindent We will study the vacuum Einstein's equations including a positive cosmological constant $\Lambda$. We are given a $n+1$-dimensional ($n=3$ in this article) $C^{\infty}$ globally hyperbolic connected  Lorentzian manifold $(\widetilde{M},\widehat{g})$ with signature $-+++$. Einstein's equations on $(\widetilde{M},\widehat{g})$ reads 
\begin{align}
\label{eq:first}
 \text{Ric}[\widehat{g}]-\frac{1}{2}R[\widehat{g}]\widehat{g}+\Lambda \widehat{g}=0.   
\end{align}
Global hyperbolicity implies the existence of a Cauchy hypersurface and in particular $\widetilde{M}= M\times \mathbb{R}$ with $M$ being diffeomorphic to a Cauchy hypersurface. In this article, we will choose $M$ to be of closed negative Yamabe type (a topological invariant, the so-called $\sigma$ constant $\sigma(M)\leq 0$). By definition, these admit no Riemannian
metric $g$ having scalar curvature $R(g)\geq 0$ everywhere. A closed 3-manifold $M$ is of negative Yamabe type if and only if it lies in one of the
following three mutually exclusive subsets:
\begin{itemize}
\item[(1)] $M$ is hyperbolizable (that
admits a hyperbolic metric),
\item[(2)] $M$ is a non-hyperbolizable $K(\pi,1)$ manifold of non-flat type (the six flat $K(\pi,1)$ manifolds are of zero Yamabe type), here $K(\pi,1)$ is of Eilenberg-MacLane type and by definition $\pi_{1}(M)=\pi$ and $\pi_{i}=0,~i\geq 2$-the universal cover of $K(\pi,1)$ is contractible and known to be diffeomorphic to $\mathbb{R}^{3}$ \cite{porti2008geometrization}, and
\item[(3)] $M$ has a nontrivial connected sum decomposition (i.e., $M$ is composite)
\begin{align}
\label{eq:negative}
M\approx (\mathbf{S}^{3}/\Gamma)_{1}\#..\#(\mathbf{S}^{3}/\Gamma)_{k}\#(\mathbf{S}^{2}\times \mathbf{S}^{1})_{1}\#..\#(\mathbf{S}^{2}\times \mathbf{S}^{1})_{l}\#\mathbf{K}(\pi,1)_{1}\#..\#\mathbf{K}(\pi,1)_{m}  
\end{align}
in which at least one factor is a $K(\pi, 1)$ manifold i.e, $m\geq 1$ \cite{schoenyau}. In this case the $K(\pi,1)$
factor may be either of flat type or hyperbolizable or non-hyperbolizable non-flat type. 
\end{itemize}
The six flat manifolds
comprise by themselves the subset of zero Yamabe type. These admit metrics
having vanishing scalar curvature (the flat ones) but no metrics having strictly
positive scalar curvature. Here $\Gamma\subset SO(4)$ acts freely and properly discontinuously on $\mathbb{S}^{3}$. Also note that $M\#\mathbb{S}^{3}\approx M$ for any 3-manifold $M$. It is known that every prime $K(\pi,1)$ manifold is decomposable
into a (possibly trivial but always finite) collection of (complete, finite volume)
hyperbolic and graph manifold \footnote{A compact oriented $3-$manifold $M$ is a graph manifold if $\exists$ a finite disjoint collection of embedded $2-$tori $\{T_{j}\}\subset M$ such that each connected component of $M-\cup_{j}T_{j}$ is the total space of circle bundle over surfaces} components. 

\noindent Our goal is to study the long-term behavior of Einstein's equations \ref{eq:first} on spacetimes $\widetilde{M}=M\times \mathbb{R}$ with $M$ being general $3-$ manifolds of negative Yamabe type as in \ref{eq:negative}. An issue closely related to Penrose's weak cosmic censorship \cite{penrose1999question} in the cosmological context (compact spatial topology where the notion of null infinity is not defined) may be phrased as follows    

\begin{conjecture}[No-Naked Singularity Conjecture for Cosmological Spacetimes]
\label{conjecture1}
Let $(\widetilde{M}, \widehat{g})$ be a smooth, time-oriented, globally hyperbolic Lorentzian $(3+1)$-dimensional spacetime with $\widetilde{M} \cong M \times \mathbb{R}$, where $M$ is a closed, connected, oriented three-manifold. Suppose $\widehat{g}$ satisfies the Einstein field equations,
\[
\operatorname{Ric}_{\widehat{g}} - \tfrac{1}{2} R_{\widehat{g}} \widehat{g} + \Lambda \widehat{g} =\mathcal{T},
\]
where $\Lambda \in \mathbb{R}$ is the cosmological constant and $\mathcal{T}$ is a smooth energy-momentum tensor obeying the dominant energy condition and representing physically reasonable matter and radiation content.

\noindent Then, for an open and dense set of smooth initial data prescribed on a Cauchy hypersurface $M \times \{t_0\}$ satisfying the Einstein constraint equations, the maximal globally hyperbolic development $(\widetilde{M}, \widehat{g})$ does not develop a future naked singularity; that is, any future singularity (if it forms) is not visible to any future-directed timelike curve originating from the initial hypersurface.
\end{conjecture}

\begin{remark}
 Notice that it is very unlikely to obtain global control of a solution with arbitrary initial data in a fully generic framework since Einstein's equations are quasi-linear  \textit{hyperbolic}  
\end{remark}

\noindent In this article, we focus on the vacuum case with a positive cosmological constant, i.e., $\mathcal{T}=0$ and $\Lambda>0$.
\subsection{Background Literature}

\noindent The future stability problem for cosmological solutions to the Einstein equations has been extensively studied in the small-data regime, particularly when the spatial manifold $M$ lies in the negative Yamabe class and admits a hyperbolic metric.

\noindent The foundational result in this context is due to Andersson and Moncrief \cite{andersson2004future}, who established the nonlinear future stability of the Milne universe,
\[
\widehat{g} = -dt^2 + \frac{t^2}{9}\eta, \quad \text{Ric}[\eta] = -\frac{2}{9}\eta,
\]
on manifolds of the form $\mathbb{H}^3/\Gamma \times \mathbb{R}$, where $\Gamma \subset \mathrm{SO}^{+}(1,3)$ is cocompact and acts freely and properly discontinuously on $\mathbb{H}^3$. Their result applies to vacuum solutions with spatial topology restricted to eliminate moduli of flat perturbations. The Milne model arises as a quotient of the interior of the future light cone in Minkowski space, and the proof employs a generalized energy method under small perturbations of the hyperbolic metric.

\noindent This analysis was extended to higher-dimensional spacetimes ($n+1 \geq 4$) in \cite{andersson2011einstein}, assuming that the spatial manifold admits a strictly stable negative Einstein metric. The proof relies on the decay of an energy functional and the spectral gap of the Lichnerowicz Laplacian associated to the Einstein background.

\noindent Subsequent work has incorporated various matter models. Andersson and Fajman \cite{andersson2020nonlinear} established the nonlinear future stability of the Milne model within the Einstein–massive Vlasov system. In a higher-dimensional Kaluza–Klein setting, Branding, Fajman, and Kröncke \cite{KK} proved future stability for $(\mathbb{H}^3/\Gamma \times \mathbb{T}^d) \times \mathbb{R}$ with Kaluza–Klein reduction, reducing the system to a coupled Einstein–Maxwell–wave map system.

\noindent Further stability results in the small data regime exist for Einstein equations coupled with a positive cosmological constant \cite{fajman2020stable, fajman2018cmc, mondal2019attractors}, relativistic fluids \cite{oliynyk2016future, oliynyk2021future, hadvzic2015global, lefloch2021nonlinear, mondal2024nonlinear}, and scalar fields \cite{wang, fajman2021attractors}. These developments confirm the validity of Conjecture~\ref{conjecture1} for small perturbations of homogeneous background spacetimes with hyperbolic spatial sections.

\noindent In contrast, the asymptotically flat case has a distinct lineage, beginning with the monumental work of Christodoulou and Klainerman \cite{christodoulou1993global}, who proved the global nonlinear stability of Minkowski spacetime. This framework has since been extended in various directions \cite{lindblad2010global, lefloch2016global, fajman2021stability, taylor, mondal2024global}, focusing on vacuum and matter models under asymptotically flat conditions.

\subsection{Main Result}

\noindent We consider the Einstein vacuum equations with positive cosmological constant $\Lambda > 0$ on $(3+1)$-dimensional globally hyperbolic spacetimes $\widetilde{M} \cong M \times \mathbb{R}$, where $M$ is a closed, connected, oriented $3$-manifold of negative Yamabe type. Importantly, we do not assume that $M$ admits a negative Einstein metric; in particular, $M$ may contain graph manifold summands in its prime decomposition, cf.~\eqref{eq:negative}, for which no Einstein metric exists.

\noindent In contrast to previous work, we make no smallness assumption on the initial data. We fix the constant mean curvature (CMC) transported spatial gauge. The time parameter is taken to be the mean extrinsic curvature $\tau := \operatorname{tr}_{g}(k)$, and we consider solutions in the expanding direction, i.e., $\tau < 0$ and increasing. 

\noindent To capture the asymptotic geometry, we introduce a natural geometric rescaling by the function $\varphi^2 := \tau^2 - 3\Lambda > 0$ (see claim \ref{important}) and analyze the rescaled evolution system \eqref{eq:cmc1}--\eqref{eq:cmc2}. 

\noindent To quantify the deviation of the evolving geometry from hyperbolicity, we define the \emph{obstruction tensor}
\[
\mathfrak{T}[g] := \operatorname{Ric}[g] - \tfrac{1}{3} R(g) g,
\]
which measures the failure of $g$ to be Einstein. In three spatial dimensions, by Mostow rigidity, $\mathfrak{T}[g] = 0$ implies that $g$ is a hyperbolic metric, uniquely determined up to isometry. For generic $M$ of negative Yamabe type, $\mathfrak{T}[g] \neq 0$ for all $g$.

\noindent We impose the transported spatial gauge (see Definition~\ref{spatial}), in which the pair $(\Sigma, \mathfrak{T})$ satisfies a manifestly hyperbolic evolution system. This is coupled with an elliptic equation for the lapse function determined by the CMC condition. In this formulation, we establish the following global convergence theorem for the Einstein flow in the large data regime on general negative Yamabe slices.

\begin{theorem}[Global Well-posedness: $\Lambda > 0$, $\sigma(M) \leq 0$]
\label{main}
Let $(\widehat{M}^{3+1}, \widehat{g})$ be a globally hyperbolic Lorentzian spacetime satisfying the Einstein vacuum equations with positive cosmological constant $\Lambda > 0$, and suppose that $\widehat{M}$ admits a constant mean curvature (CMC) foliation by compact spacelike hypersurfaces diffeomorphic to a closed $3$-manifold $M$. Assume furthermore that $M$ is of negative Yamabe type, i.e., $\sigma(M) \leq 0$.

\noindent Fix a smooth background Riemannian metric $\xi_0$ on $M$ and a constant $C > 1$. For any initial energy quantity $\mathcal{I}^{0} > 0$, there exists a constant $a = a(\mathcal{I}^{0}) > 0$, so that $\mathcal{I}^{0} e^{-a/10} < 1$.

\noindent Let $(g_0, \Sigma_0)$ be an initial data set verifying the Einstein constraint equations at initial CMC time $T_0 = a>0$, written in CMC-transported spatial coordinates and satisfying:
\begin{align}
\label{eq:1}
C^{-1} \xi_0 \leq g_0 \leq C \xi_0,
\end{align}
\begin{align}
\label{eq:2}
\sum_{I = 0}^{3} \| \nabla^I \Sigma_0 \|_{L^2(M)} + \sum_{I = 0}^{2} \left( \| \nabla^I \mathfrak{T}[g_0] \|_{L^2(M)} + \| e^a \nabla^I \Sigma_0 \|_{L^2(M)} \right) \leq \mathcal{I}^0,
\end{align}
where $\mathfrak{T}[g_0]$ denotes the renormalized trace-free spatial Ricci curvature tensor of $g_0$.

\noindent Then, the Einstein-$\Lambda$ evolution equations admit a unique classical solution
\[
T \mapsto (g(T), \Sigma(T)) \in \mathcal{C}^\infty([T_0, \infty) \times M)
\]
in CMC-transported spatial coordinates, satisfying the constraint equations at each slice $T$ and obeying the following uniform a priori estimates for all $T \in [T_0, \infty)$:
\begin{align}
\sum_{I = 0}^{3} \| \nabla^I \Sigma(T) \|_{L^2(M)} + \sum_{I = 0}^{2} \left( \| \nabla^I \mathfrak{T}[g(T)] \|_{L^2(M)} + \| e^T \nabla^I \Sigma(T) \|_{L^2(M)} \right) \leq C_1(1 + \mathcal{I}^0),
\end{align}
\begin{align}
C_2^{-1} g_0 \leq g(T) \leq C_2 g_0.
\end{align}
Here, $C_1, C_2 > 0$ are numerical constants depending only on the universal geometric and analytic data of the problem (e.g., Sobolev constants of $(M, g_0)$ and the constants in the structure equations, dimension of $M$), but independent of $T$. The developed spacetime is future geodesically complete.

\noindent Moreover, the solution $(g(T), \Sigma(T))$ converges in the $C^\infty$ topology, as $T \to \infty$, to a limiting Riemannian metric $\widetilde{g}$ on $M$ of pointwise constant negative scalar curvature, in the sense that
\[
\Sigma(T) \to 0 \quad \text{and} \quad g(T) \to \widetilde{g}, \quad \text{as } T \to \infty,
\]
with convergence holding in all Sobolev norms. In particular, the spacetime $(\widehat{M}, \widehat{g})$ admits a future-complete CMC foliation asymptotic to a constant negative scalar curvature slice.
\end{theorem}


\begin{remark}
Note that the allowed initial data size $\mathcal{I}^{0}$ is modulated by the condition $\mathcal{I}^{0}e^{-a/10}<1$ for initial time $T_{0}=a>0$ i.e., when $a$ is small, then the allowed data size is $O(1)$ but when $a\gg 1$, $\mathcal{I}^{0}$ can be very large $O(e^{a/10})$.    
\end{remark}

\begin{remark}
The result of the convergence of an arbitrary metric to the constant negative scalar curvature metric can be compared to the \textbf{Yamabe flow} on negative Yamabe manifolds. Of course, the convergence of the Yamabe flow on manifolds with $\sigma(M)\leq 0$ is trivial compared to its positive counterpart. Contrary to the Einstein-$\Lambda$ flow, the Yamabe flow is of energy-critical parabolic nature and endowed with monotonic entities (e.g., similarly, Yang-Mills heat flow \cite{YM} is of parabolic nature). Moreover, the limit constant scalar curvature metric obtained in the framework of Einstein's equations is \textbf{not} conformal to the initial metric.  
\end{remark}

\noindent \subsection{Heuristics on the genericity of the initial data}\label{genericity}
We comment on the notion of ``genericity'' of the initial data from a heuristic yet technically relevant viewpoint. The compact spatial manifold $M$ under consideration is of negative Yamabe type and, in particular, admits no Riemannian metric with nontrivial global Killing fields. Indeed, any compact $3$--manifold that supports a continuous isometry group necessarily admits an effective $SO(2)$--action. The classification theorem of Fischer--Moncrief~\cite{fischer1996quantum} implies that no nontrivial connected sum of negative Yamabe type supports such an action. Among prime manifolds, only a distinguished subclass of Seifert fibered spaces---specifically those whose fundamental group has nontrivial infinite cyclic center---can admit an $SO(2)$ symmetry. Even in these exceptional geometries, such continuous symmetries arise only in a nongeneric subset of the moduli space of initial data. Thus, from the perspective of geometric constraints, the absence of continuous symmetries is the generic situation.

\medskip

\noindent Although the topology of \(M\) leaves substantial freedom in the choice of initial data, the hypotheses of Theorem~\ref{main} still impose a definite restriction on how such data may concentrate. To describe this heuristically, we associate to the pair \((\Sigma,\mathfrak T)\) amplitude--wavelength parameters
\[
(A_{\Sigma},L_{\Sigma}),\qquad (A_{\mathfrak T},L_{\mathfrak T}),
\]
where \(\Sigma\) is the trace-free part of the second fundamental form and \(\mathfrak T\) is the renormalized trace-free Ricci tensor appearing in Theorem~\ref{main}. The quantities \(L_{\Sigma}\) and \(L_{\mathfrak T}\) should be interpreted as characteristic spatial scales, while \(A_{\Sigma}\) and \(A_{\mathfrak T}\) represent dimensionally normalized amplitudes. This discussion is purely heuristic and is intended only to indicate the concentration regimes compatible with the norm bounds of the theorem.

\noindent In three spatial dimensions, a field of amplitude \(A\) concentrated at wavelength \(L\) has squared \(H^{k}\)-size of order
\[
|A|^{2}\sum_{j=0}^{k} L^{3-2j}.
\]
Consequently, the control required in Theorem~\ref{main} yields, at the highest order for \(\Sigma\),
\begin{equation}\label{eq:highercontrol_refined}
L_{\Sigma}^{-3}|A_{\Sigma}|^{2}
+L_{\Sigma}^{-1}|A_{\Sigma}|^{2}
+L_{\Sigma}|A_{\Sigma}|^{2}
+L_{\Sigma}^{3}|A_{\Sigma}|^{2}
\lesssim (\mathcal I^{0})^{2},
\end{equation}
whereas the lower-order weighted bounds for \(\Sigma\) give
\begin{equation}\label{eq:lowercontrol_refined}
L_{\Sigma}^{-1}|A_{\Sigma}|^{2}
+L_{\Sigma}|A_{\Sigma}|^{2}
+L_{\Sigma}^{3}|A_{\Sigma}|^{2}
\lesssim e^{-2a}(\mathcal I^{0})^{2}.
\end{equation}
Similarly, for \(\mathfrak T\) one obtains
\begin{equation}\label{eq:ampl_refined}
L_{\mathfrak T}^{-1}|A_{\mathfrak T}|^{2}
+L_{\mathfrak T}|A_{\mathfrak T}|^{2}
+L_{\mathfrak T}^{3}|A_{\mathfrak T}|^{2}
\lesssim (\mathcal I^{0})^{2}.
\end{equation}
Moreover, the smallness assumption in Theorem~\ref{main} allows \(\mathcal I^{0}\) to grow with \(a\), subject only to
\[
\mathcal I^{0}e^{-a/10}\ll 1.
\]
In particular, one may heuristically regard \(\mathcal I^{0}\) as being as large as \(e^{a/10}\), up to a fixed small factor.

\noindent The contrast between \eqref{eq:highercontrol_refined} and \eqref{eq:lowercontrol_refined} is important. The term
\[
L_{\Sigma}^{-3}|A_{\Sigma}|^{2}
\]
shows that \(\Sigma\) may carry nontrivial high-frequency concentration at the top order, but only under the severe lower-order restriction imposed by \eqref{eq:lowercontrol_refined}. Indeed, if one writes
\[
A_{\Sigma}\sim e^{-\alpha a},
\qquad
L_{\Sigma}\sim e^{-\beta a},
\]
and takes \(\mathcal I^{0}\sim e^{a/10}\), then compatibility of
\eqref{eq:highercontrol_refined}--\eqref{eq:lowercontrol_refined} requires, at the level of exponents,
\[
3\beta-2\alpha\le \frac15,
\qquad
\beta-2\alpha\le -\frac95.
\]
Thus \(\Sigma\) may oscillate on short scales, but only with correspondingly small amplitude. For example,
\[
A_{\Sigma}=O(e^{-a}),
\qquad
L_{\Sigma}=O(e^{-a/10})
\]
is admissible. In this sense, the theorem allows \emph{high-frequency, small-amplitude} concentration for \(\Sigma\), but excludes a genuine short-pulse regime for \(\Sigma\) itself.

\noindent By contrast, \eqref{eq:ampl_refined} is markedly less restrictive. At the heuristic threshold allowed by \(\mathcal I^{0}\sim e^{a/10}\), one may take
\[
|A_{\mathfrak T}|^{2}=O(e^{a/10}),
\qquad
L_{\mathfrak T}=O(e^{-a/10}),
\]
for which the left-hand side of \eqref{eq:ampl_refined} is of order
\[
e^{a/5}+1+e^{-a/5}.
\]
This is compatible with the bound \((\mathcal I^{0})^{2}\sim e^{a/5}\). Heuristically, the tensor \(\mathfrak T\) may therefore exhibit a genuine short-pulse-type concentration: large amplitude supported at very small spatial scale.

\noindent This has a natural interpretation in terms of the free gravitational field. Relative to the future-directed unit normal \(T\), let
\[
E:=W(T,\cdot,T,\cdot),\qquad
B:={}^{*}W(T,\cdot,T,\cdot)
\]
denote the electric and magnetic parts of the Weyl tensor. The trace-free Gauss--Codazzi relations show schematically that
\[
E=\mathfrak T+\mathrm{l.o.t.}(\Sigma),
\qquad
B=\operatorname{curl}\Sigma,
\]
where \(\mathrm{l.o.t.}(\Sigma)\) denotes expressions of lower differential order in \(\Sigma\) (depending on the precise normalization, these may include terms linear and quadratic in \(\Sigma\)). Since the lower-order norms of \(\Sigma\) are strongly suppressed, the electric part may inherit the short-pulse concentration carried by \(\mathfrak T\), while the magnetic part remains constrained by the smallness of \(\Sigma\) and its first derivative. Thus the class of data constructed here permits \emph{high-frequency, large-amplitude electric Weyl curvature}, but only \emph{high-frequency, small-amplitude magnetic Weyl curvature} on the initial slice. See Figure~\ref{fig:blue} for a schematic illustration, and Section~\ref{data} for the rigorous construction.

\noindent This asymmetry between the electric and magnetic components is specific to the present CMC/\(\Lambda>0\) framework and has no direct analogue in the standard \(\Lambda=0\) short-pulse constructions. It is therefore natural to ask whether a corresponding short-pulse concentration in the magnetic component could overcome the cosmological expansion and drive gravitational collapse. That question lies beyond the scope of the present paper. Our point here is only that the class of initial data considered in Theorem~\ref{main} is not a merely formal nonempty class: it already captures a genuinely nonlinear and physically meaningful concentration mechanism.

\subsection{Heuristics for the construction of the initial data}

Recall that in CMC gauge the re-scaled vacuum constraint equations reduce to
\begin{align}
\label{eq:cons1}
R(g)+\frac{n-1}{n} &= |\Sigma|_{g}^{2},\\
\label{eq:cons2}
\nabla^{j}\Sigma_{ij} &= 0,
\end{align}
where \(\Sigma\) is the trace-free part of the second fundamental form. Thus \(\Sigma\) is, by definition, trace-free, and \eqref{eq:cons2} is the momentum constraint.

\noindent For the class of data considered in Theorem~\ref{main}, one seeks
\[
\|\Sigma\|_{H^{2}}\lesssim e^{-a}\mathcal I^{0},
\qquad
\|\Sigma\|_{H^{3}}\lesssim \mathcal I^{0},
\qquad
\|\mathfrak T[g]\|_{H^{2}}\lesssim \mathcal I^{0},
\]
with
\[
\mathcal I^{0}e^{-a/10}< 1.
\]
At the heuristic upper threshold allowed by the theorem, one may think of
\[
\mathcal I^{0}\sim e^{a/10},
\qquad a\gg 1.
\]

\noindent The Hamiltonian constraint \eqref{eq:cons1} then forces the renormalized scalar-curvature defect to be much smaller than the \(H^{2}\)-size of \(\mathfrak T[g]\). Indeed, since \(H^{2}\) is an algebra in dimension three,
\[
\Bigl\|R(g)+\frac{n-1}{n}\Bigr\|_{H^{2}}
=
\||\Sigma|_{g}^{2}\|_{H^{2}}
\lesssim
\|\Sigma\|_{H^{2}}^{2}
\lesssim
e^{-2a}(\mathcal I^{0})^{2},
\]
while at one higher derivative one only obtains
\[
\Bigl\|R(g)+\frac{n-1}{n}\Bigr\|_{H^{3}}
=
\||\Sigma|_{g}^{2}\|_{H^{3}}
\lesssim
\|\Sigma\|_{H^{2}}\|\Sigma\|_{H^{3}}+\|\Sigma\|_{H^{3}}^{2}
\lesssim
(\mathcal I^{0})^{2}.
\]
If \(\mathcal I^{0}\sim e^{a/10}\), these bounds become
\[
\Bigl\|R(g)+\frac{n-1}{n}\Bigr\|_{H^{2}}
= O(e^{-9a/5}),
\qquad
\Bigl\|R(g)+\frac{n-1}{n}\Bigr\|_{H^{3}}
= O(e^{a/5}).
\]

\noindent Accordingly, the geometric core of the construction is the following. One must produce a metric \(g\) on a closed three-manifold of negative Yamabe type such that
\[
\|\mathfrak T[g]\|_{H^{2}} = O(e^{a/10}),
\qquad
\Bigl\|R(g)+\frac{n-1}{n}\Bigr\|_{H^{2}} = O(e^{-9a/5}),
\qquad
\Bigl\|R(g)+\frac{n-1}{n}\Bigr\|_{H^{3}} = O(e^{a/5}),
\]
while at the same time arranging for a trace-free tensor \(\Sigma\) satisfying
\[
\|\Sigma\|_{H^{2}} = O(e^{-9a/10}),
\qquad
\|\Sigma\|_{H^{3}} = O(e^{a/10}),
\]
and the momentum constraint.

\noindent The standard conformal method provides the natural framework for solving
\eqref{eq:cons1}--\eqref{eq:cons2}, but in the present setting the delicate point is not merely the solvability of the constraints. Rather, one must solve them in such a way that the conformal correction needed to enforce \eqref{eq:cons1} remains sufficiently small at low order so that the large \(H^{2}\)-size of the renormalized trace-free Ricci tensor is not destroyed.

\noindent The construction therefore proceeds heuristically in three steps. First, one builds a seed metric \(\widehat g\) for which the renormalized trace-free Ricci tensor is large in \(H^{2}\), whereas the scalar-curvature defect \(R(\widehat g)+\frac{n-1}{n}\) is very small in \(H^{2}\) and controlled in \(H^{3}\). Second, one constructs a \(\widehat g\)-transverse-traceless tensor \(\widehat\Sigma\) with the quantitative bounds required in Theorem~\ref{main}; in particular, this resolves the momentum constraint at the seed level. Third, one uses the conformal method to solve the Hamiltonian constraint exactly, producing a nearby pair \((g,\Sigma)\) satisfying the full system \eqref{eq:cons1}-\eqref{eq:cons2}. One then shows that the conformal correction is sufficiently small in the relevant norms, so that the large \(H^{2}\)-size of the renormalized trace-free Ricci tensor is retained.

\noindent This is the geometric mechanism underlying the initial-data construction. The following discussion here is only heuristic; the rigorous implementation is carried out in Section~\ref{data}.

\noindent Let $M$ be a closed connected $3$-manifold of negative Yamabe type. By definition, every conformal class on $M$ contains a smooth metric of constant negative scalar curvature. We therefore fix a smooth background metric $\gamma$ satisfying
\[
R(\gamma)=-\frac{n-1}{n}.
\]
Since throughout this discussion $n=3$, one may equivalently write $R(\gamma)=-\frac23$; for notational clarity, however, we retain the general $n$-dimensional form in the formulas below.

\noindent We begin with a smooth background metric $\gamma$ satisfying
\[
R[\gamma]=-\frac{n-1}{n}.
\]
We then consider a perturbation
\[
g_{1}=\gamma+\epsilon h,\qquad 0<\epsilon\ll 1,
\]
where $h$ is chosen to be transverse-traceless with respect to $\gamma$,
\[
\div_{\gamma}h=0,
\qquad
\tr_{\gamma}h=0,
\]
and normalized by
\[
\|h\|_{L^{2}(M,\gamma)}=1.
\]
The essential point is that $h$ is taken to be highly oscillatory, with characteristic frequency $\mu\gg1$. Concretely, one may choose $h$ from a high-frequency family of transverse-traceless tensors associated with any self-adjoint elliptic operator on the transverse-traceless bundle whose principal symbol is that of the rough Laplacian. Standard elliptic theory then yields the scale of estimates
\[
\|h\|_{H^{2}(M,\gamma)}\sim \mu,
\]
with constants depending only on the background geometry. If the background happens to be Einstein, one may take $h$ to be a high-frequency transverse-traceless eigentensor of the Lichnerowicz Laplacian; however, that special identification is not needed for the argument below.

\noindent We write
\[
\mathfrak{T}[g]:=\text{Ric}[g]-\frac1n R[g]\,g
\]
for the trace-free Ricci tensor. The linearization of $\mathfrak{T}$ at $\gamma$, restricted to transverse-traceless directions, is a second-order elliptic operator; we denote it schematically by $\mathcal{A}_{\gamma}$. Its principal part is the Laplace-type term, and therefore it carries the same high-frequency scaling as a second-order elliptic operator. Accordingly, one has the schematic expansion
\[
\mathfrak{T}[g_{1}]
=
\mathfrak{T}[\gamma]
+\epsilon\,\mathcal{A}_{\gamma}h
+\epsilon^{2}\mathcal{Q}_{\gamma}(h,\nabla h,\nabla^{2}h),
\]
where $\mathcal{Q}_{\gamma}$ is quadratic in $h$ and its derivatives. At the heuristic level relevant for the introduction, the ellipticity of $\mathcal{A}_{\gamma}$ implies
\[
\|\mathcal{A}_{\gamma}h\|_{H^{2}(M,\gamma)}\sim \mu^{2},
\]
whereas the quadratic remainder obeys the scale
\[
\|\mathcal{Q}_{\gamma}(h,\nabla h,\nabla^{2}h)\|_{H^{2}(M,\gamma)}\lesssim \mu^{3}.
\]
Thus
\[
\|\mathfrak{T}[g_{1}]\|_{H^{2}}
\gtrsim
\epsilon\mu^{2}
-
\epsilon^{2}\mu^{3}
-
O(1).
\]
This already exhibits the mechanism we exploit: the linear contribution to the trace-free Ricci tensor gains two powers of frequency, while the nonlinear error is smaller provided $\epsilon\mu$ remains sufficiently small.

\noindent The scalar curvature behaves differently. Since $h$ is transverse-traceless, its first variation is only zeroth order:
\[
D R_{\gamma}[h]
=
-\langle \text{Ric}^{\circ}[\gamma],h\rangle_{\gamma},
\]
where $\text{Ric}^{\circ}[\gamma]=\mathfrak{T}[\gamma]$. Consequently, the scalar curvature expansion has the schematic form
\[
R[g_{1}]
=
R[\gamma]
+
\epsilon\,\langle \text{Ric}^{\circ}[\gamma],h\rangle_{\gamma}
+
\epsilon^{2}\mathcal{R}_{\gamma}(h,\nabla h,\nabla^{2}h),
\]
with $\mathcal{R}_{\gamma}$ quadratic. Since the linear term involves no second derivatives of $h$, it is smaller by one power of frequency than the linear term in the trace-free Ricci tensor. Correspondingly,
\[
\Big\|R[g_{1}]+\frac{n-1}{n}\Big\|_{H^{2}}
\lesssim
\epsilon\mu+\epsilon^{2}\mu^{3}.
\]
The two expansions therefore, separate the relevant scales:
\[
\|\mathfrak{T}[g_{1}]\|_{H^{2}}
\sim
\epsilon\mu^{2},
\qquad
\Big\|R[g_{1}]+\frac{n-1}{n}\Big\|_{H^{2}}
\sim
\epsilon\mu+\epsilon^{2}\mu^{3},
\]
up to lower-order terms. In particular, we seek parameters for which
\[
\epsilon\mu^{2}\gg1,
\qquad
\epsilon\mu\ll1,
\qquad
\epsilon^{2}\mu^{3}\ll1.
\]

\noindent For the later quantitative argument, it is convenient to enforce the stronger bounds
\[
\epsilon^{2}\mu^{3}<e^{-a/10},
\qquad
\epsilon\mu^{2}>e^{a/10},
\]
for some large parameter $a\gg1$. Solving these inequalities yields
\[
e^{a/20}\epsilon^{-1/2}<\mu<\epsilon^{-2/3}e^{-a/30}.
\]
Equivalently,
\[
\epsilon^{1/2}e^{a/20}<\epsilon\mu<e^{-a/30}\epsilon^{1/3}.
\]
This interval is nonempty provided
\[
\epsilon^{1/2}e^{a/20}<e^{-a/30}\epsilon^{1/3},
\]
that is,
\[
\epsilon^{1/6}<e^{-a/12},
\qquad\text{equivalently}\qquad
\epsilon<e^{-a/2}.
\]
Thus, once $\epsilon$ is chosen sufficiently small relative to $a$, one may select a transverse-traceless tensor of frequency $\mu$ in the admissible range
\begin{equation}\label{eq:accum_intro}
e^{a/20}\epsilon^{-1/2}<\mu<\epsilon^{-2/3}e^{-a/30}.
\end{equation}
At the heuristic level, this produces a metric $g_{1}$ for which the trace-free Ricci tensor is large in $H^{2}$, while the scalar-curvature defect remains perturbative. This, for the appropriate choice of $\epsilon<e^{-a/2}$, yields the bounds 
\begin{align*}
 ||\mathfrak{T}[g_{1}]||_{H^{2}}=O(e^{a/10}),~||R(g)+\frac{2}{3}||_{H^{2}}=O(e^{-9a/5}).   
\end{align*}
However, the task is not completed since we need to construct the physical initial data $(g_{0},k_{0})$ that verifies the constraint equations. However, note that the momentum constraint is relatively easy to address. On the other hand, the Hamiltonian constraint deals with the scalar curvature, which, in principle, can be modified almost independently of the trace-free Ricci curvature by means of a conformal transformation. We do this in the second step.  

\noindent The second step is to solve the momentum and Hamiltonian constraint through a conformal deformation. Starting from the metric $g=g_{1}$ constructed above, together with a symmetric transverse-traceless tensor $\Sigma$ relative to $g$, we seek a positive conformal factor $\varphi$ and define
\[
g_{0}=\varphi^{\frac{4}{n-2}}g,
\qquad
\Sigma_{0}=\varphi^{-2}\Sigma.
\]
Here $(g_{0},\Sigma_{0})$ would be the physical initial data set that we desire.
The conformal covariance of the momentum constraint ensures that $\Sigma_{0}$ remains transverse-traceless with respect to $g_{0}$:
\[
\div_{g_{0}}\Sigma_{0}=0
\]
and therefore $\Sigma$ is the free data. Therefore, one needs to prove the existence of TT tensor (with respect to $g$) on $(M,g)$ that verifies $||\Sigma||_{H^{2}}=O(e^{-9a/10})$ and $||\Sigma||_{H^{3}}=O(e^{a/10})$. Then, our remaining task is to solve for the Hamiltonian constraint through the conformal transformation given $\Sigma$ and prove that the resulting entities are not deformed substantially i.e., the estimates $||\mathfrak{T}[g_{0}]||_{H^{2}}=O(e^{a/10})$ and $||\Sigma||_{H^{2}}=O(e^{-9a/10}),~||\Sigma||_{H^{3}}=O(e^{a/10})$ survives in $g_{0}$ norm. For this, we need to prove that the conformal factor $\varphi$ is close to $1$ in $H^{4}$ in $(g-)$ norm and hence in $H^{4}$ with respect to $g_{0}-$ norm. 
The scalar curvature transforms according to
\begin{equation}\label{eq:scal_intro}
R[g_{0}]
=
\varphi^{-\frac{n+2}{n-2}}
\left(
-\frac{4(n-1)}{n-2}\Delta_{g}\varphi
+
R[g]\varphi
\right).
\end{equation}
Hence, the Hamiltonian constraint
\[
R[g_{0}]+\frac{n-1}{n}=|\Sigma_{0}|_{g_{0}}^{2}
\]
reduces to the usual Lichnerowicz equation for $\varphi$. For the purposes of the present discussion, the important point is simply that the conformal factor is chosen so as to remove the scalar-curvature defect left over from the first step.

\noindent What must still be verified is that this scalar-curvature correction does not significantly alter the large trace-free Ricci component already present in $g$. The transformation law for the trace-free Ricci tensor is
\[
\mathfrak{T}[g_{0}]
=
\mathfrak{T}[g]
-(n-2)\varphi^{-1}
\left(
\nabla^{2}\varphi-\frac1n\Delta_{g}\varphi\,g
\right)
+\frac{2(n-1)}{n-2}\varphi^{-2}
\left(
d\varphi\otimes d\varphi-\frac1n|\nabla\varphi|_{g}^{2}g
\right).
\]
The detailed estimates established later show that the correction term on the right-hand side is negligible in $H^{2}$ in light of $H^{4}$ norm of $\varphi$ being close to 1:
\[
\left\|
-(n-2)\varphi^{-1}
\left(
\nabla^{2}\varphi-\frac1n\Delta_{g}\varphi\,g
\right)
+\frac{2(n-1)}{n-2}\varphi^{-2}
\left(
d\varphi\otimes d\varphi-\frac1n|\nabla\varphi|_{g}^{2}g
\right)
\right\|_{H^{2}}
=
O(e^{-a}).
\]
Therefore
\[
\|\mathfrak{T}[g_{0}]-\mathfrak{T}[g]\|_{H^{2}}=O(e^{-a}),
\]
so the large $H^{2}$-size of the trace-free Ricci tensor survives the conformal correction. In particular,
\[
\|\mathfrak{T}[g_{0}]\|_{H^{2}}
\sim
\|\mathfrak{T}[g]\|_{H^{2}},
\]
up to an error which is negligible on the scale relevant to the construction. Therefore, we end up with the desired open set of initial data set $(g_{0},\Sigma_{0})$ that verifies 
\begin{align*}
R[g_{0,a}]+\frac23&=|\Sigma_{0,a}|_{g_{0,a}}^{2},\\
\div_{g_{0,a}}\Sigma_{0,a}&=0,\\
\tr_{g_{0,a}}\Sigma_{0,a}&=0,
\end{align*}
and 
\begin{align*}
\|\mathfrak T[g_{0,a}]\|_{H^{2}}&=O(e^{a/10}),~||\Sigma_{0}||_{H^{2}}=O(e^{-9a/10}),~||\Sigma_{0}||_{H^{3}}=O(e^{a/10}).   
\end{align*}

\noindent In summary, the initial data are produced by combining two mechanisms with complementary effects. A high-frequency transverse-traceless perturbation creates a large trace-free Ricci component while keeping the scalar curvature nearly constant, and a subsequent conformal deformation restores the Hamiltonian constraint while modifying the trace-free Ricci tensor only perturbatively. Simultaneously one explicitly constructs the $TT$ tensor $\Sigma$ verifying large $H^{3}$ norm while small $H^{2}$ norm (in appropriate scale). This is the geometric content behind the construction; the rigorous implementation is carried out in Section~\ref{data}.

\begin{center}
\begin{figure}
\begin{center}
\includegraphics[width=15cm,height=48cm,keepaspectratio,keepaspectratio]{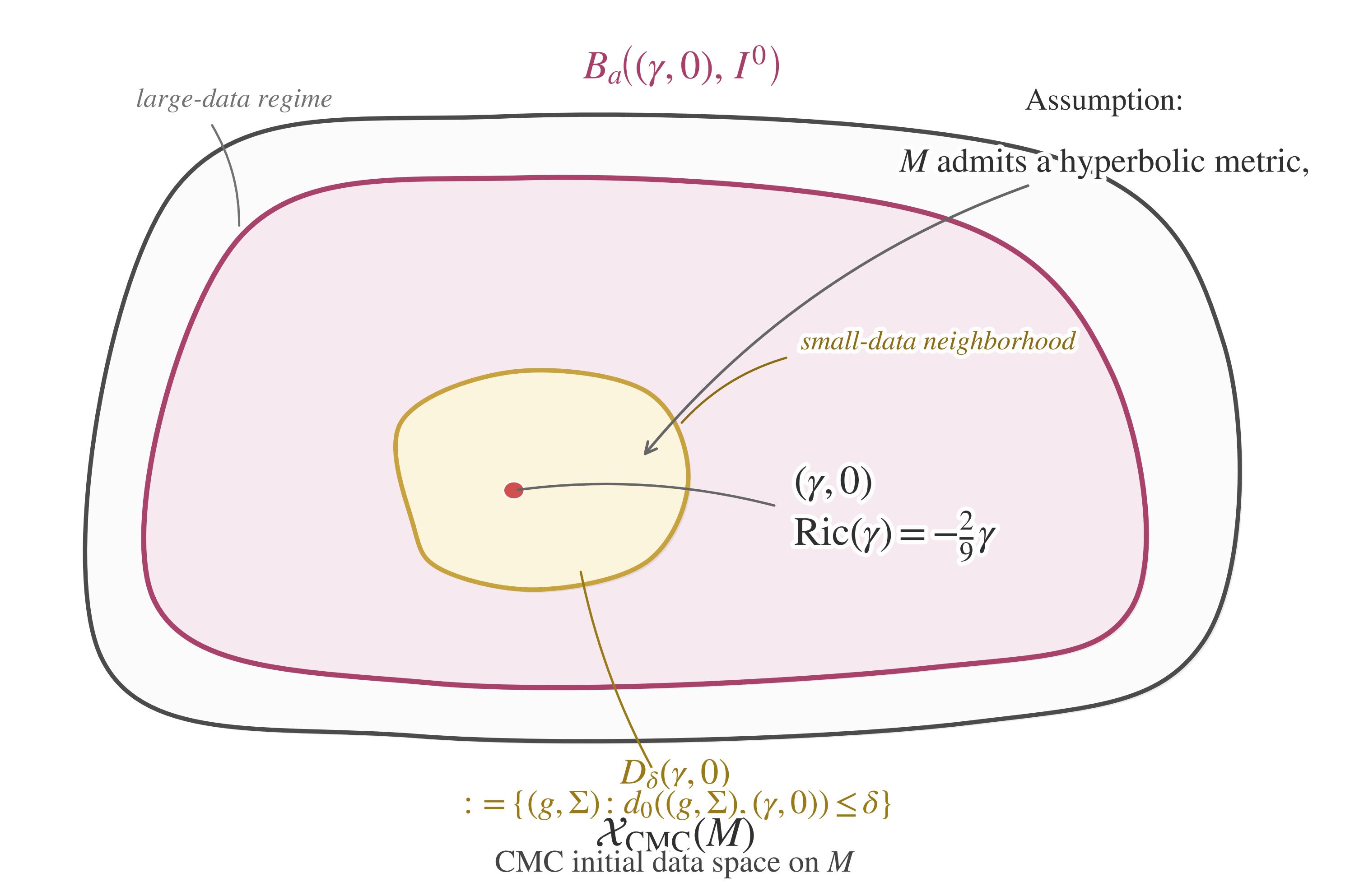}
\end{center}
\begin{center}
\caption{A rough heuristic depicting the characterization of the smooth initial data considered in the article in case the spatial slice $M$ admits a hyperbolic metric. This clearly depicts the difference between the current work and previous work in this framework. The red center is the isolated hyperbolic fixed point (Mostow rigidity). The internal yellow ball $D_{\delta}(\gamma,0)$ depicts the small data regime studied in \cite{fajman2020stable,mondal2019attractors} where $d_{0}$ is the $H^{s}\times H^{s-1}$ Sobolev distance for $s> 3/2+1$. In contrast current study can handle data lying within the ball $B_{a}((\gamma,0),\mathcal{I}^{0})$ where the ball $B_{a}$ is with respect to the distance as defined in (\ref{eq:2}) of the main theorem. This ball can be made arbitrarily large by increasing the parameter $a$.}
\label{fig:blue}
\end{center}
\end{figure}
\end{center}

\noindent
A corollary to the theorem \ref{main} is as follows      

\begin{corollary}
\label{cor:cosmic_censorship}
Conjecture~\ref{conjecture1} holds in the class of $(3+1)$-dimensional solutions to the vacuum Einstein equations with positive cosmological constant $\Lambda > 0$, defined on globally hyperbolic spacetimes $\widetilde{M} \cong M \times \mathbb{R}$ where $M$ is a closed $3$-manifold of negative Yamabe type. The result holds for a class of large initial data prescribed on a Cauchy hypersurface at a Newtonian-like time $T_{0}=a$. In particular, no future naked singularities can form within this setting.
\end{corollary}

\subsection{Comparison with previous work}

The mathematical analysis of Einstein's equations with positive cosmological constant has a substantial history, but the regimes treated in the literature differ markedly from the one considered here.

\noindent At the level of spatially homogeneous cosmologies, Wald's cosmic no-hair theorem shows that initially expanding homogeneous solutions with \(\Lambda>0\) are driven toward de Sitter behavior at late times \cite{wald}.  This result is foundational, but it concerns homogeneous dynamics and does not address the large-data Cauchy problem on a general compact spatial manifold.  In a different direction, Ringström established future nonlinear stability results for expanding solutions in related Einstein--matter models; in particular, the Einstein equations with positive cosmological constant arise there as a distinguished special case of the more general framework \cite{ringstrom2008future}.  These works provide essential background for the subject, but they do not furnish the type of compact, large-data, CMC-based theorem proved here.

\noindent For the vacuum Einstein--\(\Lambda\) equations, Friedrich introduced the conformal field equations and recast the problem as a regular first-order symmetric hyperbolic system for suitably rescaled variables \cite{friedrich1986existence1,friedrich1986existence}.  In that framework one prescribes asymptotic data at conformal infinity \(\mathscr J^\pm\), rather than physical Cauchy data on an interior hypersurface, and the conformal formulation removes the scalar constraint at the hypersurface where the conformal factor vanishes.  Friedrich's theory yields asymptotically simple solutions for data sufficiently close to the de Sitter configuration.  This is a profound and geometrically global approach, but it is conceptually and analytically different from the present one: our theorem is formulated directly on a physical compact Cauchy slice at a large CMC time \(T_{0}=a\), and the initial data satisfy the full nonlinear Einstein constraint equations on that slice.

\noindent More closely related to the present work are the CMC-based perturbative results of Fajman--Kr\"oncke and the attractor analysis of Mondal.  In the work of Fajman--Kr\"oncke, one studies the Einstein--\(\Lambda\) flow near background solutions for which the spatial metric is Einstein with positive or negative Einstein constant; the analysis is perturbative around such fixed points in high Sobolev regularity \cite{fajman2020stable}.  Likewise, in Mondal's work, one proves global existence and convergence for sufficiently small perturbations of a family of Einstein background solutions in CMC spatial harmonic gauge, with a shadow-gauge mechanism in higher dimensions to control the Einstein moduli directions \cite{mondal2019attractors}.  In both cases the initial data are assumed to lie in a genuinely small neighborhood of an Einstein background (or of the associated finite-dimensional Einstein moduli space).

\noindent The present theorem operates in a different regime.  First, no Einstein background is fixed in advance.  The spatial manifold \(M\) is only assumed to be a closed \(3\)-manifold with
\[
\sigma(M)\le 0,
\]
and the theorem does not require that \(M\) admit a negative Einstein metric.  Second, the result is formulated for physical Cauchy data \((g_{0},k_{0})\) posed on a compact slice at time \(T_{0}=a\), and the constraint equations are solved explicitly for the class of data under consideration.  Third, the theorem is genuinely non-perturbative from the viewpoint of the evolution variables: the size parameter \(\mathcal I^{0}\) is arbitrary, and the only smallness requirement is the relative condition
\[
\mathcal I^{0}e^{-a/10}<1.
\]
In particular, the renormalized trace-free Ricci tensor \(\mathfrak T[g_{0}]\) is not assumed to be small in \(H^{2}\), and the unweighted top-order norm of the trace-free second fundamental form \(\Sigma_{0}\) is not assumed to be small in \(H^{3}\).  This places the theorem outside the perturbative framework of \cite{fajman2020stable,mondal2019attractors}.

It is also important to distinguish the present large-data statement from a late-time reformulation of a small-data theorem.  The point is not merely that the initial slice is chosen far out in the expanding regime, but that the argument exploits a genuinely new structural feature of the Einstein--\(\Lambda\) equations in CMC-transported spatial coordinates: the nonlinear terms that are critical in the \(\Lambda=0\) theory acquire an additional time-integrable \(e^{-T}\)-weight.  This makes it possible to control explicitly constructed physical data for which certain natural geometric Sobolev norms are arbitrarily large, provided only that their largeness is dominated by the expansion scale at the initial slice.  The theorem therefore, lies beyond perturbative stability near de Sitter or near a fixed compact Einstein background.

\noindent To the best of our knowledge, this is the first future-global well-posedness and convergence theorem for the \(3+1\)-dimensional vacuum Einstein equations with \(\Lambda>0\) on compact CMC slices, based on an explicitly constructed class of physical Cauchy data, that is not perturbative around a prescribed Einstein metric.  In particular, it yields global future control and smooth convergence for a nontrivial large-data class on compact \(3\)-manifolds with \(\sigma(M)\le 0\).

\subsection{Novelty and the main difficulty }
A superficial analogy might suggest that a small--data global existence theorem on
\([0,\infty)\) should automatically yield a large--data result on \([T_{0},\infty)\), provided
the initial slice is placed sufficiently far in the expanding regime, that is, \(T_{0}\gg1\).
In the present problem, this transference principle is false.  The reason is that the
available perturbative results for the Einstein--\(\Lambda\) flow are formulated in a genuinely
small neighbourhood of an Einstein background, whereas Theorem~\ref{main} applies to
an explicitly constructed family of initial data for which the renormalized trace--free
Ricci tensor and the unweighted top--order norm of the trace--free second fundamental
form may be arbitrarily large.

\noindent To make this precise, recall first the perturbative hypotheses in the work of
Fajman--Kr\"oncke \cite{fajman2020stable}.  In the expanding negative Einstein case, the background metric
\(\gamma\) is assumed to satisfy
\[
\text{Ric}(\gamma)=-\frac{(n-1)}{n^{2}}\gamma,
\]
and the initial data are required to obey a smallness condition of the form
\begin{equation}
\label{eq:FK-small}
\|g_{0}-\gamma\|_{H^{s'}}+\|k_{0}+\sqrt{2}\,\gamma\|_{H^{s'-1}}<\delta,
\end{equation}
for suitable Sobolev exponents \(s'>\frac n2+1\), with \(\delta>0\)
sufficiently small.  Likewise, in the perturbative Einstein--\(\Lambda\) attractor theory of
\cite{mondal2019attractors}, one assumes that the rescaled data lie in a small ball
\begin{equation}
\label{eq:Mondal-small}
(g(T_{0}),K^{TT}(T_{0}))\in B_{\delta}(\gamma_{0},0)\subset H^{s}\times H^{s-1},
\end{equation}
centered at an Einstein background \(\gamma_{0}\) (more precisely, at the associated
integrable deformation space), together with the shadow gauge condition.  In both
cases, the theory is perturbative in the strongest possible sense: the initial metric is
assumed to be close, in a high Sobolev topology, to a negative Einstein metric, and the
transverse--traceless part of the second fundamental form is required to be small in the
corresponding Sobolev norm.

\noindent The hypothesis of Theorem~\ref{main} is of a completely different character.  One assumes
only the uniform metric comparability
\[
C^{-1}\xi_{0}\leq g_{0}\leq C\xi_{0},
\]
together with the bound
\begin{equation}
\label{eq:main-large}
\sum_{I=0}^{3}\|\nabla^{I}\Sigma_{0}\|_{L^{2}(M)}
+\sum_{I=0}^{2}\Bigl(\|\nabla^{I}\mathfrak T[g_{0}]\|_{L^{2}(M)}
+\|e^{a}\nabla^{I}\Sigma_{0}\|_{L^{2}(M)}\Bigr)
\leq \mathcal I^{0},
\end{equation}
where \(\mathcal I^{0}>0\) is arbitrary and the only smallness requirement is the relative
condition
\[
\mathcal I^{0}e^{-a/10}<1.
\]
In particular, neither \(\|\mathfrak T[g_{0}]\|_{H^{2}}\) nor the unweighted quantity
\(\|\Sigma_{0}\|_{H^{3}}\) is required to be small.  Thus the present theorem allows a regime
in which
\[
\|\mathfrak T[g_{0}]\|_{H^{2}}\sim \mathcal I^{0},
\qquad
\|\Sigma_{0}\|_{H^{3}}\sim \mathcal I^{0},
\]
with \(\mathcal I^{0}\) as large as one pleases, provided that the initial CMC time \(a\) is
chosen sufficiently large in terms of \(\mathcal I^{0}\).

\noindent This difference is not merely linguistic; it is quantitative.  Indeed, if \(\gamma\) is an
Einstein metric, then \(\mathfrak T[\gamma]=0\), and by continuity of the nonlinear map
\(g\mapsto \mathfrak T[g]\) from high Sobolev metrics to curvature in a two--derivative lower
Sobolev space, any perturbative hypothesis of the form \eqref{eq:FK-small} forces
\(\|\mathfrak T[g_{0}]\|_{H^{s'-2}}\lesssim \delta\).  Hence an explicitly constructed family with
\(\|\mathfrak T[g_{0}]\|_{H^{2}}\) arbitrarily large eventually lies outside every perturbative
neighbourhood covered by the small--data theories.  The same is true for the momentum
variable: the smallness assumptions in \eqref{eq:FK-small} or \eqref{eq:Mondal-small}
are incompatible with an unweighted \(H^{3}\)-norm of \(\Sigma_{0}\) of size \(\mathcal I^{0}\gg1\).

\noindent A second distinction is geometric.  The perturbative works start from a negative Einstein
background and therefore presuppose the existence of such a metric on the spatial
manifold.  Theorem~\ref{main}, by contrast, is stated on an arbitrary closed
\(3\)-manifold of negative Yamabe type, under the geometric bounds
\eqref{eq:1}--\eqref{eq:2}, without assuming the existence of any Einstein metric and
without placing the data in a small neighbourhood of any center manifold or moduli
space.  The result therefore applies to explicitly constructed large data on compact
topologies that need not arise as small perturbations of a hyperbolic Einstein geometry.

\noindent It is also important to distinguish the present large--data regime from mere smallness of
the scalar curvature defect.  In dimension \(3\), the Hamiltonian constraint gives
\[
R(g_{0})+\frac{2}{3}=|\Sigma_{0}|^{2}.
\]
Since the \(H^{2}\)-norm of \(\Sigma_{0}\) is weighted by \(e^{a}\), one obtains
\[
\|R(g_{0})+\tfrac23\|_{H^{2}}
\lesssim
\|\Sigma_{0}\|_{H^{2}}^{2}
\lesssim
e^{-2a}(\mathcal I^{0})^{2}.
\]
\noindent Thus the scalar curvature defect may be very small even when
\(\|\mathfrak T[g_{0}]\|_{H^{2}}\) and \(\|\Sigma_{0}\|_{H^{3}}\) are very large.  This does not
place \(g_{0}\) in a perturbative neighborhood of an Einstein metric, because the scalar
curvature controls only the trace part of the Ricci tensor and does not control the full
trace-free Ricci tensor in any comparable topology.  The data considered here are
therefore genuinely large from the point of view of the full Einstein evolution.

\noindent In summary, the present theorem is not a rephrasing of a known small-data stability
result at late time.  It gives global future control for an explicitly constructed family of
initial data satisfying only the relative smallness condition
\(\mathcal I^{0}e^{-a/10}<1\), while permitting arbitrarily large
\(\|\mathfrak T[g_{0}]\|_{H^{2}}\) and arbitrarily large unweighted
\(\|\Sigma_{0}\|_{H^{3}}\).  To the best of our knowledge, this is the first rigorous global
large--data convergence theorem for the vacuum Einstein equations with
\(\Lambda>0\) in CMC gauge on compact spatial manifolds of negative Yamabe type that is
not perturbative around a fixed Einstein background.

\noindent We next record several additional features of the formulation and of the proof that are essential in the large--data regime.

\medskip

\noindent A first point concerns the choice of gauge.  Rather than working with the coupled hyperbolic--elliptic Einstein system in constant mean curvature and spatial harmonic gauge, we work in CMC--transported spatial coordinates, so that the shift vector field vanishes identically, and we formulate the evolution in terms of the pair
\[
(\Sigma,\mathfrak T),\qquad
\mathfrak T_{ij}:=\text{Ric}_{ij}[g]-\frac13 R[g]\,g_{ij},
\]
coupled only to the elliptic lapse equation \eqref{eq:lapse3}.  This point of view is closer in spirit to the classical work of Choquet--Bruhat and York, where one seeks to evolve curvature-type variables together with the second fundamental form.  In the present gauge, the metric is recovered from the transport equation \eqref{eq:metric_transport} along the future-directed normal flow, whereas the higher-order analysis is carried by the evolution system satisfied by \(\Sigma\) and \(\mathfrak T\).

\medskip

\noindent We now make precise why the spatial harmonic gauge is well suited to perturbative
theory, but not to the genuinely large--data setting considered here. Let \(\xi\) be a fixed smooth background metric on \(M\), and let
\[
V^{i}:=g^{jk}\bigl(\Gamma[g]^{i}_{jk}-\Gamma[\xi]^{i}_{jk}\bigr)
\]
denote the spatial harmonic gauge vector. Preserving the condition \(V=0\) under the
Einstein evolution leads to an elliptic equation for the shift vector field \(X\) of the form
\begin{equation}
\label{eq:shift_operator_intro}
\mathcal P_{g,\xi}X=\mathcal F[g,\Sigma,N],
\end{equation}
where, schematically,
\begin{equation}
\label{eq:P_g_xi_intro}
(\mathcal P_{g,\xi}X)^{i}
=
-\Delta_{g}X^{i}
-\text{Ric}[g]^{i}{}_{j}X^{j}
+
2\nabla^{j}X^{k}\bigl(\Gamma[g]^{i}_{jk}-\Gamma[\xi]^{i}_{jk}\bigr),
\end{equation}
and \(\mathcal F[g,\Sigma,N]\) depends on the lapse, the second fundamental form, and the
metric.  This is the operator that must be inverted in order to solve for the shift.

\noindent The perturbative theory rests on the fact that when \(g\) remains close to a fixed negative
Einstein metric \(\xi\), the operator \(\mathcal P_{g,\xi}\) is a small perturbation of the
background operator
\begin{equation}
\label{eq:P_background_intro}
\mathcal P_{\xi,\xi}X
=
-\Delta_{\xi}X-\text{Ric}[\xi](X).
\end{equation}
If \(\xi\) is Einstein with
\[
\text{Ric}[\xi]=-\frac{(n-1)}{n^{2}}\xi,
\]
then
\begin{equation}
\label{eq:P_background_hyperbolic_intro}
\mathcal P_{\xi,\xi}X
=
-\Delta_{\xi}X+\frac{(n-1)}{n^{2}}X.
\end{equation}
In particular, for every smooth vector field \(X\),
\begin{align}
\label{eq:background_coercive_intro}
\int_{M}\langle \mathcal P_{\xi,\xi}X,X\rangle_{\xi}\,d\mu_{\xi}
&=
\int_{M}\Bigl(|\nabla^{\xi}X|_{\xi}^{2}+\frac{(n-1)}{n^{2}}|X|_{\xi}^{2}\Bigr)\,d\mu_{\xi} \\
&\ge c_{0}\|X\|_{H^{1}(\xi)}^{2}, \nonumber
\end{align}
for some constant \(c_{0}>0\) depending only on \((M,\xi)\).  Thus
\(\mathcal P_{\xi,\xi}\) is injective, self-adjoint on \(L^{2}(TM,\xi)\), and, by standard
elliptic Fredholm theory, an isomorphism
\[
\mathcal P_{\xi,\xi}\colon H^{r+1}(TM,\xi)\longrightarrow H^{r-1}(TM,\xi)
\]
for every \(r\ge1\).  Moreover,
\begin{equation}
\label{eq:background_elliptic_estimate_intro}
\|X\|_{H^{r+1}(\xi)}
\le C_{r,\xi}\|\mathcal P_{\xi,\xi}X\|_{H^{r-1}(\xi)}.
\end{equation}

\noindent The small-data argument is then obtained by perturbing away from
\(\mathcal P_{\xi,\xi}\).  Fix \(s>\frac n2+1\).  If \(g\) satisfies
\begin{equation}
\label{eq:g_close_xi_intro}
\|g-\xi\|_{H^{s}(\xi)}\le \delta,
\end{equation}
with \(\delta\) sufficiently small, then Sobolev multiplication and the smooth dependence of
\(\Gamma[g]\), \(g^{-1}\), and \(\text{Ric}[g]\) on \(g\) imply the operator bound
\begin{equation}
\label{eq:operator_perturbation_intro}
\|(\mathcal P_{g,\xi}-\mathcal P_{\xi,\xi})X\|_{H^{r-1}(\xi)}
\le
C_{r,\xi}\,\|g-\xi\|_{H^{s}(\xi)}\,\|X\|_{H^{r+1}(\xi)},
\qquad 1\le r\le s.
\end{equation}
In particular, taking \(r=1\) and pairing against \(X\), one obtains
\begin{equation}
\label{eq:quadratic_perturbation_intro}
\biggl|
\int_{M}
\bigl\langle
(\mathcal P_{g,\xi}-\mathcal P_{\xi,\xi})X,X
\bigr\rangle_{\xi}\,d\mu_{\xi}
\biggr|
\le
C_{\xi}\,\|g-\xi\|_{H^{s}(\xi)}\,\|X\|_{H^{1}(\xi)}^{2}.
\end{equation}
Combining \eqref{eq:background_coercive_intro} and
\eqref{eq:quadratic_perturbation_intro}, we find that if
\[
C_{\xi}\delta\le \frac12 c_{0},
\]
then
\begin{equation}
\label{eq:P_g_xi_coercive_intro}
\int_{M}\langle \mathcal P_{g,\xi}X,X\rangle_{\xi}\,d\mu_{\xi}
\ge \frac12 c_{0}\|X\|_{H^{1}(\xi)}^{2}.
\end{equation}
Hence \(\mathcal P_{g,\xi}\) is injective. Since it is elliptic of index zero and depends
continuously on \(g\), it follows that
\[
\mathcal P_{g,\xi}\colon H^{r+1}(TM,\xi)\to H^{r-1}(TM,\xi)
\]
is an isomorphism for all \(1\le r\le s\), with quantitative estimate
\begin{equation}
\label{eq:shift_estimate_intro}
\|X\|_{H^{r+1}(\xi)}
\le
C_{r,\xi}\|\mathcal P_{g,\xi}X\|_{H^{r-1}(\xi)}.
\end{equation}
This is the analytic mechanism used in perturbative treatments such as
Fajman-Kr\"oncke \cite{fajman2020stable} and Mondal \cite{mondal2019attractors}: the shift equation is solvable because the evolving metric
remains inside a sufficiently small \(H^{s}\)-neighbourhood of a fixed negative Einstein
background.

\noindent The large-data setting of Theorem~\ref{main} lies outside this framework.  First, the
theorem does not assume that \(M\) carries any fixed negative Einstein metric at all; it
assumes only that \(M\) is of negative Yamabe type.  Second, even when such a metric
exists, the initial data constructed in the theorem are not required to satisfy any
smallness condition of the form \eqref{eq:g_close_xi_intro}.  Thus one has no a priori
estimate forcing
\[
\|g(T)-\xi\|_{H^{s}(\xi)}\ll1
\]
on the bootstrap interval, and therefore no quantitative perturbative reduction of
\(\mathcal P_{g,\xi}\) to the coercive background operator \(\mathcal P_{\xi,\xi}\).  In
particular, the first--order coefficient
\[
2\nabla^{j}X^{k}\bigl(\Gamma[g]^{i}_{jk}-\Gamma[\xi]^{i}_{jk}\bigr)
\]
and the zeroth--order Ricci term \(-\text{Ric}[g](X)\) can no longer be treated as controlled
perturbative errors relative to the background geometry.  For this reason, the spatial
harmonic gauge does not furnish a robust large--data gauge-fixing mechanism for the
problem studied here.

\noindent This is precisely why we work instead in CMC--transported spatial coordinates.  In that
gauge the shift vanishes identically, and the analysis is reduced to a hyperbolic evolution
system for \((\Sigma,\mathfrak T)\) coupled to a single elliptic equation for the lapse,
whose coefficients are handled directly in terms of the evolving geometry.
\medskip

\noindent A second point is the hierarchy of energy norms used to control the second fundamental form.  The top-order energy for \(\Sigma\) is uniformly bounded but, in general, does not decay.  By contrast, suitably renormalized lower-order norms of \(\Sigma\) do decay exponentially.  The proof exploits this discrepancy systematically: lower-order decay is first established and then fed back into the top-order energy estimates to control the nonlinear error terms that are borderline at the highest derivative level.  This hierarchy is one of the basic mechanisms by which the large-data bootstrap is closed.

\noindent Closely related to this is the structure of the nonlinear terms.  At the op order, the potentially dangerous interactions are accompanied by an additional factor \(e^{-T}\), and are therefore time-integrable.  This feature is not a classical null structure; rather, it is an integrable damping mechanism generated by the cosmological expansion.  The same phenomenon governs the lapse: although the normalized lapse defect \(\frac{N}{n}-1\) need not itself exhibit top-order decay, the terms in which its highest derivatives occur are always accompanied by an \(e^{-T}\)-weight, whereas the undamped appearances of \(\frac{N}{n}-1\) arise only at lower differential order.  This separation between top-order weighted terms and lower-order unweighted terms is crucial throughout the bootstrap argument; see Section~\ref{mainidea}.

\medskip

\noindent A further issue is that all Sobolev and elliptic norms are defined relative to the evolving metric \(g\).  Consequently, one must propagate quantitative control of the geometry of the slices simultaneously with the energy estimates.  The metric is governed by the transport equation
\begin{equation}
\label{eq:metric_transport}
\partial_T g_{ij}
=
-\frac{2\varphi}{\tau}N\Sigma_{ij}
-
2\Bigl(1-\frac{N}{n}\Bigr)g_{ij}.
\end{equation}
Integrating this identity along the CMC foliation yields control of the metric coefficients and uniform equivalence of the metrics \(g(T)\), provided \(\Sigma\) and \(\frac{N}{n}-1\) are already controlled in the relevant Sobolev norms.  What it does \emph{not} yield is a gain of one additional spatial derivative on \(g\).  For that reason, one cannot simply freeze the elliptic theory relative to a fixed background metric and appeal to standard coefficient-based regularity estimates at the top order.

\noindent The lapse equation illustrates this point.  In our normalization it takes the form
\begin{equation}
\label{eq:lapse_elliptic}
-\Delta_{g}\Bigl(\frac{N}{n}-1\Bigr)
+
\Bigl(|\Sigma|^{2}+\frac13\Bigr)\Bigl(\frac{N}{n}-1\Bigr)
=
-|\Sigma|^{2}.
\end{equation}
If one were to formulate all Sobolev norms relative to a fixed metric \(\xi_{0}\), then the usual elliptic theory would require coefficient control at a derivative level that is not directly available from the transport equation for \(g\).  The resolution is to commute \eqref{eq:lapse_elliptic} with covariant derivatives \(\nabla=\nabla[g]\) of the evolving metric itself.  In that formulation, the commutators involve curvature terms rather than uncontrolled higher derivatives of the metric coefficients.

\noindent This is precisely where the choice of unknowns becomes decisive.  Since the spatial dimension is three, the full Riemann tensor is algebraically determined by the Ricci tensor.  Writing
\[
\text{Ric}[g]
=
\mathfrak T+\frac13 R[g]\,g
\]
and using the Hamiltonian constraint
\begin{equation}
\label{eq:ham_constraint}
R[g]+\frac{2}{3}=|\Sigma|^{2},
\end{equation}
one expresses every curvature term appearing in the commuted lapse equation in terms of \(\mathfrak T\), \(\Sigma\), and lower-order contractions thereof.  Consequently, the curvature contributions in the elliptic commutators are controlled by the same Sobolev quantities that already enter the hyperbolic part of the argument.  This avoids derivative loss and yields the required estimates for the normalized lapse in the same regularity scale as the evolution variables.

\medskip

\noindent Another notable feature of the theorem is the generality of the allowed spatial topology.  The argument applies on every smooth, closed, connected, oriented \(3\)-manifold \(M\) with non-positive Yamabe invariant \(\sigma(M)\le 0\).  Combined with the explicit construction of the initial data class described earlier in the introduction, this places the result well beyond perturbative stability theory near a fixed hyperbolic background.  In particular, the theorem is not tied to a distinguished Einstein metric, nor to a center-manifold description of the long-time dynamics.

\medskip

\noindent Finally, the analytic framework developed here appears robust enough to extend beyond the vacuum problem.  The combination of expansion-driven integrable damping, lower-order decay, and elliptic estimates formulated relative to the evolving metric should be adaptable to Einstein-\(\Lambda\) systems with matter, including, for instance, Einstein-\(\Lambda\)-Maxwell and Einstein-\(\Lambda\)-Euler models.  Whether one can obtain comparably sharp large-data future-global results in those settings remains an interesting open problem.

\subsection{Main idea of the proof}
\label{mainidea}

We now explain the mechanism underlying the proof and isolate the precise role of the positive cosmological constant $\Lambda$.  The discussion in this subsection is intentionally schematic: all commuted equations, elliptic estimates, coercive energies, and bootstrap improvements are stated and proved later in the paper.  The point here is to identify the structural feature that makes large--data forward control possible in the Einstein--\(\Lambda\) setting and, at the same time, to clarify why the corresponding argument is unavailable in the vacuum case \(\Lambda=0\).

\medskip

\noindent Let \((\widehat M,\widehat g)\) be a globally hyperbolic spacetime foliated by compact constant--mean--curvature slices \(M_T\) of negative Yamabe type.  We illustrate this in the Andersson--Moncrief CMC--spatial harmonic gauge.  Denote by \(g\) the induced metric on \(M_T\), by \(N\) the lapse, by \(X\) the shift, and by \(\Sigma\) the trace--free second fundamental form.  In the vacuum case \(\Lambda=0\), the Einstein equations take the form
\begin{align}
\partial_T g_{ij}
 &= -2N\Sigma_{ij}
    -2\Bigl(1-\frac{N}{n}\Bigr)g_{ij}
    -(\mathscr L_X g)_{ij}, \label{eq:intro_metric_vac}\\
\partial_T \Sigma_{ij}
 &= -(n-1)\Sigma_{ij}
    -N\Bigl(\text{Ric}_{ij}-\frac{1}{3}R[g]g_{ij}\Bigr)
    +\nabla_i\nabla_j\Bigl(\frac{N}{n}-1\Bigr) \nonumber\\
 &\qquad
    +2N\Sigma_{ik}\Sigma^{k}{}_{j}
    -\frac1n\Bigl(\frac{N}{n}-1\Bigr)g_{ij}
    -(n-2)\Bigl(\frac{N}{n}-1\Bigr)\Sigma_{ij}
    -(\mathscr L_X\Sigma)_{ij}, \label{eq:intro_sigma_vac}
\end{align}
supplemented by the constraint equations
\begin{align}
R+\frac{n-1}{n}-|\Sigma|^2 &=0, \label{eq:intro_ham_vac}\\
\nabla_j\Sigma^j{}_i &=0. \label{eq:intro_mom_vac}
\end{align}
In the spatial harmonic gauge (together with the rescaled Hamiltonian constraint), the linearization of the map
\[
g\longmapsto \text{Ric}[g]-\frac{1}{3}R[g]g
\]
is elliptic, so that the gauge variables are determined by elliptic equations on each slice, whereas the pair \((g,\Sigma)\) evolves by a quasilinear hyperbolic system coupled to these elliptic constraints.

\medskip

\noindent The obstruction to large-data global control in \eqref{eq:intro_metric_vac}-\eqref{eq:intro_mom_vac} is already visible at the level of the \(\Sigma\)-equation.  The term \(-(n-1)\Sigma\) is linearly damping, but the remaining terms contain two genuinely nonperturbative mechanisms:
\begin{enumerate}
\item the curvature defect
\[
\mathfrak T_{ij}:=\text{Ric}_{ij}-\frac{1}{3}R[g]g_{ij},
\]
which measures the deviation from the constant negative sectional curvature geometry (homogeneous and isotropic) singled out by the gauge; and
\item the Riccati term \(2N\Sigma_i{}^k\Sigma_{kj}\), which is quadratic and of the same differential order as the damping term.
\end{enumerate}
Moreover, the lapse defect \(\frac{N}{n}-1\) is itself determined by an elliptic equation whose source is quadratic in the evolving geometric fields.  Thus, even at the level of the formal energy identities, the linearly decaying contribution \(-(n-1)\Sigma\) is coupled to nonlinear terms that are not accompanied by any time--integrable coefficient.  For this reason, every presently available forward stability argument in the \(\Lambda=0\) theory is fundamentally perturbative: one closes the estimates only after imposing smallness on the curvature defect and on the trace--free second fundamental form in scale--invariant Sobolev norms.  In particular, prescribing the data at a very large initial time \(T_0=a\gg1\) does not by itself create a new mechanism, since a bound of the form
\[
e^{\alpha T_0}\|(\Sigma,\mathfrak T)(T_0)\|\leq \mathcal I,
\qquad \alpha>0,
\]
is merely a reformulation of smallness of the unweighted fields at the initial slice.

\medskip

\noindent The situation changes decisively when \(\Lambda>0\).  After passing to the rescaled CMC variables used throughout the paper, every term in the evolution equations that is potentially dangerous from the point of view of long--time growth appears with an additional coefficient \(e^{-T}\).  More precisely, the commuted equations may be written schematically as
\begin{align}
\partial_T \Sigma
  &= -(n-1)\Sigma
     + e^{-T}\,\mathcal Q_1\bigl(g,N,\Sigma,\mathfrak T,\nabla N,\nabla^2N\bigr), \label{eq:intro_sigma_lambda_schematic}\\
\partial_T \mathfrak T
  &= e^{-T}\,\mathcal Q_2\bigl(g,N,\Sigma,\mathfrak T,\nabla\Sigma,\nabla^2\Sigma,\nabla N,\nabla^2N\bigr)
     + \Bigl(\frac{N}{n}-1\Bigr)g, \label{eq:intro_T_lambda_schematic}
\end{align}
where \(\mathcal Q_1\) and \(\mathcal Q_2\) denote universal linear combinations of tensorial contractions of the indicated quantities and their covariant derivatives.  In particular, the two terms that are most problematic in the vacuum case,
\[
\mathfrak T
\qquad\text{and}\qquad
\Sigma*\Sigma,
\]
now enter the evolution only through \(e^{-T}\mathfrak T\) and \(e^{-T}\Sigma*\Sigma\).  This is the basic structural fact on which the whole argument rests.

\medskip

\noindent We emphasize that this mechanism is not a null structure in the classical sense of Klainerman, since the gain does not arise from an algebraic cancellation in the quadratic form relative to the characteristic cone of the principal hyperbolic operator.  Rather, the positive cosmological constant produces a \emph{weak-null-type integrable damping mechanism}: the nonlinear interactions that would otherwise be borderline are multiplied by a universal coefficient \(e^{-T}\in L^1([T_0,\infty))\).  This integrability is exactly what makes large--data forward control possible once the initial CMC slice is placed sufficiently far in the expanding regime.

\medskip

\noindent To exploit this structure, we introduce three scale--invariant quantities.  The first is the high--order geometric energy
\[
\mathcal O(T)
 := \sum_{I\le 3}
 \Bigl(
   \|\nabla^I\Sigma\|_{L^2(M_T)}
   + \|\nabla^{I-1}\mathfrak T\|_{L^2(M_T)}
 \Bigr),
\]
which controls the differentiated curvature and deformation variables.  The second is the lower--order renormalized quantity
\[
\mathcal F(T)
 := \sum_{I\le 2}\|e^T\nabla^I\Sigma\|_{L^2(M_T)},
\]
equivalently
\[
E^{\mathrm{lower}}(T)
 := e^{-2T}\mathcal F(T)^2
 = \sum_{I\le 2}\|\nabla^I\Sigma\|_{L^2(M_T)}^2.
\]
The third is the pointwise control norm
\[
\mathcal N^\infty(T)
 := \|e^T\Sigma\|_{L^\infty(M_T)}
   + \Bigl\|e^{2T}\Bigl(\frac{N}{n}-1\Bigr)\Bigr\|_{L^\infty(M_T)}.
\]
The proof is organized around a bootstrap scheme for the triple
\[
\bigl(\mathcal O,\mathcal F,\mathcal N^\infty\bigr).
\]

\medskip

\noindent More precisely, on a bootstrap interval \([T_0,T^\ast)\) we assume
\[
\mathcal O(T)\le \Gamma,
\qquad
\mathcal F(T)\le \mathds Y,
\qquad
\mathcal N^\infty(T)\le \mathds L,
\qquad
T\in[T_0,T^\ast),
\]
with bootstrap constants chosen so that
\[
(\mathcal I^0)^2+\mathcal I^0+1<\min\{\Gamma,\mathds Y,\mathds L\},
\qquad
\Gamma+\mathds Y+\mathds L\le e^{T_0/5},
\]
where \(\mathcal I^0\) denotes the size of the initial data in the norms relevant to the main theorem.  The purpose of the argument is to show that, once \(T_0=a\) is taken sufficiently large such that \(\mathcal I^0e^{-a/10}<1\), one in fact has the stronger bounds
\[
\mathcal O(T)+\mathcal F(T)+\mathcal N^\infty(T)
 \lesssim \mathcal I^0+1
\qquad\text{for all }T\in[T_0,T^\ast),
\]
thereby improving the bootstrap assumptions and extending the solution globally to the future.

\medskip

\noindent The closure mechanism is hierarchical:
\[
\mathcal N^\infty
\;\Longrightarrow\;
\mathcal F
\;\Longrightarrow\;
\mathcal O.
\]
The first implication is obtained by combining the pointwise control of \(\Sigma\) and \(N\) with the lower--order evolution equation for \(\Sigma\).  The decisive term in the corresponding energy identity is the mixed contribution
\[
\sum_{I\le2} e^{-T}\int_{M_T} N\,\nabla^I\mathfrak T\,\nabla^I\Sigma\,\mu_g,
\]
which satisfies
\[
\biggl|
\sum_{I\le2} e^{-T}\int_{M_T} N\,\nabla^I\mathfrak T\,\nabla^I\Sigma\,\mu_g
\biggr|
\lesssim e^{-T}\mathcal O(T)^2.
\]
Accordingly, one obtains a differential inequality of the form
\[
\frac{d}{dT}E^{\mathrm{lower}}(T)
 \le -2(n-1)E^{\mathrm{lower}}(T)
      + C e^{-T}\mathcal O(T)^2.
\]
Since \(e^{-T}\) is integrable, this forcing term is genuinely lower order in time.  After integrating the inequality and using the bootstrap bound for \(\mathcal O\), one obtains first a uniform estimate for \(E^{\mathrm{lower}}\), and then, by a further iteration exploiting the decay already gained, the sharper bound
\[
\mathcal F(T)\lesssim 1+\mathcal I^0+\mathcal O(T)^2.
\]
The key point is that the curvature defect \(\mathfrak T\) does \emph{not} need to be small at the initial time; it only needs to be finite.  The factor \(e^{-T}\) is what converts large but finite geometric input into an integrable error.

\medskip

\noindent The second implication, from \(\mathcal F\) to \(\mathcal O\), enters at top order.  After commuting the equations and using the elliptic estimates for the lapse, one encounters cubic and quartic expressions in which one differentiated factor is paired with two lower order factors.  A representative contribution has the form
\[
\sum_{I\le2}\sum_{m=0}^I\sum_{J_1+J_2+J_3=m}
e^{-T}
\int_{M_T}
\nabla^{J_1}N\,
\nabla^{J_2+1}\Sigma\,
\nabla^{J_3+I-m}\Sigma\,
\nabla^{I+1}\Sigma\,
\mu_g.
\]
Using Sobolev, elliptic, and interpolation estimates, this is bounded by
\[
\lesssim e^{-2T}\mathcal F(T)\mathcal O(T)^2.
\]
Once the lower order estimate for \(\mathcal F\) is available, this becomes
\[
\lesssim e^{-2T}\mathcal O(T)^4
\lesssim e^{-2T}\Gamma^4.
\]
The coefficient \(e^{-2T}\) is now time--integrable with room to spare, and therefore
\[
\int_{T_0}^{T} e^{-2s}\Gamma^4\,ds
\lesssim e^{-2T_0}\Gamma^4.
\]
This is the point at which the ``relative smallness'' condition in the statement of the main theorem enters: for arbitrarily large but finite initial size \(\mathcal I^0\), one chooses \(T_0=a\) so large that the integrated top--order errors are perturbative.  In other words, the argument does not require absolute smallness of the data; it requires only that the largeness of the data be dominated by the expansion scale present at the initial slice.

\medskip

\noindent Finally, the metric and the gauge must be propagated without loss of geometric control.  The metric equation has the schematic form
\[
\partial_T g
 = e^{-T}N\Sigma
   -2\Bigl(1-\frac{N}{n}\Bigr)g,
\]
so once \(\mathcal N^\infty\) is controlled, the deformation of the metric is integrable in time.  This yields uniform equivalence of the evolving metrics, control of the volume form and isoperimetric constants, and hence uniform Sobolev inequalities on all future slices.  These geometric bounds are then fed back into the elliptic estimates for the lapse and the Sobolev estimates used in the energy argument, closing the system.

\medskip

\noindent We therefore obtain a self--consistent bootstrap mechanism in which lower order decay produces top order control, top order control feeds the elliptic theory, and the resulting gauge bounds preserve the geometric background needed for the Sobolev and energy estimates.  The entire argument hinges on the fact that, in the rescaled Einstein--\(\Lambda\) system, the nonlinear terms that are critical in the vacuum theory are accompanied by the integrable coefficient \(e^{-T}\).  This is the decisive large--data stabilizing effect of the positive cosmological constant and the basic reason the theorem proved here has no analogue in the known \(\Lambda=0\) CMC theory.

\begin{remark}
The analytical framework developed herein extends to the case in which the underlying manifold $M$ admits a positive Yamabe invariant, i.e., $\sigma(M)>0$, which includes in particular the setting with positive cosmological constant $\Lambda>0$. Under an additional spectral condition on the Laplace--Beltrami operator associated to the evolving metric, ensuring the uniqueness of the solution to the elliptic lapse equation, we establish convergence of the metric to one with constant positive scalar curvature.

\noindent This spectral condition, while technical, is natural in view of its role in guaranteeing elliptic solvability and uniqueness. In the absence of such a condition, uniqueness of the limiting metric generally fails. This phenomenon is closely related to the non-uniqueness observed in the convergence behavior of the Yamabe flow on manifolds with positive Yamabe type, as elucidated in the work of Brendle \cite{brendle}, where the flow is shown to converge to a constant scalar curvature metric, but the limit depends nontrivially on the initial data due to the lack of uniqueness in the conformal class.

\noindent In our context, the underlying cause of non-uniqueness is distinct and originates from the failure of uniqueness of solutions to the lapse equation, rather than the conformal degeneracy of the space of constant scalar curvature metrics. We emphasize that, even in the presence of convergence, the limiting metric is generally \emph{not} conformal to the initial metric. For completeness, and to illustrate the applicability of our method in this setting, we state the corresponding convergence result below; the proof proceeds with only minor modifications to the arguments presented in the negative Yamabe case. In addition, the construction of the data follows in an exact similar fashion.
\end{remark}
\noindent First, note that in case of $\sigma(M)>0$, the renormalized lapse equation takes the following form 
\begin{align}
\label{eq:positive}
-\Delta_{g}(1-\frac{N}{n})+(|\Sigma|^{2}-\frac{1}{3})(1-\frac{N}{n})=-|\Sigma|^{2},~\Delta_{g}:=g^{ij}\nabla_{i}\nabla_{j}.     
\end{align}
\noindent A subtle technical issue arises from the analysis of the elliptic equation \eqref{eq:positive} for the renormalized lapse $1 - \frac{N}{n}$. Namely, the associated operator
\[
-\Delta_{g} + n\left(|\Sigma|^{2} - \frac{1}{3}\right)
\]
may admit a nontrivial finite-dimensional kernel. To proceed, we impose the technical assumption that a solution to \eqref{eq:positive} exists. Note that such a solution need not be unique: even in the case of de Sitter spacetime, non-uniqueness of the lapse function is known for certain foliations.

\noindent Importantly, since the renormalized lapse function $1 - \frac{N}{n}$ appears as a source in the evolution equations for $(\Sigma, \mathfrak{T})$, this ambiguity propagates into the solution of the full reduced Einstein system. As a result, the limit geometry determined by the evolution is non-unique, reflecting the intrinsic degeneracy introduced by the kernel of the elliptic operator.

\begin{theorem}[Global well-posedness for $\Lambda > 0$ and $\sigma(M)>0$]
\label{main2}
Let $(\widehat{M}, \widehat{g})$ be a globally hyperbolic $(3+1)$-dimensional Lorentzian manifold, and let $(M,g)$ be a Cauchy hypersurface such that $M$ is a closed $3$-manifold of positive Yamabe type, i.e., 
\[
\sigma(M) > 0.
\] 
Fix $\mathcal{I}^0 > 0$. Then there exists a constant 
\[
a = a(\mathcal{I}^0) > 0
\]
sufficiently large such that 
\[
\mathcal{I}^0 e^{-\frac{a}{10}} < 1.
\]

\noindent Suppose that the initial data set 
\[
(g_0, \Sigma_0) \in C^\infty\big(\mathrm{Sym}^2(T^*M)\big) \times C^\infty\big(\mathrm{Sym}^2_0(T^*M)\big)
\]
for the reduced Einstein-$\Lambda$ system in constant mean curvature (CMC) transported spatial gauge satisfies the constraint equations and further verifies the bounds 
\begin{align}
\label{eq:metric_control}
C^{-1} \xi_0 \leq g_0 \leq C \xi_0,
\end{align}
and
\begin{align}
\label{eq:data_estimate}
\sum_{0 \leq I \leq 3} \|\nabla^I \Sigma_0\|_{L^2(M)} + \sum_{0 \leq I \leq 2} \|\nabla^I \mathfrak{T}[g_0]\|_{L^2(M)} + \sum_{0 \leq I \leq 2} \| e^{a} \nabla^I \Sigma_0 \|_{L^2(M)} \leq \mathcal{I}^0,
\end{align}
where $\xi_0$ is a fixed smooth background Riemannian metric on $M$ and $C > 0$ is a uniform constant.

\noindent Assume further that there exists a solution to the lapse equation corresponding to the positive Yamabe case (see Equation \eqref{eq:positive}).

\noindent Then the maximal developement 
\[
T \mapsto (g(T), \Sigma(T)) \in C^\infty\big(M; \mathrm{Sym}^2(T^*M)\big) \times C^\infty\big(M; \mathrm{Sym}^2_0(T^*M)\big),
\]
to the Einstein-$\Lambda$ system with initial data $(g_0, \Sigma_0)$ at CMC Newtonian time $T_0 = a$, exists globally for all
\[
T \in [T_0, \infty).
\]

\noindent Moreover, the solution exhibits the following asymptotic behavior as $T \to \infty$:
\begin{enumerate}[label=(\roman*)]
    \item The tensor $\Sigma(T)$ decays to zero in appropriate Sobolev norms;
    \item The metrics $g(T)$ converge smoothly to a limit metric $\widetilde{g}$ on $M$ with constant positive scalar curvature,
    \[
    R(\widetilde{g})(x) = \frac{n-1}{n}, \quad \forall x \in M.
    \]
\end{enumerate}

\noindent Finally, the limit metric $\widetilde{g}$ is non-unique precisely when the lapse equation admits non-unique solutions.

\end{theorem}

\begin{conjecture}
 Suppose Consider the following data-set for Einstein-$\Lambda$ vacuum system.    
\end{conjecture}


\subsection{Consequences for Geometrization}

\noindent An important conceptual motivation underlying the present work concerns the potential role of the Einstein evolution equations in the geometric classification of closed $3$-manifolds. Specifically, we consider the relevance of the long-time behavior of solutions to the Einstein flow in connection with Thurston's geometrization conjecture. In this discussion, we assume throughout that the initial data satisfy the regularity assumptions stipulated in the main theorem.

\noindent A central open problem in mathematical general relativity is whether the Einstein equations on globally hyperbolic, spatially compact cosmological spacetimes can dynamically implement a form of geometrization of the underlying spatial manifold. In particular, we consider the case in which the Cauchy hypersurface \( M \) is a closed, oriented, connected $3$-manifold of negative Yamabe type, i.e., with \( \sigma(M) \leq 0 \).

\noindent The geometrization program in the setting of vacuum cosmological spacetimes was first proposed in the foundational works of Fischer--Moncrief \cite{fischer2000reduced,fischer2001reduced}, with further developments by Anderson \cite{anderson2001long}, who studied the asymptotic behavior of the vacuum Einstein flow and its possible connection to the Thurston decomposition. In the idealized scenario, one anticipates that the spatial metric induced on constant mean curvature (CMC) hypersurfaces asymptotically converges to a metric structure realizing a geometric decomposition of \( M \).

\noindent Following Anderson \cite{anderson2001long}, we recall the following definition of geometrization:

\begin{definition}[Geometrization]
Let \( M \) be a closed, oriented, connected $3$-manifold with \( \sigma(M) \leq 0 \). A \emph{weak geometrization} of \( M \) is a decomposition
\[
M = H \cup G,
\]
where:
\begin{itemize}
  \item \( H \) is a finite collection of embedded, complete, connected, finite-volume hyperbolic manifolds;
  \item \( G \) is a finite collection of embedded, connected graph manifolds;
  \item the union is along a finite collection of embedded tori \( \{T_i\} \), with \( \partial G = \partial H = \bigsqcup_i T_i \).
\end{itemize}
A \emph{strong geometrization} is a weak geometrization in which each gluing torus \( T_i \) is incompressible, i.e., the inclusion-induced homomorphism \( \pi_1(T_i) \to \pi_1(M) \) is injective.
\end{definition}

\noindent  Results due to Ringström \cite{ringstrom2008future} show that for a large class of initial data, including arbitrary spatial topologies, the Einstein-$\Lambda$ flow becomes essentially local in nature, and the large-scale topological structure of \( M \) becomes invisible to the dynamics. That is, the spatial topology is not imprinted on the asymptotic geometry of the evolving spacetime. We prove indeed this is the case for large initial data. 

\noindent A heuristic explanation for this phenomenon is furnished by comparison with a classical theorem of Cheeger and Gromov \cite{cheeger1,cheeger2}, which characterizes volume collapse of Riemannian manifolds with bounded curvature. Specifically, if a sequence \( \{(M_j, g_j)\} \) of compact Riemannian $3$-manifolds exhibits collapse with two-sided curvature bounds, then \( M_j \) admits an \( \mathcal{F} \)-structure of positive rank, and hence is a graph manifold for sufficiently large \( j \). Thus, any dynamical mechanism aiming to realize geometrization via Einstein evolution must, at minimum, produce volume collapse with bounded curvature in the graph component \( G \subset M \) for large times.

\noindent Our main theorem (Theorem~\ref{main}) establishes uniform bounds on the spacetime curvature under the Einstein–$\Lambda$ flow. However, we also prove that the volume of unit geodesic balls remains uniformly bounded from below for all future times, thereby precluding any form of collapse. More precisely, we obtain the following result:

\begin{corollary}[No-Collapse Theorem]
\label{nocollapse}
Let \( B(1) \subset M \) denote a geodesic ball of unit radius (scale $1$ in injectivity radius scale) with respect to the induced metric at CMC time \( T \in [T_0, \infty) \). Then the volume \( \operatorname{Vol}(B(1))_T \) remains uniformly bounded above by the initial volume \( \operatorname{Vol}(B(1))_{T_0} \) for all \( T \geq T_0 \).
\end{corollary}

\noindent The proof, given in Section~\ref{noncollapse}, follows directly from the estimates established in Theorem~\ref{main}. This result reflects the rapid expansion of the spacetime volume due to the presence of a positive cosmological constant, which dominates the large-scale dynamics in the asymptotic regime \( T \to \infty \). Consequently, the Einstein–$\Lambda$ flow does not differentiate between graph and hyperbolic components, and no meaningful geometric decomposition of \( M \) emerges in the limit.

\noindent The question of whether the Einstein equations with vanishing cosmological constant \( \Lambda = 0 \) can effectuate a geometric decomposition of \( M \) remains open and considerably more subtle. Unlike the Ricci flow, where Perelman's resolution of the geometrization conjecture \cite{perelman2002ricci,perelman2003ricci,perelman2003ricci2} crucially exploits the parabolic smoothing structure of the flow, the Einstein equations exhibit hyperbolic behavior, and their global-in-time analysis poses formidable technical challenges. Nonetheless, recent advances in the analysis of nonlinear wave equations in the large-data regime, such as \cite{C09}, offer promising avenues toward addressing these questions, and it is conceivable that similar techniques could be adapted to the cosmological setting.  

\noindent Ringström \cite{ringstrom2008future} provided a detailed local analysis of the Cauchy problem in general relativity with a positive cosmological constant and proposed a formal framework for quantifying the asymptotic inaccessibility of spatial topology by causal observers. His notion supports the view that long-time evolution under the Einstein-$\Lambda$ flow generically suppresses detectable topological information in the causal past of observers.

\begin{definition}
Let $(\widetilde{M}, \widehat{g})$ be a globally hyperbolic Lorentzian manifold with compact Cauchy hypersurfaces. We say that \emph{late-time observers are oblivious to topology} if there exists a Cauchy hypersurface $M \subset \widetilde{M}$ such that no future-directed inextendible causal curve $\gamma$ satisfies $M \subset J^{-}(\gamma)$. Conversely, we say that \emph{late-time observers are not oblivious to topology} if for every Cauchy hypersurface $M$, there exists a future-directed inextendible causal curve $\gamma$ with $M \subset J^{-}(\gamma)$.
\end{definition}

\noindent Based on this definition, Ringström proposed the following conjecture:

\begin{conjecture}[\cite{ringstrom2008future}]
Let $(\widetilde{M}, \widehat{g})$ be a future causally geodesically complete solution of the vacuum Einstein equations with positive cosmological constant $\Lambda > 0$, and compact Cauchy hypersurfaces. Then late-time observers in $(\widetilde{M}, \widehat{g})$ are oblivious to topology.
\end{conjecture}

\noindent We provide an affirmative to this conjecture in section \ref{mainproof}.

\noindent A proof of this statement is immediate from the structure of the limiting data described in Remark \ref{noconvergence}.
\medskip

\noindent While our analysis is confined to the vacuum Einstein equations with $\Lambda>0$, we conjecture that the failure of isotropization persists for a wide class of physically relevant matter models—including Einstein–$\Lambda$–Maxwell and Einstein–$\Lambda$–Euler systems. That is, we expect that generic initial data with nontrivial topology will not converge to spatial geometries exhibiting local homogeneity and isotropy, in the presence of matter fields, so long as the expansion is driven by a strictly positive cosmological constant.

\section*{Acknowledgement}
\noindent I thank Professor S-T Yau for many discussions regarding the scalar curvature geometry and Professor V. Moncrief for numerous discussions regarding Einstein flow. I thank the reviewer for the meticulous review, which substantially improved the quality of the article. This work was supported by the Center of Mathematical Sciences and Applications, Department of Mathematics at Harvard University and by BIMSA of YMSC at Tsinghua University.

\section{Preliminaries}

\noindent Throughout this work, we adopt the $(n+1)$-dimensional Arnowitt--Deser--Misner (ADM) formalism for the study of globally hyperbolic Lorentzian manifolds $(\widetilde{M},\widehat{g})$ with signature $(-,+,\ldots,+)$, where $n=3$. Specifically, we consider the canonical foliation 
\[
\widetilde{M} \cong \mathbb{R} \times M,
\]
where each leaf $\{t\} \times M$ is an orientable, compact $n$-dimensional Cauchy hypersurface diffeomorphic to $M$. The Lorentzian metric $\widehat{g}$ induces a Riemannian metric $g(t)$ on each slice $M(t) := \{t\}\times M$.

\noindent The ADM decomposition expresses the future-directed timelike unit normal vector field $\widehat{\mathbf{n}}$ orthogonal to the slices and the vector field $\partial_t$ generating the foliation as
\[
\partial_t = N \widehat{\mathbf{n}} + X,
\]
where $N: \widetilde{M} \to (0,\infty)$ is the lapse function, and $X \in \Gamma(T M)$ is the shift vector field, both of which possess the requisite regularity dictated by the function spaces to be specified later. The Lorentzian metric in the adapted coordinate system $\{x^\alpha\}_{\alpha=0}^n = \{t,x^i\}_{i=1}^n$ admits the canonical form
\begin{align}
\label{eq:spacetime_metric}
\widehat{g} = -N^2 dt \otimes dt + g_{ij} (dx^i + X^i dt) \otimes (dx^j + X^j dt),
\end{align}
where $g_{ij}(t,x)$ is the induced Riemannian metric on $M(t)$.

\noindent To quantify the extrinsic geometry of the embedding $M(t) \hookrightarrow \widetilde{M}$, we define the second fundamental form $k \in \Gamma(S^2 T^* M)$ by
\[
k_{ij} := -\langle \nabla_{\partial_i} \widehat{\mathbf{n}}, \partial_j \rangle = -\frac{1}{2N} \bigl(\partial_t g_{ij} - \mathscr{L}_X g_{ij}\bigr),
\]
where $\nabla$ denotes the Levi-Civita connection of $\widehat{g}$, and $\mathscr{L}_X$ is the Lie derivative along $X$. The mean extrinsic curvature, or trace of $k$, is given by
\[
\tau := \mathrm{tr}_g k = g^{ij} k_{ij},
\]
with $g^{ij}$ denoting the inverse metric on $M(t)$.

\noindent The vacuum Einstein field equations with cosmological constant $\Lambda \in \mathbb{R}$,
\begin{align}
\label{eq:einstein_equations}
\mathrm{Ric}(\widehat{g}) - \frac{1}{2} R(\widehat{g}) \widehat{g} + \Lambda \widehat{g} = 0,
\end{align}
admit an equivalent formulation in terms of the Cauchy data $(g,k)$ on each hypersurface $M(t)$ as the coupled evolution and constraint system:
\begin{align}
\label{eq:evol_g}
\partial_t g_{ij} &= -2N k_{ij} + \mathscr{L}_X g_{ij}, \\
\label{eq:evol_k}
\partial_t k_{ij} &= -\nabla_i \nabla_j N + N \left( R_{ij} + \tau k_{ij} - 2 k_i^m k_{mj} - \frac{2 \Lambda}{n-1} g_{ij} \right) + \mathscr{L}_X k_{ij}, \\
\label{eq:ham_constraint}
R - |k|_g^2 + \tau^2 &= 2\Lambda, \\
\label{eq:mom_constraint}
\nabla^i k_{ij} - \nabla_j \tau &= 0,
\end{align}
where $\nabla$ denotes the Levi-Civita connection of $g$, $R_{ij}$ and $R$ are the Ricci and scalar curvatures of $g$, respectively, and $|k|_g^2 = g^{im} g^{jn} k_{ij} k_{mn}$. Here and henceforth, indices are raised and lowered by $g$ and its inverse.

\noindent We denote by $\Gamma[g]$ the Christoffel symbols associated to $g$, and by $\mathrm{Riem}$ and $\mathrm{Ric}$ the corresponding Riemann and Ricci curvature tensors on $M$. The (negative-semidefinite) Laplace--Beltrami operator acting on functions $f \in C^\infty(M)$ is given by
\[
\Delta_g f := g^{ij} \nabla_i \nabla_j f,
\]
where the sign convention ensures that the spectrum of $-\Delta_g$ is nonnegative.

\noindent A central object in our analysis is the Lichnerowicz Laplacian
\[
\Delta_{L} : \Gamma(S^2 T^* M) \to \Gamma(S^2 T^* M),
\]
defined by
\[
(\Delta_{L} h)_{ij}
:=
g^{k\ell}\nabla_k\nabla_\ell h_{ij}
+2R_{ikj\ell}h^{k\ell}
-R_i{}^{k}h_{kj}
-R_j{}^{k}h_{ik}.,
\]
where the curvature operator acts via the Riemann curvature tensor. For metrics $g$ which are stable Einstein metrics, the operator $\mathcal{L}_g$ is elliptic and possesses strictly positive spectrum on the space of symmetric 2-tensors modulo gauge.

\noindent Finally, for nonnegative scalar functions $f,h : [0,\infty) \to \mathbb{R}_{\geq 0}$, we write $f(t) \lesssim h(t)$ to signify the existence of a constant $C > 0$, independent of $t$ but possibly dependent on the fixed background geometry and dimension $n$, such that
\[
f(t) \leq C h(t),
\]
and $f(t) \approx h(t)$ if there exist constants $0 < C_1 \leq C_2 < \infty$ such that
\[
C_1 h(t) \leq f(t) \leq C_2 h(t).
\]

\noindent Throughout, the spaces of smooth symmetric covariant 2-tensors and vector fields on $M$ are denoted by $S^0_2(M)$ and $\mathfrak{X}(M)$, respectively. The trace-free and transverse-traceless parts of a symmetric 2-tensor $A$ are denoted by $\widehat{A}$ and $A^{TT}$, respectively. We will also frequently denote the Cauchy slice at constant mean curvature time $T$ by $M(T)$. In addition, we frequently use the following transport identity, which is a consequence of the transport theorem \cite{marsden}
\begin{align}
 \frac{d}{dT}\int_{M(T)}f\mu_{g}=\int_{M(T)}(\partial_{T}f+\frac{1}{2}f\tr_{g}\partial_{T}g)\mu_{g}.   
\end{align}

\section{Gauge fixed Einstein's equations}
\label{mainsection}
\noindent We study the Cauchy problem in constant mean extrinsic curvature (CMC) gauge. To this end, we assume that the Lorentzian spacetime diffeomorphic to $M\times \mathbb{R}$ admits a constant mean curvature Cauchy slice. 

\noindent Now recall the vacuum Einstein's field equations with a cosmological constant $\Lambda$
\begin{align}
\label{eq:EFET}
 \text{Ric}(\widehat{g})_{\mu\nu}-\frac{1}{2}R(\widehat{g})+\Lambda\widehat{g}_{\mu\nu}=0   
\end{align}
expressed in $3+1$ formulation in a local chart $(t,x^{1},x^{2},x^{3})$ 
\begin{align}
\partial_{t}g_{ij}=-2Nk_{ij}+(\mathscr{L}_{X}g)_{ij},\\
\partial_{t}k_{ij}=-\nabla_{i}\nabla_{j}N+N(R_{ij}+\tr_{g}k k_{ij}-2k^{k}_{i}k_{jk}-\frac{2\Lambda}{n-1}g_{ij})+(\mathscr{L}_{X}k)_{ij},\\
2\Lambda=R(g)-|k|^{2}_{g}+(\tr_{g}k)^{2},\\
0=\nabla^{i}k_{ij}-\nabla_{j}\tr_{g}k.   
\end{align}
This evolution-constraint system leads to the following initial value problem
\begin{definition}
Initial data for \ref{eq:EFET} consist of an n dimensional manifold $M$, a Riemannian metric $g$, a covariant 2-tensor $k$, all assumed to be smooth and to satisfy 
\begin{align}
\label{eq:constraint1}
2\Lambda=R(g)-|k|^{2}_{g}+(\tr_{g}k)^{2},\\
\label{eq:constraint2}
0=\nabla^{i}k_{ij}-\nabla_{j}\tr_{g}k.       
\end{align}
Given a set of initial data, the corresponding initial value problem consists of constructing an $n+1$-dimensional Lorentzian manifold $(\widetilde{M},\widehat{g})$ satisfying the constraint equations \eqref{eq:constraint1} and \eqref{eq:constraint2}, together with an embedding $\iota:M\hookrightarrow \widetilde{M}$
such that $\iota(M)$ is a Cauchy hypersurface in $(\widetilde{M},\widehat{g})$. The embedding is required to satisfy $\iota^{*}\widehat{g}=g$ is the prescribed Riemannian metric on $M$, and, denoting by $\widehat{\mathbf{n}}$ the future-directed unit normal vector field to $\iota(M)$ and by $\kappa$ its second fundamental form in $(\widetilde{M},\widehat{g})$, one has $\iota^{*}\kappa=k$, where $k$ is the given symmetric 2-tensor on $M$. The spacetime $(\widetilde{M},\widehat{g})$ constructed in this manner is referred to as a globally hyperbolic development of the initial data.
\end{definition}

\noindent Now we define the CMC gauge. First, recall that  
\begin{definition}[Constant Mean Curvature (CMC) Gauge] Let \((\widetilde{M}, \widehat{g})\) be a globally hyperbolic Lorentzian manifold of dimension \( (3+1) \), and let \(\{M_t\}_{t \in I} \) denote a foliation of \(\widetilde{M}\) by spacelike Cauchy hypersurfaces defined as level sets of a smooth time function \(t: \mathcal{M} \to \mathbb{R}\). The foliation is said to be in \emph{constant mean curvature gauge} (CMC gauge) if, for each \(t \in I\), the hypersurface \(M_t := \{p \in \widetilde{M} \mid t(p) = t\}\) has constant mean curvature \(\tau(t)\), i.e., the trace of the second fundamental form \(k_t\) of \(M_t\) satisfies \[ \mathrm{tr}_{g_t} k_t = \tau(t) = t,\]
where $g_{t}$ denotes the induced Riemannian metric on $M_{t}$. In this gauge, the time coordinate function $t$ is chosen such that it coincides with the mean curvature of the hypersurface $M_{t}$.
\end{definition}
\noindent The definition automatically requires that the mean extrinsic curvature of level sets of the time be constant on the level sets i.e., $\nabla\tau=0$. More precisely the mean extrinsic curvature $\tau=\tr_{g}k$ of $M$ verifies the following 
\begin{align}
\label{eq:cmcg}
 \tau=\tr_{g}k=\text{monotonic function of $t$ alone}.  
\end{align}
In particular, we choose 
\begin{align}
    t=\tau.
\end{align}
The first analytical issue to address is the existence of a constant mean curvature (CMC) hypersurface $M$ within the spacetime manifold $M\times \mathbb{R}$. Unlike the setting of stability problems, where one typically begins with a known background solution possessing CMC slices, often of negative Einstein type, our situation lacks such an a priori reference geometry. In stability analyses, the presence of a negative Einstein background ensures the existence of CMC slices via a rigidity argument, whereby small perturbations of the data continue to admit CMC foliations. However, in the present context, no such background solution is assumed to be available. Rather, the goal is to determine whether such solutions exist at all, whether generic solutions asymptotically approach them, and whether any geometric or analytic obstructions prevent this convergence.

\noindent Consequently, we impose as a working hypothesis that the spacetime admits at least one CMC hypersurface. This assumption is motivated by the asymptotic structure of the models under consideration, which exhibit a big-bang-type singularity in the past. Specifically, we assume that past-directed inextendible causal geodesics terminate in a curvature singularity, in the sense of the $C^{\infty}$ formulation of the strong cosmic censorship conjecture in the vacuum cosmological setting. Under these conditions, the following result of Marsden and Tipler ensures the existence of a CMC hypersurface:

\begin{theorem-non}[\cite{marsden1980maximal}, Theorem 6]
Let $(\widetilde{M},\widehat{g})$ be a cosmological spacetime. If there exists a Cauchy hypersurface from which all orthogonal, future- or past-directed timelike geodesics terminate in a strong curvature singularity, then the singularity is crushing, and in particular, the spacetime admits a constant mean curvature hypersurface.
\end{theorem-non}
\noindent In the CMC gauge, the first notice a few basic important points. 
\begin{claim}
\label{important}
Let $(\widetilde{M},\widehat{g})=(M\times \mathbb{R},\widehat{g})$ be a $3+1$ dimensional globally hyperbolic spacetime admitting a foliation by constant mean curvature (CMC) hypersurfaces $M_{t}$, with each $M_{t}$ diffeomorphic to a compact manifold MM of negative Yamabe type. Then any initially expanding CMC solution does not admit a maximal hypersurface at any finite time. That is, the mean curvature $\tau(t)$ remains strictly negative for all $t$ in the domain of existence, and the evolution proceeds via continued expansion.
\end{claim}

\begin{proof}
The proof proceeds by a contradiction argument based on the Hamiltonian constraint and the structure of the space of initial data. We aim to establish the impossibility of the existence of a maximal slice (i.e., one for which $\tau=0$) in the future development of an initially expanding solution.

\noindent We begin by recalling that any symmetric $(0,2)$-tensor $k$ on $M$ can be orthogonally decomposed with respect to the standard York splitting (cf. \cite{koiso1979decomposition}):
\begin{align}
k = k^{\mathrm{TT}} + \frac{\tau}{3}g + \left( \mathscr{L}_Z g - \frac{2}{3} (\operatorname{div}_g Z) g \right),
\end{align}
where $k^{TT}$ denotes the transverse-traceless (TT) part of $k$, satisfying $tr_{g}k^{TT}$ and $\div_{g}k^{TT}=0$, and $Z$ is a smooth vector field on $M$.

\noindent Suppose $k$ satisfies the momentum constraint:
\begin{align}
\nabla^j k_{ij} = 0.
\end{align}
We show that the non-TT part involving the vector field Z must vanish. To that end, consider the $L^{2}$-inner product of the momentum constraint with the vector field Z, and apply Stokes’ theorem:
\begin{align}
0 &= \int_M \left(Z^i \nabla^j k_{ij} + Z^j \nabla^i k_{ji} \right) \mu_g \nonumber \
&= \int_M (\nabla^i Z^j + \nabla^j Z^i)\left( \nabla_i Z_j + \nabla_j Z_i - \frac{2}{3} (\operatorname{div}g Z) g{ij} \right) \mu_g \\\nonumber 
&= \int_M \left| \mathscr{L}_Z g - \frac{2}{3} (\operatorname{div}_g Z) g \right|^2  \mu_g.
\end{align}
Since the integrand is non-negative, this integral vanishes if and only if
\begin{align}
\mathscr{L}_Z g - \frac{2}{3} (\operatorname{div}_g Z) g = 0.
\end{align}
Thus, the extrinsic curvature simplifies to:
\begin{align}
k = k^{\mathrm{TT}} + \frac{\tau}{n}g.
\end{align}
We now invoke the Hamiltonian constraint on a CMC hypersurface $M_{t}$, which reads:
\begin{align}
R(g) - |k|^2 + \tau^2 &= 2\Lambda \
\Rightarrow R(g) - \left| k^{\mathrm{TT}} + \frac{\tau}{n}g \right|^2 + \tau^2 = 2\Lambda.
\end{align}
Expanding the norm
\begin{align}
|k^{\mathrm{TT}} + \frac{\tau}{n}g|^2 = |k^{\mathrm{TT}}|^2 + \frac{2\tau}{n} \operatorname{tr}_g k^{\mathrm{TT}} + \frac{\tau^2}{n^{2}} \operatorname{tr}_g g = |k^{\mathrm{TT}}|^2 + \frac{\tau^2}{n},
\end{align}
and using $\tr_{g}k^{TT}=0$ and $\tr_{g}g=n$, we obtain:
\begin{align}
R(g) - |k^{\mathrm{TT}}|^2 = 2\Lambda - \tau^2\left(1 - \frac{1}{3}\right).
\end{align}
For $n=3$, this simplifies to:
\begin{align}
R(g) - |k^{\mathrm{TT}}|^2 = 2\Lambda - \frac{2}{3}\tau^2.
\end{align}
Now, the hypothesis that $M$ is of negative Yamabe type implies that for all conformal representatives $g$ in the conformal class $[g]$, the scalar curvature $R(g)$ cannot be everywhere nonnegative. In particular, for all metrics $g$ conformally related to a constant negative scalar curvature metric, $R(g)<0$ somewhere. Thus, the left-hand side is strictly negative at some point, which enforces the inequality:
\begin{align}
\quad \tau^2 > 3\Lambda.
\end{align}
Since $\tau$ is constant on each CMC hypersurface, this implies that $\tau^{2}>3\Lambda$, hence, in particular, $\tau$ cannot vanish. Therefore, a maximal slice (where $\tau=0$) cannot form during the evolution.
Finally, since $\tau(t)$ is monotonically increasing in time by the choice of CMC gauge (cf. Equation~\eqref{eq:cmcg}), any initially expanding solution (i.e., with $\tau<0$) must remain strictly expanding throughout its domain of existence, with $\tau<0$ for all $t$. This concludes the proof.
\end{proof}
\begin{remark}
For convenience, one could simply set $\Lambda=n(n+1)$.
\end{remark}

\noindent A central function in the construction of a dimensionless dynamical formulation of the Einstein vacuum equations with cosmological constant $\Lambda$ is the use of a natural geometric scaling governed by the mean curvature. Consider the entity
\[
\varphi^2 := \tau^2 - 3\Lambda=\tau^{2}-3n(n+1) > 0,
\]
which is strictly positive and decays monotonically toward the future in a chosen temporal gauge (claim \ref{important}). The function $\varphi := \sqrt{\tau^2 - 3n(n+1)}$ serves as a natural scaling factor for the spatial geometry and lapse-shift data. To obtain a dimensionally consistent and dynamically well-posed evolution system, one must perform a rescaling of the Einstein field equations in terms of $\varphi$.

\noindent We adopt the dimensional convention of \cite{andersson2011einstein}, wherein the local spatial coordinates $\{x^i\}$ on the Cauchy hypersurface $M$ are taken to be dimensionless. Under this convention, the geometric and gauge quantities possess the following physical dimensions:
\begin{align} \label{eq:scaling}
[g_{ij}] = L^2, \quad [k_{ij}] = L, \quad [X^i] = L, \quad [N] = L^2, \quad [\tau] = L^{-1}, \quad [\varphi^2] = L^{-2},
\end{align}
where $L$ denotes the dimension of length. 

\noindent To facilitate the introduction of dimensionless variables, we adopt the notational convention that dimensional quantities are denoted with a tilde, while their dimensionless counterparts are written without decoration. The natural rescaling is then given by
\begin{align} \label{eq:scaling}
\widetilde{g}_{ij} = \frac{1}{\varphi^2} g_{ij}, \qquad \widetilde{N} = \frac{1}{\varphi^2} N, \qquad \widetilde{X}^i = \frac{1}{\varphi} X^i, \qquad \widetilde{k}_{ij}^{TT} = \frac{1}{\varphi} \Sigma_{ij},
\end{align}
where $\Sigma_{ij}$ denotes the trace-free part of the extrinsic curvature, and we define $\varphi = -\sqrt{\tau^2 - 3n(n+1)}$ so that $\varphi / \tau > 0$.

\noindent We now introduce a reparametrization of time via a new variable $T = T(\tau)$, defined implicitly by the vector field transformation
\begin{align} \label{eq:timechange}
\partial_T = -\frac{\varphi^2}{\tau} \partial_\tau.
\end{align}
Integrating this relation yields
\begin{align} \label{eq:ex1}
\varphi(T) = \varphi(T_0) e^{T_0} e^{-T},
\end{align}
for some fixed initial time $T_0$. By setting the normalization condition $\varphi(T_0) e^{T_0} = -C$ for a constant $C$ of dimension $L^{-1}$, we obtain explicit expressions for $\varphi$ and $\tau$ as functions of the new time coordinate:
\begin{align} \label{eq:ex2}
\varphi(T) = -Ce^{-T}, \qquad \tau(T) = -\sqrt{C^{2}e^{-2T} + 3n(n+1)}.
\end{align}
The new time variable $T$ ranges over the entire real line, i.e., $T \in (-\infty, \infty)$, and behaves analogously to Newtonian time in the rescaled formulation. 

\noindent Moreover, we record the asymptotic estimate
\begin{align} \label{eq:asymptotic}
\frac{\varphi(T)}{\tau(T)} = \frac{Ce^{-T}}{\sqrt{C^{2}e^{-2T} + 3n(n+1)}} \leq \frac{Ce^{-T}}{\sqrt{3n(n+1)}} \sim n^{-\frac{1}{2}}(n+1)^{-\frac{1}{2}}e^{-T},
\end{align}
demonstrating the exponential decay of the ratio $\varphi/\tau$ in the future time direction.

\noindent The rescaled Einstein's equations in the CMC gauge can be cast into the following system
\begin{align}
\label{eq:cmc1}
 \frac{\partial g_{ij}}{\partial T}=-\frac{2\varphi}{\tau}N\Sigma_{ij}-2(1-\frac{N}{n})g_{ij}-\frac{\varphi}{\tau}(\mathscr{L}_{X}g)_{ij}\\
 \frac{\partial \Sigma_{ij}}{\partial T}=-(n-1)\Sigma_{ij}-\frac{\varphi}{\tau}N(\text{Ric}_{ij}+\frac{n-1}{n^{2}}g_{ij})+\frac{\varphi}{\tau}\nabla_{i}\nabla_{j}(\frac{N}{n}-1)+\frac{2\varphi}{\tau}N\Sigma_{ik}\Sigma^{k}_{j}\\
 -\frac{\varphi}{n\tau}(\frac{N}{n}-1)g_{ij}-(n-2)(\frac{N}{n}-1)\Sigma_{ij}-\frac{\varphi}{\tau}(\mathscr{L}_{X}\Sigma)_{ij}\\
 \label{eq:HC11}
 R+\frac{n-1}{n}-|\Sigma|^{2}=0,\\
 \label{eq:cmc2}
 \nabla_{j}\Sigma^{j}_{i}=0.
\end{align}
For the purpose of our study, this system is not suitable. Instead of treating this system as a coupled weakly wave system (which would inevitably require invoking spatial harmonic gauge to turn the system into a strong hyperbolic system), we will treat the equation for the metric components $g_{ij}$ as transport equations. We define the following new entity that we shall study 
\begin{align}
    \mathfrak{T}_{ij}:=\text{Ric}_{ij}-\frac{1}{3}R(g)g_{ij}.
\end{align}
We obtain the following manifestly wave system for $(\Sigma,\mathfrak{T})$.
\begin{remark}
 We note that Choquet-Bruhat and York \cite{YCBY} considered a similar coupled system for $\Sigma$ and $\text{Ric}$.
\end{remark}
\begin{lemma}[Coupled Wave System for \((\Sigma, \mathfrak{T})\)]
\label{evolutionsystem}
Let \( (g, \Sigma) \) satisfy the reduced Einstein evolution equations \eqref{eq:cmc1}--\eqref{eq:cmc2} in constant mean curvature (CMC) and transported spatial gauge. Then the pair \( (\Sigma, \mathfrak{T}) \), consisting respectively of the transverse-traceless part of the second fundamental form and the obstruction tensor $\mathfrak{T}$, satisfies the following system of coupled second-order quasilinear hyperbolic equations, modulo a spatial diffeomorphism \( \Psi \) generated by a smooth shift vector field \( X \).

\noindent The evolution equations for \( \Sigma_{ij} \) and \( \mathfrak{T}_{ij} \) read:
\begin{align}
\partial_T \Sigma_{ij} &= -\frac{\varphi}{\tau} (\mathscr{L}_X \Sigma)_{ij} 
- (n-1) \Sigma_{ij}
- \frac{\varphi}{\tau} N \mathfrak{T}_{ij} 
+ \frac{\varphi}{\tau} \nabla_i \nabla_j \left( \frac{N}{n} - 1 \right) 
+ \frac{2\varphi}{\tau} N \Sigma_{ik} \Sigma^k{}_j \notag\\
&\quad - \frac{\varphi}{n\tau} \left( \frac{N}{n} - 1 \right) g_{ij} 
- (n-2) \left( \frac{N}{n} - 1 \right) \Sigma_{ij}, \label{eq:sigma_evol}
\end{align}
\begin{align}
\partial_T \mathfrak{T}_{ij} &= -\frac{\varphi}{\tau} (\mathscr{L}_X \mathfrak{T})_{ij} 
- N \frac{\varphi}{\tau} \left( \Delta_g \Sigma_{ij} - R^m{}_{jli} \Sigma^l{}_m - R^m{}_{ilj} \Sigma^l{}_m \right) \notag\\
&\quad + n \frac{\varphi}{\tau} \nabla^l \nabla_i \left( \frac{N}{n} - 1 \right) \Sigma_{jl} 
+ n \frac{\varphi}{\tau} \nabla^l \left( \frac{N}{n} - 1 \right) \nabla_i \Sigma_{jl} \notag\\
&\quad + n \frac{\varphi}{\tau} \nabla^l \nabla_j \left( \frac{N}{n} - 1 \right) \Sigma_{il}
+ n \frac{\varphi}{\tau} \nabla^l \left( \frac{N}{n} - 1 \right) \nabla_j \Sigma_{il} \notag\\
&\quad - n \frac{\varphi}{\tau} \Delta_g \left( \frac{N}{n} - 1 \right) \Sigma_{ij}
- 2n \frac{\varphi}{\tau} \nabla^l \left( \frac{N}{n} - 1 \right) \nabla_l \Sigma_{ij} \notag\\
&\quad - \Delta_g \left( \frac{N}{n} - 1 \right) g_{ij}
+ N \frac{\varphi}{\tau} \left( \mathfrak{T}_{ki} \Sigma^k{}_j + \mathfrak{T}_{kj} \Sigma^k{}_i \right)
+ \frac{2(n-1)}{n^2} \left( \frac{N}{n} - 1 \right) g_{ij}. \label{eq:T_evol}
\end{align}
\end{lemma}

\begin{proof}
 The proof follows directly from Einstein's equations and the variation formula for Ricci and scalar curvature. 

\end{proof}

\noindent There are advantages of working with the pair $(\Sigma,\mathfrak{T})$ instead of $(g,\Sigma)$. Firstly, the equations are manifestly hyperbolic up to a spatial diffeomorphism and one does not need to deal with the Gribov ambiguities associated with the spatial harmonic gauge often used. In particular, the spatial harmonic gauge is not suited for large data problems in the current context without substantial technical machinery.      

\noindent In the previous section, we have chosen the CMC gauge as the time gauge. We need to choose a spatial gauge to analyze the evolution system \ref{evolutionsystem}. We choose CMC transported spatial gauge (previously \cite{fajman2022cosmic,rodnianski2018regime} used this gauge to study the dynamical stability of big-bang singularity formation) 
\subsection{Transported Spatial Coordinates and the Re-Scaled Einstein System}
\noindent In this section, we introduce the notion of a \emph{transportedspatial gauge}, which serves as a coordinate choice adapted to the constant mean curvature (CMC) foliation. 

\begin{definition}[Transported Spatial Gauge]
\label{spatial}
Let $(x^1_0, x^2_0, x^3_0)$ be local coordinates on an open neighborhood $\Omega \subset M$ of the initial Cauchy hypersurface $\{T' = T_0\} \subset \widetilde{M} := \mathbb{R} \times M$. We say that a coordinate system $(T', x^1, x^2, x^3)$ on $[T_0, T] \times \Omega$ defines a \emph{transported spatial gauge} if the spatial coordinate functions are Lie transported along the future-directed unit normal vector field $\widehat{\mathbf{n}}$ to the CMC hypersurfaces, i.e.,
\begin{align}
\label{eq:transport}
\mathscr{L}_{\widehat{\mathbf{n}}}(x^i) = 0, \qquad i = 1,2,3,
\end{align}
with initial condition $(x^1, x^2, x^3)|_{T'=T_0} = (x^1_0, x^2_0, x^3_0)$.
\end{definition}

\noindent
From~\eqref{eq:transport}, and the fact that $\widehat{\mathbf{n}} = \frac{1}{3}(\partial_{T'} - X^i \partial_i)$ in general coordinates, we deduce:
\begin{align}
\label{eq:shift-vanishing}
\mathscr{L}_{\widehat{\mathbf{n}}}(x^{i})=\widehat{\mathbf{n}}(x^i) = \frac{1}{3}(\partial_{T'} - X^j \partial_j)(x^i) = -\frac{1}{3} X^i \Rightarrow X^i = 0.
\end{align}
Thus, the shift vector field vanishes identically in these coordinates.

\vspace{1em}
\noindent
In this gauge, the rescaled Einstein evolution equations take a particularly tractable form. Let $\Sigma_{ij}$ denote the trace-free part of the second fundamental form (the shear), and let $\mathfrak{T}_{ij}$ denote the trace-free part of the rescaled Ricci curvature. The evolution equations then read:

\begin{align}
\label{eq:curvature1}
\partial_T \Sigma_{ij} &= -(n-1)\Sigma_{ij} - \frac{\varphi}{\tau} N \mathfrak{T}_{ij} + \frac{\varphi}{\tau} \nabla_i \nabla_j \left( \frac{N}{n} - 1 \right) + \frac{2\varphi}{\tau} N \Sigma_{ik} \Sigma^k{}_j \\
&\quad - \frac{\varphi}{n\tau} \left( \frac{N}{n} - 1 \right) g_{ij} - (n-2) \left( \frac{N}{n} - 1 \right) \Sigma_{ij}\nonumber, 
\end{align}

\begin{align}
 \label{eq:curvature2}
\partial_T \mathfrak{T}_{ij} &= -N\frac{\varphi}{\tau} \left( \Delta_g \Sigma_{ij} - R^m{}_{jli} \Sigma^l{}_m - R^m{}_{ilj} \Sigma^l{}_m \right) + n \frac{\varphi}{\tau} \nabla^l \nabla_i \left( \frac{N}{n} - 1 \right) \Sigma_{jl} \\
&\quad + n \frac{\varphi}{\tau} \nabla^l \left( \frac{N}{n} - 1 \right) \nabla_i \Sigma_{jl} + n \frac{\varphi}{\tau} \nabla^l \nabla_j \left( \frac{N}{n} - 1 \right) \Sigma_{il} \nonumber \\
&\quad + n \frac{\varphi}{\tau} \nabla^l \left( \frac{N}{n} - 1 \right) \nabla_j \Sigma_{il} - n \frac{\varphi}{\tau} \Delta_g \left( \frac{N}{n} - 1 \right) \Sigma_{ij} \nonumber \\
&\quad - 2n \frac{\varphi}{\tau} \nabla^l \left( \frac{N}{n} - 1 \right) \nabla_l \Sigma_{ij} - \Delta_g \left( \frac{N}{n} - 1 \right) g_{ij} \nonumber \\
&\quad + N \frac{\varphi}{\tau} \left( \mathfrak{T}_{ki} \Sigma^k{}_j + \mathfrak{T}_{kj} \Sigma^k{}_i \right) + \frac{2(n-1)}{n^2} \left( \frac{N}{n} - 1 \right) g_{ij}\nonumber, 
\end{align}

\noindent
These equations are supplemented by the elliptic lapse equation:
\begin{align}
\label{eq:lapse3}
- \Delta_g \left( \frac{N}{n} - 1 \right) + \left( |\Sigma|^2 + \frac{1}{3} \right) \left( \frac{N}{n} - 1 \right) = - |\Sigma|^2.
\end{align}

\subsection{Integration and Norms}

\noindent Let \( (M, g) \) be a smooth, closed, oriented Riemannian $3$-manifold. Fix a smooth partition of unity \( \{ \xi_U \}_{U \in \mathcal{U}} \) subordinate to a finite atlas \( \{ (U, \varphi_U) \}_{U \in \mathcal{U}} \) of coordinate charts \( \varphi_U: U \to \mathbb{R}^3 \). For a measurable function \( f: M \to \mathbb{R} \), the integral of \( f \) with respect to the Riemannian volume form \( \mu_g \) is defined by

\begin{align} 
\label{eq:integration}
\int_M f \, \mu_g := \sum_{U \in \mathcal{U}} \int_{\varphi_U(U)} f \circ \varphi_U^{-1}(x) \, \xi_U \circ \varphi_U^{-1}(x) \sqrt{\det g_{ij}(x)} \, dx^1 dx^2 dx^3,
\end{align}

\noindent where \( g_{ij} \) denotes the components of the metric in the chart \( \varphi_U \), and \( \det g_{ij} \) is the determinant of the metric matrix.

\noindent Given a smooth tensor field \( \psi \in \Gamma\left( (T^{\otimes L} M) \otimes (T^* M)^{\otimes K} \right) \), where \( K + L \geq 1 \), we define its pointwise norm using the Riemannian metric \( g \) by

\begin{align} \label{eq:pointwise-norm}
\langle \psi(x), \psi(x) \rangle_g := g^{i_1 j_1} \cdots g^{i_L j_L} g_{m_1 n_1} \cdots g_{m_K n_K} \,
\psi^{m_1 \dots m_K}_{i_1 \dots i_L}(x) \, \psi^{n_1 \dots n_K}_{j_1 \dots j_L}(x).
\end{align}

\noindent This inner product is well-defined and independent of the choice of coordinates. The \( L^p \) norm of \( \psi \) over \( (M, g) \) is then defined by

\begin{align} \label{eq:Lp-norm}
\| \psi \|_{L^p(M)}^p := \int_M \left( \langle \psi, \psi \rangle_g \right)^{p/2} \, \mu_g, \qquad 1 \leq p < \infty,
\end{align}

\noindent and the essential supremum (or \( L^\infty \)) norm is given by

\begin{align} \label{eq:Linf-norm}
\| \psi \|_{L^\infty(M)} := \sup_{x \in M} \left( \langle \psi(x), \psi(x) \rangle_g \right)^{1/2}.
\end{align}

\noindent We now define the energy norms used in the analysis of the Einstein$-\Lambda$ flow. Let \( \Sigma \) denote the trace-free part of the second fundamental form and \( \mathfrak{T} \) the associated energy-momentum correction tensor. For integers \( I \in \{0, 1, 2, 3\} \), define:

\begin{align} \label{eq:OI}
\mathcal{O}_I := 
\begin{cases}
\| \nabla^I \Sigma \|_{L^2(M)} + \| \nabla^{I-1} \mathfrak{T} \|_{L^2(M)}, & \text{if } I \geq 1, \\
\| \Sigma \|_{L^2(M)}, & \text{if } I = 0.
\end{cases}
\end{align}

\noindent We also introduce exponentially weighted norms in time to account for the expanding geometry in cosmological spacetimes. Let \( T \) denote the CMC time parameter, and let \( 1\geq \gamma > 0 \) be a scaling parameter (which will ultimately be chosen to be \( \gamma = 1 \)). For \( I \leq 2 \), define the weighted Sobolev norms

\begin{align} \label{eq:FI}
\mathcal{F}_I := \| e^{\gamma T} \nabla^I \Sigma \|_{L^2(M)}.
\end{align}

\noindent Define also the pointwise weighted norm:
\begin{align} \label{eq:infty-norm}
\mathcal{N}^\infty := \| e^{\gamma T} \Sigma \|_{L^\infty(M)} + \left\| e^{2\gamma T} \left( \frac{N}{n} - 1 \right) \right\|_{L^\infty(M)},
\end{align}
where \( N \) denotes the spacetime lapse function.

\noindent The total energy norms used throughout this work are defined by

\begin{align} \label{eq:O}
\mathcal{O} := \sum_{I=0}^3 \mathcal{O}_I, \qquad \mathcal{F} := \sum_{I=0}^2 \mathcal{F}_I.
\end{align}
As will be demonstrated in the analysis below, the optimal exponential weight corresponds to \( \gamma = 1 \), which balances the natural scaling of the lapse and the shear in expanding CMC spacetimes.

\section{Construction of the Initial data}
\label{data}
\noindent The heuristic of the data characterization is provided in the introduction. The goal of this section is to explicitly construct such data that verify the Einstein-$\Lambda$ constraint equations. In addition, we specifically show that such data does not fall under the category treated by the previous mathematically rigorous studies (e.g., \cite{fajman2020stable}) in this exact context. The most important point to note here is that we are in the negative Yamabe context and apply the transverse-traceless perturbation and conformal technique. First, we construct Riemannian metrics on $M$ that verify the condition 
\[
\sum_{I = 0}^{2}  \| \nabla^I \mathfrak{T}[g_1] \|_{L^2(M)}  \leq \mathcal{I}^0=O(e^{a/10}),~a\gg 1
\]
The second step is to solve the momentum constraint in the CMC gauge i.e., construct an appropriate $g_{1}$ $TT-$ tensor verifying the estimate 
\[
\|\Sigma\|_{H^{2}}\lesssim e^{-a}\mathcal I^{0},
\qquad
\|\Sigma\|_{H^{3}}\lesssim \mathcal I^{0},
\]
In the third step, we solve the Hamiltonian constraint using the conformal method and prove that the estimates proved for $\mathfrak{T}$ and $\Sigma$ are modified by a negligible amount in the conformal transformation process. The main proposition that we prove is the existence of an open set of initial data $(g_{0},k_{0})$ that verifies the estimates 
\[
\sum_{I = 0}^{3} \| \nabla^I \Sigma_0 \|_{L^2(M)} + \sum_{I = 0}^{2} \left( \| \nabla^I \mathfrak{T}[g_0] \|_{L^2(M)} + \| e^a \nabla^I \Sigma_0 \|_{L^2(M)} \right) \leq \mathcal{I}^0
\]

\noindent First, we recall the well-known lemma relating the trace-free Ricci curvature of $\gamma$ and the perturbed metric $g=\gamma+\epsilon h$ for $\epsilon\ll1$.

\begin{lemma}
\label{nonlinearexpansion}
Let $(M^{n},\gamma)$, $n\geq 2$, be a smooth closed connected Riemannian manifold, and let
$h\in C^\infty(\text{Sym}^2T^*M)$ be a smooth symmetric $(0,2)$-tensor. For $\epsilon\in\mathbb R^{+}$
with $\epsilon$ sufficiently small, define
\[
g_\epsilon:=\gamma+\epsilon h.
\]
Then $g_\epsilon$ is a smooth Riemannian metric, and the following assertions hold. First, the inverse metric expression reads
\begin{equation}\label{eq:inverse-expansion}
(g_{\epsilon})^{ij}
=
\gamma^{ij}
-
\epsilon h^{ij}
+
\epsilon^{2}O_{\gamma}(h*h).
\end{equation}
\medskip
\noindent Now let $\nabla=\nabla^\gamma$ denote the Levi--Civita connection of $\gamma$, and let
\[
A_\epsilon{}^{k}{}_{ij}
:=
\Gamma(g_\epsilon)^{k}{}_{ij}-\Gamma(\gamma)^{k}{}_{ij}.
\]
Then
\begin{equation}
\label{eq:A-exact}
A_\epsilon{}^{k}{}_{ij}
=
\frac12 (g_\epsilon)^{k\ell}
\big(
\nabla_i(g_\epsilon)_{j\ell}
+\nabla_j(g_\epsilon)_{i\ell}
-\nabla_\ell(g_\epsilon)_{ij}
\big)
=
\frac{\epsilon}{2}(g_\epsilon)^{k\ell}
\big(
\nabla_i h_{j\ell}
+\nabla_j h_{i\ell}
-\nabla_\ell h_{ij}
\big).
\end{equation}
Moreover,
\begin{equation}
\label{eq:Ric-exact}
\text{Ric}(g_\epsilon)_{ij}
=
\text{Ric}(\gamma)_{ij}
+\nabla_k A_\epsilon{}^{k}{}_{ij}
-\nabla_j A_\epsilon{}^{k}{}_{ik}
+A_\epsilon{}^{k}{}_{ij}A_\epsilon{}^{\ell}{}_{k\ell}
-A_\epsilon{}^{k}{}_{i\ell}A_\epsilon{}^{\ell}{}_{jk},
\end{equation}
and hence
\begin{equation}
\label{eq:R-exact}
R[g_\epsilon]
=
(g_\epsilon)^{ij}
\Big(
\text{Ric}(\gamma)_{ij}
+\nabla_k A_\epsilon{}^{k}{}_{ij}
-\nabla_j A_\epsilon{}^{k}{}_{ik}
+A_\epsilon{}^{k}{}_{ij}A_\epsilon{}^{\ell}{}_{k\ell}
-A_\epsilon{}^{k}{}_{i\ell}A_\epsilon{}^{\ell}{}_{jk}
\Big).
\end{equation}
\noindent The scalar curvature admits the expansion
\begin{equation}
\label{eq:R-linearization-expansion}
R[g_\epsilon]
=
R[\gamma]+\epsilon\,DR_\gamma[h]+\epsilon^2 Q^R_\gamma(h;\epsilon),
\end{equation}
where
\begin{equation}
\label{eq:DR}
DR_\gamma[h]
=
\nabla^i\nabla^j h_{ij}
-\Delta_\gamma(\tr_\gamma h)
-\langle \text{Ric}(\gamma),h\rangle_\gamma.
\end{equation}
Here $\Delta_\gamma=\gamma^{ij}\nabla_i\nabla_j$ is the rough Laplacian on functions. The remainder $Q^R_\gamma(h;\epsilon)$ is smooth in $(x,\epsilon)$ and in schematic notation,
\begin{equation}
\label{eq:QR-schematic}
Q^R_\gamma(h;\epsilon)
=
h*\nabla^2 h+\nabla h * \nabla h+\text{Rm}(\gamma)*h*h.
\end{equation}
\noindent Now if
\[
\mathfrak T[g]
:=\text{Ric}(g)-\frac1n R[g]\,g.
\]
Then
\begin{equation}
\label{eq:T-expansion}
\mathfrak T[g_\epsilon]
=
\mathfrak T[\gamma]
+\epsilon\,D\mathfrak T_\gamma[h]
+\epsilon^2 Q^{\mathfrak T}_\gamma(h;\epsilon),
\end{equation}
where
\begin{equation}
\label{eq:DT-general}
D\mathfrak T_\gamma[h]
=
D\text{Ric}_\gamma[h]
-\frac1n\,DR_\gamma[h]\,\gamma
-\frac1n\,R[\gamma]\,h,
\end{equation}
and
\begin{align}
\label{eq:DRic-general}
D\text{Ric}_\gamma[h]_{ij}
&=
\frac12\Big(
-\Delta_\gamma h_{ij}
-\nabla_i\nabla_j(\tr_\gamma h)
+\nabla_i(\delta_\gamma h)_j
+\nabla_j(\delta_\gamma h)_i
\Big)
-\text{Rm}(\gamma)_{ikj\ell}h^{k\ell}
+\frac12\Big(\text{Ric}(\gamma)_{ik}h^{k}{}_{j}+\text{Ric}(\gamma)_{jk}h^{k}{}_{i}\Big).
\end{align}
Here
\[
(\delta_\gamma h)_j:=\nabla^i h_{ij}.
\]
The quadratic remainder $Q^{\mathfrak T}_\gamma(h;\epsilon)$ is smooth in $(x,\epsilon)$ and, 
in particular,
\begin{equation}
\label{eq:QT-schematic}
Q^{\mathfrak T}_\gamma(h;\epsilon)
=
h*\nabla^2 h+\nabla h * \nabla h+\text{Rm}(\gamma)*h*h.
\end{equation}

\medskip
\noindent Assume in addition that
\[
\tr_\gamma h=0,
\qquad
\delta_\gamma h=0.
\]
Then
\begin{equation}
\label{eq:DR-TT}
DR_\gamma[h]=-\langle \text{Ric}(\gamma),h\rangle_\gamma
=-\langle \mathfrak T[\gamma],h\rangle_\gamma,
\end{equation}
and
\begin{equation}
\label{eq:DRic-TT}
D\text{Ric}_\gamma[h]
=
\frac12\,\Delta_{L,\gamma}h,
\end{equation}
where the Lichnerowicz Laplacian on symmetric $(0,2)$-tensors is defined by
\begin{equation}
\label{eq:Lich-def}
(\Delta_{L,\gamma}h)_{ij}
:=
-\Delta_\gamma h_{ij}
-2\,\text{Rm}(\gamma)_{ikj\ell}h^{k\ell}
+\text{Ric}(\gamma)_{ik}h^{k}{}_{j}
+\text{Ric}(\gamma)_{jk}h^{k}{}_{i}.
\end{equation}
Consequently,
\begin{equation}
\label{eq:DT-TT}
D\mathfrak T_\gamma[h]_{ij}
=
\frac12(\Delta_{L,\gamma}h)_{ij}
+\frac1n\langle \mathfrak T[\gamma],h\rangle_\gamma\,\gamma_{ij}
-\frac1n R[\gamma]\,h_{ij}.
\end{equation}
\end{lemma}

\begin{proof}
For $\epsilon$ sufficiently small, the bilinear form $g_\epsilon=\gamma+\epsilon h$ remains positive
definite, hence defines a smooth Riemannian metric. Since $M$ is compact and $h$ is fixed, all
smallness thresholds and constants below depend only on finitely many norms of $(\gamma,h)$.

\noindent We write $\nabla=\nabla^\gamma$. Since $\nabla\gamma=0$ and $(g_\epsilon)_{ij}=\gamma_{ij}+\epsilon h_{ij}$,
the difference tensor between the Levi--Civita connections of $g_\epsilon$ and $\gamma$ is given by the
standard formula
\[
A_\epsilon{}^{k}{}_{ij}
=
\Gamma(g_\epsilon)^{k}{}_{ij}-\Gamma(\gamma)^{k}{}_{ij}
=
\frac12 (g_\epsilon)^{k\ell}
\big(
\nabla_i(g_\epsilon)_{j\ell}
+\nabla_j(g_\epsilon)_{i\ell}
-\nabla_\ell(g_\epsilon)_{ij}
\big),
\]
which immediately yields \eqref{eq:A-exact}. The curvature transformation law for two torsion-free
connections implies
\[
\text{Ric}(g_\epsilon)_{ij}
=
\text{Ric}(\gamma)_{ij}
+\nabla_k A_\epsilon{}^k{}_{ij}
-\nabla_j A_\epsilon{}^k{}_{ik}
+A_\epsilon{}^k{}_{ij}A_\epsilon{}^\ell{}_{k\ell}
-A_\epsilon{}^k{}_{i\ell}A_\epsilon{}^\ell{}_{jk},
\]
hence also \eqref{eq:R-exact} after contraction with $(g_\epsilon)^{ij}$.
\noindent Next, since
\[
g_\epsilon=\gamma\circ(\text{Id}+\epsilon \gamma^{-1}h),
\]
the inverse metric is obtained by Neumann expansion:
\[
g_\epsilon^{-1}
=
(\text{Id}+\epsilon \gamma^{-1}h)^{-1}\circ \gamma^{-1}
=
\big(\text{Id}-\epsilon \gamma^{-1}h+\epsilon^2(\gamma^{-1}h)^2+\epsilon^3\mathcal E_\epsilon\big)\circ \gamma^{-1},
\]
which gives \eqref{eq:inverse-expansion}. The $C^m$ bounds on $\mathcal E_\epsilon$ follow from
smooth dependence of matrix inversion on the coefficients of the metric and compactness of $M$.

\noindent We now derive the linearization of scalar curvature. Differentiating \eqref{eq:A-exact} at $\epsilon=0$ gives
\[
\dot A^{k}{}_{ij}
:=
\left.\frac{d}{d\epsilon}\right|_{\epsilon=0}A_\epsilon{}^{k}{}_{ij}
=
\frac12\gamma^{k\ell}
\big(
\nabla_i h_{j\ell}
+\nabla_j h_{i\ell}
-\nabla_\ell h_{ij}
\big).
\]
Differentiating \eqref{eq:R-exact} at $\epsilon=0$, using $(g_\epsilon)^{ij}=\gamma^{ij}+O(\epsilon)$ and
$A_\epsilon=O(\epsilon)$, we find
\[
DR_\gamma[h]
=
-h^{ij}\text{Ric}(\gamma)_{ij}
+\gamma^{ij}\big(\nabla_k\dot A^{k}{}_{ij}-\nabla_j\dot A^{k}{}_{ik}\big).
\]
A direct computation shows
\[
\gamma^{ij}\nabla_k\dot A^{k}{}_{ij}
=
\nabla^i\nabla^j h_{ij}-\frac12\Delta_\gamma(\tr_\gamma h),
\]
while
\[
\gamma^{ij}\nabla_j\dot A^{k}{}_{ik}
=
\frac12\Delta_\gamma(\tr_\gamma h).
\]
Substituting these identities yields \eqref{eq:DR}.

\noindent The quadratic expansion \eqref{eq:R-linearization-expansion} follows by substituting
\eqref{eq:inverse-expansion} and \eqref{eq:A-exact} into \eqref{eq:R-exact}. Indeed,
\[
A_\epsilon=\epsilon\,\nabla h+\epsilon^2 h*\nabla h,
\qquad
\nabla A_\epsilon=\epsilon\,\nabla^2 h+\epsilon^2(\nabla h*\nabla h+h*\nabla^2 h),
\]
and the exact formula \eqref{eq:R-exact} then yields
\[
R[g_\epsilon]-R[\gamma]-\epsilon DR_\gamma[h]
=
\epsilon^2\Big(h*\nabla^2 h+\nabla h * \nabla h+\text{Rm}(\gamma)*h*h\Big),
\]
schematically.

\noindent We turn to the trace-free Ricci tensor
\[
\mathfrak T[g]=\text{Ric}(g)-\frac1nR[g]\,g.
\]
Differentiating at $\gamma$ gives
\[
D\mathfrak T_\gamma[h]
=
D\text{Ric}_\gamma[h]-\frac1n DR_\gamma[h]\gamma-\frac1nR[\gamma]h,
\]
which is \eqref{eq:DT-general}. Thus, it remains only to record the standard formula for
$D\text{Ric}_\gamma[h]$.

\noindent Differentiating \eqref{eq:Ric-exact} at $\epsilon=0$ eliminates the quadratic terms in $A_\epsilon$ and yields
\[
D\text{Ric}_\gamma[h]_{ij}
=
\nabla_k\dot A^k{}_{ij}-\nabla_j\dot A^k{}_{ik}.
\]
Substituting the expression for $\dot A$ and commuting covariant derivatives in the third-order terms
gives
\[
D\text{Ric}_\gamma[h]_{ij}
=
\frac12\Big(
-\Delta_\gamma h_{ij}
-\nabla_i\nabla_j(\tr_\gamma h)
+\nabla_i(\delta_\gamma h)_j
+\nabla_j(\delta_\gamma h)_i
\Big)
-\text{Rm}(\gamma)_{ikj\ell}h^{k\ell}
+\frac12\Big(\text{Ric}(\gamma)_{ik}h^k{}_j+\text{Ric}(\gamma)_{jk}h^k{}_i\Big).
\]
This proves the variation of the Ricci curvature. 
The expansion \eqref{eq:T-expansion} and the schematic structure \eqref{eq:QT-schematic}
follow immediately from \eqref{eq:DT-general}, \eqref{eq:R-linearization-expansion},
\eqref{eq:DRic-general}, and the already established bounds for the scalar-curvature remainder.

\noindent Assume now that $h$ is transverse-traceless relative to $\gamma$, i.e.
\[
\tr_\gamma h=0,\qquad \delta_\gamma h=0.
\]
Then \eqref{eq:DR} reduces to
\[
DR_\gamma[h]=-\langle \text{Ric}(\gamma),h\rangle_\gamma.
\]
Since
\[
\text{Ric}(\gamma)=\mathfrak T[\gamma]+\frac1nR[\gamma]\gamma
\]
and $\langle \gamma,h\rangle_\gamma=\tr_\gamma h=0$, $DR_{\gamma}[h]$ is also equal to
\[
-\langle \mathfrak T[\gamma],h\rangle_\gamma,
\]
proving \eqref{eq:DR-TT}.

\noindent Under the same TT assumptions, \eqref{eq:DRic-general} simplifies to
\[
D\text{Ric}_\gamma[h]_{ij}
=
\frac12\Big(
-\Delta_\gamma h_{ij}
-2\text{Rm}(\gamma)_{ikj\ell}h^{k\ell}
+\text{Ric}(\gamma)_{ik}h^k{}_j+\text{Ric}(\gamma)_{jk}h^k{}_i
\Big)
=
\frac12(\Delta_{L,\gamma}h)_{ij},
\]
which is \eqref{eq:DRic-TT}. Substituting this and \eqref{eq:DR-TT} into \eqref{eq:DT-general}
yields
\[
D\mathfrak T_\gamma[h]_{ij}
=
\frac12(\Delta_{L,\gamma}h)_{ij}
+\frac1n\langle \text{Ric}(\gamma),h\rangle_\gamma\,\gamma_{ij}
-\frac1nR[\gamma]h_{ij}.
\]
Since $h$ is trace-free,
\[
\langle \text{Ric}(\gamma),h\rangle_\gamma
=
\langle \mathfrak T[\gamma],h\rangle_\gamma,
\]
and \eqref{eq:DT-TT} follows. This completes the proof of the lemma.
\end{proof}
\noindent Now we prove the following basic lemma regarding the smooth dependence of the transverse-traceless projection on the metric. This will be important in the next proposition, where we use high frequency $TT-$ eigentensors (of appropriately constructed second order elliptic operator)
\begin{lemma}[TT projection depends smoothly on the metric]
\label{lem:TT-projection}
Fix $s\ge4$ and a background metric $\bar g$. There exists a neighborhood
$\mathcal U\subset \text{Met}^{H^{s}}(M)$ of $\bar g$ and a bounded linear operator
\[
\Pi^{TT}_g: H^{s}(M;S^{2}T^{*}M)\to H^{s}(M;S^{2}T^{*}M),
\qquad g\in\mathcal U,
\]
such that:
\begin{enumerate}
\item $\Pi^{TT}_g$ is a projection: $(\Pi^{TT}_g)^2=\Pi^{TT}_g$.
\item $\Pi^{TT}_g(h)$ is $g$-TT for every $h$.
\item The map $g\mapsto \Pi^{TT}_g$ is $C^\infty$ from $\mathcal U$ into
$\mathcal L(H^{s},H^{s})$.
\item There exists $C=C(\mathcal U)$ such that
\begin{equation}
\label{eq:TTproj-bound}
\|\Pi^{TT}_g(h)\|_{H^{s}} \le C\,\|h\|_{H^{s}}
\quad\text{and}\quad
\|(\Pi^{TT}_g-\Pi^{TT}_{\bar g})(h)\|_{H^{s}} \le C\,\|g-\bar g\|_{H^{s}}\,\|h\|_{H^{s}}.
\end{equation}
\end{enumerate}
\end{lemma}

\begin{proof}
Consider the elliptic operator on $1$--forms
\[
\mathcal L_g := \delta_g\circ \mathcal D_g : H^{s-1}(T^*M)\to H^{s-3}(T^*M),
\]
where $(\mathcal D_g X)_{ij}:=\frac12(\nabla_i X_j+\nabla_j X_i)-\frac13(\div_g X)g_{ij}$
is the conformal Killing operator. For $g$ with no nontrivial conformal Killing fields
(or after restricting to the $L^2$--orthogonal complement of $\ker \mathcal L_g$),
$\mathcal L_g$ is invertible and depends smoothly on $g$; invertibility holds for $g$
in an $H^{s}$--neighborhood of $\bar g$ by stability of elliptic isomorphisms.
Define
\[
\Pi^{TT}_g(h) := h - \mathcal D_g\big(\mathcal L_g^{-1}(\delta_g h)\big)
- \frac13(\tr_g h)\,g.
\]
Then $\Pi^{TT}_g(h)$ is $g$--TT, it is a projection, and the bounds follow from elliptic estimates.
(If $\ker\mathcal L_{\bar g}\neq\{0\}$, impose the standard gauge condition
$\mathcal L_g^{-1}$ on the orthogonal complement; the same estimates hold.)
\end{proof}

\noindent In this last subsection, we formulate the geometric mechanism underlying the construction of the CMC initial data as stated in the main theoerm \ref{main}. The construction consists of three logically separate steps.

\noindent First, starting from a background metric of constant negative scalar curvature, we introduce a high-frequency transverse-traceless perturbation. The linearized trace-free Ricci tensor in such directions is governed by the Lichnerowicz Laplacian and is therefore of second-order size in the oscillation parameter, whereas the scalar curvature variation is only first-order and, in transverse-traceless gauge, is governed by contraction against the background trace-free Ricci tensor. This produces a regime in which the trace-free Ricci tensor is large in \(H^{2}\), while the normalized scalar curvature defect remains small in \(H^{2}\).

\noindent Second, on this manifold $(M,g)$, our task is to now find a $TT$ tensor $\Sigma$ that verifies the estimates 
\begin{align}
||\Sigma||_{H^{2}}=O(e^{-9a/10}),~||\Sigma||_{H^{3}}=O(e^{a/10})
\end{align}
so that we can use such $\Sigma$ as the free data for our constraint system that is to be solved via conformal method in the next step. In particular in the next (and third) step we perform the conformal transformation 
\[
g_{0}=\varphi^{\frac{4}{n-2}}g,
\qquad
\Sigma_{0}=\varphi^{-2}\Sigma.
\]
where we intend to use $\Sigma$ as the data for the momentum constraint 
\begin{align}
\label{eq:TT1}
\div_{g_{0}}\Sigma_{0}=0.
\end{align}
By covariance the $TT$ property of $\Sigma$ with respect to $g$ i.e., $\tr_{g}\Sigma=0$ and $\div_{g}\Sigma=0$ implies $\tr_{g_{0}}\Sigma_{0}=0$ and the momentum constraint (\ref{eq:TT1}). Therefore it is crucial to explicitly construct such $\Sigma$ on $(M,g)$. 

\noindent Third, we correct the scalar curvature defect by solving the Hamiltonian constraint through a conformal deformation. The resulting Lichnerowicz equation is solved perturbatively around the constant solution \(1\), and the resulting conformal factor is shown to be so close to \(1\) in $H^{4}$ Sobolev norm that the trace-free Ricci tensor changes only by a lower-order amount in \(H^{2}\).

\noindent We now record the quantitative form of the three-step construction described heuristically in the introduction. Throughout this subsection, \(M\) denotes a closed connected smooth \(3\)-manifold of negative Yamabe type. We fix once and for all a smooth metric \(\gamma\) belonging to a conformal class $[\widehat{\gamma}]$ satisfying
\[
R[\gamma]=-\frac23.
\]
All covariant derivatives, contractions, volume forms, and Sobolev norms are taken with respect to \(\gamma\), unless explicitly indicated otherwise.

\begin{proposition}
\label{prop:YorkTTperturb}
Let $(M^{3},\widehat{\gamma})$ be a smooth closed manifold of negative Yamabe type. Choose a smooth metric $\gamma\in[\widehat{\gamma}]$ such that
\[
R[\gamma]\equiv -\frac23.
\]
Assume, for simplicity, that $\gamma$ admits no nontrivial conformal Killing fields. Define the conformal Killing operator on $1$-forms by
\[
(\mathcal D_\gamma W)_{ij}
:=
\nabla_i W_j+\nabla_j W_i-\frac23(\nabla^kW_k)\gamma_{ij},
\]
and let $S^2_0T^*M$ denote the bundle of $\gamma$-trace-free symmetric $2$-tensors. Since $\delta_\gamma\mathcal D_\gamma$ is then invertible, the York projector
\[
\Pi_{TT}
:=
I-\mathcal D_\gamma(\delta_\gamma\mathcal D_\gamma)^{-1}\delta_\gamma
\]
is a well-defined bounded operator on $H^s(S^2_0T^*M)$ for every integer $s\ge0$, with range
\[
TT_\gamma
:=
\{h\in C^\infty(S^{2}T^{*}M):\tr_\gamma h=0,\ \delta_\gamma h=0\}.
\]
Consider the positive self-adjoint elliptic pseudodifferential operator of order two
\[
A_\gamma
:=
\Pi_{TT}(I+\nabla_\gamma^{*}\nabla_\gamma)\Pi_{TT}\big|_{L^2(TT_\gamma)}
\]
acting on $L^2(TT_\gamma)$.

\noindent Then there exist constants
\[
\varepsilon_{*}>0,\qquad C_m\ge1\ \ (m\ge0),\qquad c_m>0\ \ (m\ge0),
\]
depending only on $(M,\gamma)$, together with an $L^2(\gamma)$-orthonormal sequence of smooth tensors
\[
h_\nu\in TT_\gamma,\qquad \nu\in\mathbb N,
\]
and a sequence of positive numbers
\[
\Lambda_\nu\to+\infty,
\]
such that
\[
A_\gamma h_\nu=\Lambda_\nu h_\nu,
\]
and, for every integer $m\ge0$,
\[
\|h_\nu\|_{H^m(\gamma)}
\le C_m(1+\Lambda_\nu)^{m/2}
\]
for all $\nu$, while
\[
\|h_\nu\|_{H^m(\gamma)}
\ge c_m\Lambda_\nu^{m/2}
\]
for all sufficiently large $\nu$.

\noindent Moreover, if
\[
g_{\varepsilon,\nu}:=\gamma+\varepsilon h_\nu,
\qquad
0<\varepsilon\le \varepsilon_{*}\Lambda_\nu^{-3/2},
\]
then $g_{\varepsilon,\nu}$ is a smooth Riemannian metric and
\begin{align}
\label{eq:Tquantitative}
\|\mathfrak T[g_{\varepsilon,\nu}]\|_{H^2(\gamma)}
&\ge
c_{4}\,\varepsilon \Lambda_\nu^{2}
-
C\,\varepsilon^{2}\Lambda_\nu^{3}
-
C,\\
\label{eq:Rquantitative}
\left\|R[g_{\varepsilon,\nu}]+\frac23\right\|_{H^2(\gamma)}
&\le
C\,\varepsilon \Lambda_\nu
+
C\,\varepsilon^{2}\Lambda_\nu^{3},
\end{align}
where
\[
\mathfrak T[g]:=\text{Ric}[g]-\frac13R[g]\,g.
\]
\end{proposition}

\begin{proof} The basic idea is to perturb the metric along transverse-traceless ($TT$) direction. In addition, the frequency content of such perturbations is restricted to be very large. Such perturbations can modify the trace-free Ricci curvature $\mathfrak{T}[g]:=\text{Ric}[g]-\frac{1}{3}R[g]g$ in high enough Sobolev norm while keeping the scalar curvature almost invariant. 
The main idea is to construct a second order pseudo-differential operator (in the sense of H\"ordmander) with smooth $L^{2}$ normalized $TT$ family of eigentensors.  We divide the argument into several steps. 
Let
\[
L_\gamma:=\delta_\gamma\mathcal D_\gamma
\]
be the vector Laplacian associated with $\gamma$. Since $\gamma$ has no nontrivial conformal Killing fields, the kernel of $L_\gamma$ is trivial. Indeed, by the standard adjoint relation between $\delta_\gamma$ and $\mathcal D_\gamma$, one has by an exact similar calculation as in the proof of claim (\ref{important})
\[
\langle L_\gamma W,W\rangle_{L^2(\gamma)}
=
\frac12\|\mathcal D_\gamma W\|_{L^2(\gamma)}^{2},
\]
up to the harmless sign dictated by the convention for $\delta_\gamma$; in particular,
\[
L_\gamma W=0
\quad\Longrightarrow\quad
\mathcal D_\gamma W=0.
\]
Thus the assumption on conformal Killing fields implies $\ker L_\gamma=\{0\}$. Since $L_\gamma$ is a second-order strongly elliptic self-adjoint operator on the closed manifold $M$, elliptic Fredholm theory yields that
\[
L_\gamma:H^{s+2}(T^*M)\to H^s(T^*M)
\]
is an isomorphism for every integer $s\ge0$.

\noindent It follows that the operator
\[
\Pi_{TT}
=
I-\mathcal D_\gamma L_\gamma^{-1}\delta_\gamma
\]
is well defined and bounded on $H^s(S^2_0T^*M)$ for every $s\ge0$. We now verify that its range is precisely the TT space. Let $k\in H^s(S^2_0T^*M)$ and define
\[
W:=L_\gamma^{-1}\delta_\gamma k,\qquad k^{TT}:=k-\mathcal D_\gamma W.
\]
Since $\mathcal D_\gamma W$ is trace-free by construction, $k^{TT}$ is trace-free as well. Moreover,
\[
\delta_\gamma k^{TT}
=
\delta_\gamma k-\delta_\gamma\mathcal D_\gamma W
=
\delta_\gamma k-L_\gamma W
=
0.
\]
Hence $k^{TT}\in TT_\gamma$. Conversely, if $h\in TT_\gamma$, then $\delta_\gamma h=0$, so
\[
\Pi_{TT}h
=
h-\mathcal D_\gamma L_\gamma^{-1}\delta_\gamma h
=
h.
\]
Therefore $\Pi_{TT}$ is a projection with range $TT_\gamma$.

\noindent In particular, for each $s\ge0$ one has the topological direct sum decomposition
\[
H^s(S^2_0T^*M)
=
H^s(TT_\gamma)\oplus \mathcal D_\gamma H^{s+1}(T^*M).
\]
This is the usual York splitting \cite{york}. Since the principal symbol of $\Pi_{TT}$ is the orthogonal projection onto the algebraic subspace
\[
\{q\in S^2_0T_x^*M:\ \xi^i q_{ij}=0\},
\]
whose rank equals $2$ in dimension $3$, the range of $\Pi_{TT}$ is infinite-dimensional. Thus $TT_\gamma$ is an infinite-dimensional closed subspace of $L^2(S^2_0T^*M)$.

\smallskip
\noindent
Next we consider the high-frequency TT basis that we are going to use to perturb the metric.
We now consider
\[
A_\gamma
=
\Pi_{TT}(I+\nabla_\gamma^{*}\nabla_\gamma)\Pi_{TT}\big|_{L^2(TT_\gamma)}.
\]
Because $\Pi_{TT}$ is a self-adjoint pseudodifferential operator of order $0$ and $I+\nabla_\gamma^*\nabla_\gamma$ is a positive self-adjoint elliptic differential operator of order $2$, the operator $A_\gamma$ is a positive self-adjoint elliptic pseudodifferential operator of order $2$ on the Hilbert space $L^2(TT_\gamma)$. Its principal symbol on the TT symbol bundle is
\[
\sigma_2(A_\gamma)(x,\xi)
=
|\xi|_\gamma^2\,\text{Id} .
\]
In particular, $A_\gamma$ is elliptic and has a compact resolvent. Standard spectral theory for positive self-adjoint elliptic operators on closed manifolds therefore, yields an $L^2(\gamma)$-orthonormal basis of smooth eigentensors
\[
\{h_\nu\}_{\nu\ge1}\subset TT_\gamma
\]
with corresponding eigenvalues
\[
0<\Lambda_1\le\Lambda_2\le\cdots,\qquad \Lambda_\nu\to+\infty,
\]
such that
\[
A_\gamma h_\nu=\Lambda_\nu h_\nu.
\]
 We next record the Sobolev bounds for this eigenbasis. Since $A_\gamma$ is elliptic of order $2$, the graph norm of $(I+A_\gamma)^{m/2}$ is equivalent to the $H^m$ norm on $TT_\gamma$; that is, for each integer $m\ge0$ there exist constants $c_m',C_m'>0$ such that
\[
c_m'\|u\|_{H^m(\gamma)}
\le
\|(I+A_\gamma)^{m/2}u\|_{L^2(\gamma)}
\le
C_m'\|u\|_{H^m(\gamma)}
\]
for all $u\in H^m(TT_\gamma)$. Applying this to $u=h_\nu$ and using
\[
(I+A_\gamma)^{m/2}h_\nu=(1+\Lambda_\nu)^{m/2}h_\nu,
\qquad
\|h_\nu\|_{L^2(\gamma)}=1,
\]
we obtain
\[
c_m'(1+\Lambda_\nu)^{m/2}
\le
\|h_\nu\|_{H^m(\gamma)}
\le
C_m'(1+\Lambda_\nu)^{m/2}.
\]
After adjusting the constants, this implies
\[
\|h_\nu\|_{H^m(\gamma)}
\le C_m(1+\Lambda_\nu)^{m/2}
\]
for all $\nu$, and
\[
\|h_\nu\|_{H^m(\gamma)}
\ge c_m\Lambda_\nu^{m/2}
\]
for all sufficiently large $\nu$.
\noindent In particular,
\[
\|h_\nu\|_{H^2(\gamma)}\le C_2\Lambda_\nu,
\qquad
\|h_\nu\|_{H^4(\gamma)}\ge c_4\Lambda_\nu^2
\]
for all sufficiently large $\nu$.

\smallskip
\noindent
Now we show that the $TT$ perturbation that we apply to the metric preseves its Riemannian character. 
Since $\dim M=3$, the Sobolev embedding $H^{2}(M)\hookrightarrow C^0(M)$ gives
\[
\|h_\nu\|_{C^0(\gamma)}
\le
C\|h_\nu\|_{H^2(\gamma)}
\le
C\Lambda_\nu.
\]
Hence, if
\[
0<\varepsilon\le \varepsilon_*\Lambda_\nu^{-3/2},
\]
then
\[
\varepsilon\|h_\nu\|_{C^0(\gamma)}
\le
C\varepsilon_*\Lambda_\nu^{-1/2}.
\]
By choosing $\varepsilon_*>0$ sufficiently small, and then decreasing it once more to absorb the finitely many low-frequency modes, we may ensure that
\[
\varepsilon\|h_\nu\|_{C^0(\gamma)}\le \frac12
\]
for every $\nu$ under the above restriction on $\varepsilon$. Now let $X\in T_xM$. Then
\[
g_{\varepsilon,\nu}(X,X)
=
\gamma(X,X)+\varepsilon h_\nu(X,X)
\ge
\Bigl(1-\varepsilon\|h_\nu\|_{C^0(\gamma)}\Bigr)\gamma(X,X)
\ge
\frac12\,\gamma(X,X).
\]
Therefore, $g_{\varepsilon,\nu}$ is positive definite, hence a smooth Riemannian metric.

\noindent Now using the previous lemma \ref{nonlinearexpansion}, we obtain the necessary estimates for the trace-free Ricci curvature and the scalar curvature of the perturbed metric.
We write
\[
\mathfrak T[g]=\text{Ric}[g]-\frac13R[g]\,g.
\]
The map
\[
g\longmapsto \mathfrak T[g]
\]
is a smooth quasilinear differential operator of order two. Accordingly, for $k$ small in $H^4$, one has the Taylor expansion
\[
\mathfrak T[\gamma+k]
=
\mathfrak T[\gamma]
+
D\mathfrak T_\gamma[k]
+
\mathcal R_{\mathfrak T,\gamma}(k),
\]
where the remainder is quadratic in $k$ and its derivatives up to order two. More precisely, because curvature depends smoothly on the inverse metric and on the first and second derivatives of the metric, and because $H^2(M)$ is a Banach algebra in dimension $3$, there exists a neighborhood of the origin in $H^4(S^2T^*M)$ and a constant $C$ such that
\begin{equation}
\label{eq:RTremainder}
\|\mathcal R_{\mathfrak T,\gamma}(k)\|_{H^2(\gamma)}
\le
C\|k\|_{H^4(\gamma)}\|k\|_{H^2(\gamma)}
\end{equation}
for all $k$ in that neighborhood.

\noindent We now restrict to TT directions. Set
\[
L_\gamma:=D\mathfrak T_\gamma\big|_{TT_\gamma}.
\]
Since the principal second-order part of the linearized Ricci tensor is $-\frac12\nabla^*\nabla h$, and since the additional contributions coming from the scalar-curvature term are of lower order on TT tensors, the principal symbol of $L_\gamma$ is
\[
\sigma_2(L_\gamma)(x,\xi)
=
-\frac12|\xi|_\gamma^2\,\text{Id}
\]
on the TT symbol space. Thus $L_\gamma$ is elliptic on $TT_\gamma$.

\noindent Therefore elliptic regularity gives, for $u\in H^4(TT_\gamma)$,
\[
\|u\|_{H^4(\gamma)}
\le
C\bigl(\|L_\gamma u\|_{H^2(\gamma)}+\|u\|_{L^2(\gamma)}\bigr).
\]
Applying this estimate to $u=h_\nu$ and using $\|h_\nu\|_{L^2}=1$, we obtain
\[
\|L_\gamma h_\nu\|_{H^2(\gamma)}
\ge
C^{-1}\|h_\nu\|_{H^4(\gamma)}-C
\ge
c\,\Lambda_\nu^2-C.
\]
After decreasing $c$ if necessary, we may rewrite this as
\begin{equation}
\label{eq:Lgamma-lower}
\|L_\gamma h_\nu\|_{H^2(\gamma)}
\ge
c_4\Lambda_\nu^2-C
\end{equation}
for all $\nu$.

\noindent Now substitute $k=\varepsilon h_\nu$ into the Taylor formula. Since $h_\nu\in TT_\gamma$,
\[
\mathfrak T[g_{\varepsilon,\nu}]
=
\mathfrak T[\gamma]
+
\varepsilon L_\gamma h_\nu
+
\mathcal R_{\mathfrak T,\gamma}(\varepsilon h_\nu).
\]
Using \eqref{eq:RTremainder}, together with the Sobolev bounds for $h_\nu$, we find
\[
\|\mathcal R_{\mathfrak T,\gamma}(\varepsilon h_\nu)\|_{H^2(\gamma)}
\le
C\varepsilon^2
\|h_\nu\|_{H^4(\gamma)}
\|h_\nu\|_{H^2(\gamma)}
\le
C\varepsilon^2\Lambda_\nu^3.
\]
Hence, by the reverse triangle inequality,
\begin{align*}
\|\mathfrak T[g_{\varepsilon,\nu}]\|_{H^2(\gamma)}
&\ge
\varepsilon\|L_\gamma h_\nu\|_{H^2(\gamma)}
-
\|\mathfrak T[\gamma]\|_{H^2(\gamma)}
-
\|\mathcal R_{\mathfrak T,\gamma}(\varepsilon h_\nu)\|_{H^2(\gamma)}\\
&\ge
\varepsilon(c_4\Lambda_\nu^2-C)
-
C
-
C\varepsilon^2\Lambda_\nu^3.
\end{align*}
Since $\varepsilon\le1$ after decreasing $\varepsilon_*$ once more if necessary, the term $\varepsilon C$ may be absorbed into the final background constant, and we conclude that
\[
\|\mathfrak T[g_{\varepsilon,\nu}]\|_{H^2(\gamma)}
\ge
c_4\,\varepsilon\Lambda_\nu^2
-
C\varepsilon^2\Lambda_\nu^3
-
C.
\]
This proves \eqref{eq:Tquantitative}.

\smallskip
\noindent
Now we focus on the scalar curvature expansion.
The scalar curvature map is likewise a smooth quasilinear differential operator of order two, so one has
\[
R[\gamma+k]
=
R[\gamma]
+
DR_\gamma[k]
+
\mathcal R_{R,\gamma}(k),
\]
with
\begin{equation}
\label{eq:Rremainder}
\|\mathcal R_{R,\gamma}(k)\|_{H^2(\gamma)}
\le
C\|k\|_{H^4(\gamma)}\|k\|_{H^2(\gamma)}.
\end{equation}
Since $R[\gamma]\equiv-\frac23$, it remains to estimate the linear term. The standard formula for the linearization of scalar curvature is
\[
DR_\gamma[h]
=
-\Delta_\gamma(\tr_\gamma h)+\delta_\gamma\delta_\gamma h-\langle \text{Ric}[\gamma],h\rangle_\gamma.
\]
For $h=h_\nu\in TT_\gamma$, the first two terms vanish, and therefore
\[
DR_\gamma[h_\nu]
=
-\langle \text{Ric}[\gamma],h_\nu\rangle_\gamma.
\]
This is the main mechanism that cancels the principal part in the expansion of the scalar curvature and therefore does not cost frequency.
Because $\text{Ric}[\gamma]$ is a fixed smooth tensor, multiplication by $\text{Ric}[\gamma]$ is a bounded operator on $H^2$, and hence
\[
\|DR_\gamma[h_\nu]\|_{H^2(\gamma)}
\le
C\|h_\nu\|_{H^2(\gamma)}
\le
C\Lambda_\nu.
\]

\noindent Substituting $k=\varepsilon h_\nu$ into the expansion for $R$, and using \eqref{eq:Rremainder}, we obtain
\begin{align*}
\left\|R[g_{\varepsilon,\nu}]+\frac23\right\|_{H^2(\gamma)}
&\le
\varepsilon\|DR_\gamma[h_\nu]\|_{H^2(\gamma)}
+
\|\mathcal R_{R,\gamma}(\varepsilon h_\nu)\|_{H^2(\gamma)}\\
&\le
C\varepsilon\Lambda_\nu
+
C\varepsilon^2
\|h_\nu\|_{H^4(\gamma)}
\|h_\nu\|_{H^2(\gamma)}\\
&\le
C\varepsilon\Lambda_\nu
+
C\varepsilon^2\Lambda_\nu^3.
\end{align*}
This is exactly \eqref{eq:Rquantitative}. The proof is complete.
\end{proof}

\begin{remark}
Note that instead of a general negative Yamabe background, if it is Einstein, then one can simply use the spectrum of the Lichnerowicz Laplacian $\Delta_{L}$ since that preserves the transverse-traceless (TT) subspace of $S^{2}T^{*}M$. On a general closed manifold, this is not true and one ought use the TT projector machinery used in the proposition     
\end{remark}

\noindent Now in the corollary, we choose the smallness parameter $\epsilon$ appearing in the previous proposition \ref{prop:YorkTTperturb} and the Lichnerowicz eigenvalue $\lambda_{\nu}$ in terms of the largeness parameter $a$ in our study. This provides the estimates necessary in this context.
\begin{corollary}
\label{cor:separation-scales}
There exist constants \(a_{0}\geq 1\), \(c>0\), and \(C\geq 1\), depending only on \((M,\gamma)\), such that for every \(a\geq a_{0}\) there exist an index \(\nu(a)\) and a parameter \(\varepsilon(a)>0\) for which the metric
\[
g_{1,a}:=\gamma+\varepsilon(a) h_{\nu(a)}
\]
satisfies
\begin{align}
\label{eq:Tlargecor}
\|\mathfrak T[g_{1,a}]\|_{H^{2}(\gamma)}&=O(e^{a/10}),\\
\label{eq:Rsmallcor}
\left\|R[g_{1,a}]+\frac23\right\|_{H^{2}(\gamma)}&= O(e^{-77a/20}).
\end{align}
\end{corollary}

\begin{proof}
First, assume that $||\mathfrak{T}[\gamma]||_{H^{2}}\leq C$, where $C\in [0,O(e^{\frac{a}{10}-\alpha})]$ for $\alpha>0$ since if $||\mathfrak{T}[\gamma]||_{H^{2}}=O(e^{a/10})$, then there is nothing to prove. In particular, if $M$ admits an Einstein metric, then $C=0$ and in that case $\mathfrak{T}$ vanishes identically. Therefore, the data is large is in the sense that its large deviation from the Einstein background (if it exists on $M$). 
Fix
\[
\varepsilon(a):=e^{-8a}.
\]
Since \(\lambda_{\nu}\to +\infty\), for every sufficiently large \(a\) one may choose \(\nu(a)\) such that
\[
e^{\frac{81a}{20}}\leq \lambda_{\nu(a)}\leq 2e^{\frac{81a}{20}}.
\]
Then
\[
\varepsilon(a)\lambda_{\nu(a)}^{3/2}
\leq
Ce^{-8a}e^{\frac{243a}{40}}
=
Ce^{-77a/40},
\]
and so for \(a\) sufficiently large,
\[
\varepsilon(a)\leq \varepsilon_{\ast}\lambda_{\nu(a)}^{-3/2},
\]
and proposition ~\ref{prop:YorkTTperturb} applies.
\noindent Next we note, 
\[
\varepsilon(a)\lambda_{\nu(a)}^{2}
\geq
c\,e^{-8a}e^{\frac{81a}{10}}
=
c\,e^{\frac{a}{10}},
\]
while
\[
\varepsilon(a)\lambda_{\nu(a)}
\leq
Ce^{-8a}e^{\frac{81a}{20}}
=
Ce^{-79a/20},
\]
and
\[
\varepsilon(a)^{2}\lambda_{\nu(a)}^{3}
\leq
Ce^{-16a}e^{\frac{243a}{20}}
=
Ce^{-\frac{77a}{20}}.
\]
Substituting these bounds into the estimates obtained in the previous proposition \ref{prop:YorkTTperturb} i.e., into \eqref{eq:Tquantitative} and \eqref{eq:Rquantitative}, we obtain
\[
\|\mathfrak T[g_{1,a}]\|_{H^{2}(\gamma)}
\geq
c\,e^{\frac{a}{10}}
-
Ce^{-\frac{77a}{20}}
-
C,
\]
and
\[
\left\|R[g_{1,a}]+\frac23\right\|_{H^{2}(\gamma)}
\leq
Ce^{-\frac{79a}{20}}
+
Ce^{-\frac{77a}{20}}.
\]
For \(a\) sufficiently large, the first estimate implies
\[
\|\mathfrak T[g_{1,a}]\|_{H^{2}(\gamma)}\geq c\,e^{a/10},
\]
and the second implies
\[
\left\|R[g_{1,a}]+\frac23\right\|_{H^{2}(\gamma)}\leq C\,e^{-\frac{77a}{20}}<ce^{-2a}.
\]
This proves the corollary.
\end{proof}
\noindent Now we want to construct the final physical metric $g_{0}$ and the transverse-traceless second fundamental form $\Sigma_{0}$ that verify the constraint equations 
\begin{align}
R[g_{0,a}]+\frac23&=|\Sigma_{0,a}|_{g_{0,a}}^{2},\\
\div_{g_{0,a}}\Sigma_{0,a}&=0.
\end{align}
To this end, we perform the conformal transformation 
\begin{align}
 g_{0}=\varphi^{\frac{4}{n-2}}g,
\qquad
\Sigma_{0}=\varphi^{-2}\Sigma,   
\end{align}
where $g$ is chosen to be $g_{1,a}$ and $\Sigma$ is the free data which is transverse-traceless with respect to $g$ (as is standard in the conformal technique) that verifies 
\begin{align}
\label{eq:estimate}
||\Sigma||_{H^{2}}=O(e^{-9a/10}),~||\Sigma||_{H^{3}}=O(e^{a/10}).
\end{align}
Now question may arise: how to construct such a $TT$ tensor on $(M,g)$ that verifies the estimate (\ref{eq:estimate}). We do this now in the following proposition.
\begin{proposition}
\label{prop:TT_construction_separated_scales}
Let $M$ be a smooth closed manifold of dimension $n\geq 3$, and let $g$ be a smooth Riemannian metric on $M$. In particular, the conclusion applies when $M$ is of negative Yamabe type.

\noindent Then there exist constants
\[
a_{0}\geq 1,
\qquad
0<c\leq C<\infty,
\]
depending only on $(M,g)$, and for each $a\geq a_{0}$ a smooth symmetric $2$-tensor $\Sigma_{a}$ on $M$ such that
\[
\tr_{g}\Sigma_{a}=0,
\qquad
\div_{g}\Sigma_{a}=0,
\]
and
\[
c\,e^{-9a/10}\leq \|\Sigma_{a}\|_{H^{2}(M,g)}\leq C\,e^{-9a/10},
\qquad
c\,e^{a/10}\leq \|\Sigma_{a}\|_{H^{3}(M,g)}\leq C\,e^{a/10}.
\]
In particular,
\[
\|\Sigma_{a}\|_{H^{2}(M,g)}=O(e^{-9a/10}),
\qquad
\|\Sigma_{a}\|_{H^{3}(M,g)}=O(e^{a/10}),
\]
and the family $\Sigma_{a}$ is transverse-traceless with respect to $g$.
\end{proposition}

\begin{proof}
We divide the construction into three steps.

\medskip
\noindent  First we construct a high-frequency trace-free seed with one-derivative smaller divergence.
Choose a coordinate chart $U\subset M$ and smooth coordinates
\[
x=(x^{1},\dots,x^{n}):U\to B(0,2)\subset \mathbb R^{n}.
\]
Let
\[
\psi:=x^{1}\in C^{\infty}(U).
\]
Shrinking $U$ if necessary, we may assume that $d\psi$ is nowhere vanishing on $U$. Define the smooth unit vector field
\[
E_{1}:=\frac{\nabla \psi}{|\nabla \psi|_{g}}
\]
on $U$, and extend $E_{1}$ to a smooth local $g$-orthonormal frame
\[
E_{1},E_{2},E_{3},\dots,E_{n}
\]
on $U$. Let
\[
\theta^{1},\theta^{2},\theta^{3},\dots,\theta^{n}
\]
denote the dual coframe. Define the smooth symmetric $2$-tensor
\[
q:=\theta^{2}\otimes \theta^{2}-\theta^{3}\otimes \theta^{3}
\]
on $U$. Then note 
\[
\tr_{g} q = 0
\]
and, since $q(E_{1},\cdot)=0$, also
\[
q(\nabla \psi,\cdot)=0
\qquad\text{on }U.
\]

\noindent Fix a cutoff $\chi\in C_{c}^{\infty}(U)$ such that $0\leq \chi\leq 1$ and $\chi\equiv 1$ on some nonempty open set $V\Subset U$. For each parameter $\mu\geq 1$, define the modulation
\[
S_{\mu}:=\chi \cos(\mu\psi)\, q,
\]
extended by zero outside $U$. Then $S_{\mu}$ is a smooth symmetric $2$-tensor on $M$ and, because $q$ is trace-free,
\[
\tr_{g} S_{\mu}=0.
\]

\noindent We first record the $H^{m}$-size of $S_{\mu}$ for $m=2,3$.

\smallskip
\noindent Since $\chi$ and $q$ are fixed smooth tensors supported in $U$, repeated differentiation of $\cos(\mu\psi)$ yields
\[
\nabla^{j}\big(\cos(\mu\psi)\big)
=
\sum_{\ell=1}^{j} \mu^{\ell}\,A_{j,\ell},
\]
where each $A_{j,\ell}$ is a smooth tensor depending only on $g$, $\psi$, and finitely many derivatives thereof, multiplied by either $\sin(\mu\psi)$ or $\cos(\mu\psi)$. By Leibniz' rule, for each integer $j\geq 0$ there exists a constant $C_{j}$, depending only on $(M,g)$ and the auxiliary choices $\psi,\chi,q$, such that
\[
\|\nabla^{j} S_{\mu}\|_{L^{2}(M,g)}
\leq C_{j}\,\mu^{j}.
\]
Consequently, for $m=2,3$,
\begin{equation}
\label{eq:seed_upper}
\|S_{\mu}\|_{H^{m}(M,g)}\leq C\,\mu^{m}.
\end{equation}

\smallskip
\noindent Now consider the following. On the open set $V$ we have $\chi\equiv 1$, and hence
\[
S_{\mu}=\cos(\mu\psi)\,q
\qquad\text{on }V.
\]
Applying the covariant derivative repeatedly in the $E_{1}$-direction gives
\[
\nabla_{E_{1}}^{m} S_{\mu}
=
\mu^{m}\,b_{m}(\mu\psi)\,\big(E_{1}\psi\big)^{m}q
+
\mu^{m-1}R_{m,\mu},
\qquad m=2,3,
\]
where $b_{m}$ is either $\sin$ or $\cos$, and where the remainder $R_{m,\mu}$ is a smooth tensor field supported in $V$ satisfying
\[
\|R_{m,\mu}\|_{L^{2}(V,g)}\leq C.
\]
Since
\[
E_{1}\psi = |\nabla \psi|_{g}>0
\quad\text{on }V,
\]
and $q$ is nowhere zero on $V$, there exists $c_{0}>0$ such that
\[
\left\| b_{m}(\mu\psi)\,\big(E_{1}\psi\big)^{m}q\right\|_{L^{2}(V,g)}\geq c_{0}
\]
uniformly for all $\mu\geq 1$. It follows that
\[
\|\nabla^{m}S_{\mu}\|_{L^{2}(M,g)}
\geq
\|\nabla_{E_{1}}^{m}S_{\mu}\|_{L^{2}(V,g)}
\geq c_{1}\mu^{m}-C\mu^{m-1}.
\]
Hence, after increasing $\mu$ if necessary,
\[
\|S_{\mu}\|_{H^{m}(M,g)}\geq c\,\mu^{m},
\qquad m=2,3.
\]
Together with \eqref{eq:seed_upper}, this yields
\begin{equation}
\label{eq:seed_two_sided}
c\,\mu^{m}\leq \|S_{\mu}\|_{H^{m}(M,g)}\leq C\,\mu^{m},
\qquad m=2,3,
\end{equation}
for all sufficiently large $\mu$.

\noindent We next estimate the divergence of $S_{\mu}$. We use the convention
\[
(\div_{g} h)_{j}:=\nabla^{i}h_{ij}.
\]
A direct computation gives
\[
(\div_{g}S_{\mu})_{j}
=
\nabla^{i}\chi \,\cos(\mu\psi)\, q_{ij}
-\mu \chi \sin(\mu\psi)\,\nabla^{i}\psi\, q_{ij}
+\chi \cos(\mu\psi)\,\nabla^{i}q_{ij}.
\]
The middle term vanishes identically because $q(\nabla \psi,\cdot)=0$. Therefore
\[
\div_{g}S_{\mu}
=
(\nabla \chi)*q*\cos(\mu\psi)
+
\chi\,(\nabla q)*\cos(\mu\psi),
\]
where $*$ denotes a universal contraction. In particular, no term of size $\mu$ appears in the amplitude of $\div_{g}S_{\mu}$. This is precisely the reason of choosing $q$ such that $q(\nabla\psi,\cdot)=0$. Differentiating this identity and arguing as above, we obtain for $m=2,3$,
\begin{equation}
\label{eq:div_seed}
\|\div_{g}S_{\mu}\|_{H^{m-1}(M,g)}\leq C\,\mu^{m-1}.
\end{equation}

\noindent Now we explicitly construct a $TT$ tensor through York-projection \cite{york}.
Let
\[
(\mathbb L_{g}W)_{ij}
:=
\nabla_{i}W_{j}
+
\nabla_{j}W_{i}
-
\frac{2}{n}(\div_{g}W)\,g_{ij}
\]
and denote the conformal Killing operator acting on one-forms $W$. By construction,
\[
\tr_{g}(\mathbb L_{g}W)=0
\]
for every one-form $W$. Define the vector Laplacian
\[
\Delta_{\mathbb L}W:=-\div_{g}(\mathbb L_{g}W).
\]
This is a second-order self-adjoint elliptic operator on one-forms. Moreover, by a similar computation as the previous case
\[
\int_{M}\langle \Delta_{\mathbb L}W,W\rangle\,d\mu_{g}
=
\frac12\int_{M} |\mathbb L_{g}W|_{g}^{2}\,d\mu_{g},
\]
and hence
\[
\ker(\Delta_{\mathbb L})=\ker(\mathbb L_{g}),
\]
the finite-dimensional space of conformal Killing one-forms.

\noindent We claim that $\div_{g}S_{\mu}$ is $L^{2}$-orthogonal to $\ker(\Delta_{\mathbb L})$. Indeed, let $X\in \ker(\Delta_{\mathbb L})=\ker(\mathbb L_{g})$. Since $S_{\mu}$ is trace-free, the formal adjoint relation gives
\[
\int_{M}\langle \div_{g}S_{\mu},X\rangle\,d\mu_{g}
=
-\frac12\int_{M}\langle S_{\mu},\mathbb L_{g}X\rangle\,d\mu_{g}
=0.
\]
Therefore $\div_{g}S_{\mu}$ belongs to the range of $\Delta_{\mathbb L}$, and there exists a unique one-form $W_{\mu}$ satisfying
\[
W_{\mu}\perp \ker(\Delta_{\mathbb L}) \quad\text{in }L^{2}(M,g),
\]
and
\begin{equation}
\label{eq:York_eqn}
\Delta_{\mathbb L}W_{\mu}=\div_{g}S_{\mu}.
\end{equation}
Define
\[
\tau_{\mu}:=S_{\mu}+\mathbb L_{g}W_{\mu}.
\]
Since both $S_{\mu}$ and $\mathbb L_{g}W_{\mu}$ are trace-free, we have
\[
\tr_{g}\tau_{\mu}=0.
\]
Also, by \eqref{eq:York_eqn},
\[
\div_{g}\tau_{\mu}
=
\div_{g}S_{\mu}+\div_{g}(\mathbb L_{g}W_{\mu})
=
\div_{g}S_{\mu}-\Delta_{\mathbb L}W_{\mu}
=0.
\]
Thus $\tau_{\mu}$ is transverse-traceless.

\smallskip
\noindent We now estimate the correction term. Since $\Delta_{\mathbb L}$ is elliptic and invertible on the $L^{2}$-orthogonal complement of its kernel, standard elliptic regularity yields, for $m=2,3$,
\begin{equation}
\label{eq:elliptic_W}
\|W_{\mu}\|_{H^{m+1}(M,g)}
\leq
C\,\|\div_{g}S_{\mu}\|_{H^{m-1}(M,g)}.
\end{equation}
Combining \eqref{eq:elliptic_W} with \eqref{eq:div_seed}, we obtain
\[
\|W_{\mu}\|_{H^{m+1}(M,g)}\leq C\,\mu^{m-1},
\qquad m=2,3.
\]
Since $\mathbb L_{g}$ is first order,
\begin{equation}
\label{eq:LW_est}
\|\mathbb L_{g}W_{\mu}\|_{H^{m}(M,g)}
\leq
C\,\|W_{\mu}\|_{H^{m+1}(M,g)}
\leq C\,\mu^{m-1},
\qquad m=2,3.
\end{equation}
\noindent Now we obtain the two-sided bounds for the TT tensor by the choice of an appropriate scale. By definition,
\[
\tau_{\mu}=S_{\mu}+\mathbb L_{g}W_{\mu}.
\]
From \eqref{eq:seed_two_sided} and \eqref{eq:LW_est}, for $m=2,3$ we have
\[
\|\tau_{\mu}\|_{H^{m}}
\leq
\|S_{\mu}\|_{H^{m}}+\|\mathbb L_{g}W_{\mu}\|_{H^{m}}
\leq C\mu^{m},
\]
and also
\[
\|\tau_{\mu}\|_{H^{m}}
\geq
\|S_{\mu}\|_{H^{m}}-\|\mathbb L_{g}W_{\mu}\|_{H^{m}}
\geq
c\mu^{m}-C\mu^{m-1}.
\]
Therefore, for all sufficiently large $\mu$,
\begin{equation}
\label{eq:tau_two_sided}
c\,\mu^{m}\leq \|\tau_{\mu}\|_{H^{m}(M,g)}\leq C\,\mu^{m},
\qquad m=2,3.
\end{equation}

\noindent Now let
\[
\mu_{a}:=e^{a},
\qquad
A_{a}:=e^{-29a/10},
\]
and define
\[
\Sigma_{a}:=A_{a}\,\tau_{\mu_{a}}
=
e^{-29a/10}\tau_{e^{a}}.
\]
Since $\tau_{e^{a}}$ is transverse-traceless, so is $\Sigma_{a}$. Using \eqref{eq:tau_two_sided} with $\mu=e^{a}$, we find
\[
\|\Sigma_{a}\|_{H^{2}(M,g)}
=
e^{-29a/10}\|\tau_{e^{a}}\|_{H^{2}(M,g)}
\sim
e^{-29a/10}e^{2a}
=
e^{-9a/10},
\]
and
\[
\|\Sigma_{a}\|_{H^{3}(M,g)}
=
e^{-29a/10}\|\tau_{e^{a}}\|_{H^{3}(M,g)}
\sim
e^{-29a/10}e^{3a}
=
e^{a/10}.
\]
More precisely, there exist $a_{0}\geq 1$ and constants $0<c\leq C<\infty$ such that for all $a\geq a_{0}$,
\[
c\,e^{-9a/10}\leq \|\Sigma_{a}\|_{H^{2}(M,g)}\leq C\,e^{-9a/10},
\qquad
c\,e^{a/10}\leq \|\Sigma_{a}\|_{H^{3}(M,g)}\leq C\,e^{a/10}.
\]
This completes the proof.
\end{proof}

\begin{proposition}[Existence of a unique positive solution to the Lichnerowicz equation]
\label{prop:conformal-correction}
Fix a bounded-geometry class \(\mathcal G\) of smooth Riemannian metrics on the closed
\(3\)-manifold \(M\). Then there exist constants
\[
\delta_{0}>0,\qquad C\ge 1,
\]
depending only on \(\mathcal G\), with the following property.

\noindent Let \(g\in \mathcal G\), and let \(\Sigma\) be a smooth symmetric \(2\)-tensor satisfying
\[
\tr_{g}\Sigma=0,\qquad \div_{g}\Sigma=0,
\]
together with
\[
\left\|R[g]+\frac23\right\|_{H^{2}(g)}\le \delta_{0},
\qquad
\|\Sigma\|_{H^{2}(g)}\le \delta_{0}.
\]
Define
\[
\mathcal E(g,\Sigma):=R[g]+\frac23-|\Sigma|_{g}^{2}.
\]
Then there exists a unique positive solution
\[
\varphi\in H^{4}(M)
\]
of the Lichnerowicz equation
\begin{equation}
\label{eq:Lichnerowicz-exact-corrected}
-8\Delta_{g}\varphi + R[g]\varphi + \frac23 \varphi^{5} - |\Sigma|_{g}^{2}\varphi^{-7}=0.
\end{equation}
Moreover, \(\varphi\) is smooth, and if one defines
\[
g_{0}:=\varphi^{4}g,\qquad \Sigma_{0}:=\varphi^{-2}\Sigma,
\]
then
\begin{align}
R[g_{0}]+\frac23&=|\Sigma_{0}|_{g_{0}}^{2},\\
\div_{g_{0}}\Sigma_{0}&=0,\\
\tr_{g_{0}}\Sigma_{0}&=0.
\end{align}
In addition,
\begin{align}
\label{eq:phi-close-corrected}
\|\varphi-1\|_{H^{4}(g)}
&\leq
C\|\mathcal E(g,\Sigma)\|_{H^{2}(g)},\\
\label{eq:T-change-corrected}
\|\mathfrak T[g_{0}]-\mathfrak T[g]\|_{H^{2}(g)}
&\leq
C\|\mathcal E(g,\Sigma)\|_{H^{2}(g)}.
\end{align}
\end{proposition}

\begin{proof}
Set
\[
S:=|\Sigma|_{g}^{2},
\qquad
\rho:=R[g]+\frac23.
\]
Then define the following
\[
\mathcal E(g,\Sigma)=\rho-S
\]
which is small in $H^{2}$ norm. 
Since \(M\) is \(3\)-dimensional and \(g\in\mathcal G\), the Sobolev embeddings and
multiplication estimates are uniform in \(g\). In particular,
\[
\|S\|_{H^{2}(g)}\le C\|\Sigma\|_{H^{2}(g)}^{2},
\qquad
\|S\|_{C^{0}(g)}\le C\|\Sigma\|_{H^{2}(g)}^{2},
\qquad
\|\rho\|_{C^{0}(g)}\le C\|\rho\|_{H^{2}(g)}.
\]
Hence, for \(\delta_{0}\) sufficiently small,
\begin{equation}
\label{eq:defect-small-proof}
\|\mathcal E(g,\Sigma)\|_{H^{2}(g)}
\le C\delta_{0},
\qquad
\|\rho\|_{C^{0}(g)}+\|S\|_{C^{0}(g)}\le C\delta_{0}.
\end{equation}

\noindent We write
\[
\varphi=1+u
\]
and define
\[
\mathcal F_{g,\Sigma}(\varphi)
:=
-8\Delta_{g}\varphi + R[g]\varphi + \frac23\varphi^{5}-S\varphi^{-7}.
\]
Then \eqref{eq:Lichnerowicz-exact-corrected} is equivalent to
\[
\mathcal F_{g,\Sigma}(1+u)=0.
\]
Moreover,
\[
\mathcal F_{g,\Sigma}(1)=\rho-S=\mathcal E(g,\Sigma).
\]

\noindent First, we understand the linearization of $\mathcal F_{g,\Sigma}$. The linearization at \(u=0\) is
\[
L_{g,\Sigma}u
:=
D\mathcal F_{g,\Sigma}|_{\varphi=1}[u]
=
-8\Delta_{g}u+\Bigl(R[g]+\frac{10}{3}+7S\Bigr)u.
\]
Since
\[
R[g]+\frac{10}{3}+7S
=
\frac83+\rho+7S,
\]
the pointwise bound \eqref{eq:defect-small-proof} implies, after shrinking \(\delta_{0}\),
that
\[
R[g]+\frac{10}{3}+7S\ge 1
\qquad\text{on }M.
\]
Therefore
\[
\int_{M} \langle L_{g,\Sigma}u,u\rangle\,d\mu_{g}
=
8\|\nabla u\|_{L^{2}(g)}^{2}
+
\int_{M}\Bigl(R[g]+\frac{10}{3}+7S\Bigr)u^{2}\,d\mu_{g}
\ge c\|u\|_{H^{1}(g)}^{2}.
\]
Hence \(L_{g,\Sigma}\) has trivial kernel. Since it is a strongly elliptic second-order
operator on a closed manifold, it is Fredholm of index zero, and thus
\[
L_{g,\Sigma}\colon H^{4}(M)\to H^{2}(M)
\]
is an isomorphism. By uniform elliptic regularity on the bounded-geometry class
\(\mathcal G\), there is a constant \(K\) such that
\begin{equation}
\label{eq:elliptic-L-corrected}
\|u\|_{H^{4}(g)}
\le
K\|L_{g,\Sigma}u\|_{H^{2}(g)}.
\end{equation}

\noindent Next we isolate the nonlinear remainder. For \(u\) sufficiently small in \(C^{0}\),
\[
(1+u)^{5}=1+5u+Q_{5}(u),
\qquad
(1+u)^{-7}=1-7u+Q_{-7}(u),
\]
where \(Q_{5}(u)\) and \(Q_{-7}(u)\) vanish quadratically at \(u=0\). Thus
\[
\mathcal F_{g,\Sigma}(1+u)
=
\mathcal E(g,\Sigma)+L_{g,\Sigma}u+\mathcal N_{g,\Sigma}(u),
\]
with
\[
\mathcal N_{g,\Sigma}(u)
=
\frac23 Q_{5}(u)-S\,Q_{-7}(u).
\]

\noindent Let \(C_{\mathrm{emb}}\) denote the uniform norm of the embedding
\(H^{4}(M)\hookrightarrow C^{0}(M)\) on \(\mathcal G\), and set
\[
\rho_{*}:=\frac{1}{4C_{\mathrm{emb}}}.
\]
If \(\|u\|_{H^{4}(g)}\le \rho_{*}\), then \(\|u\|_{C^{0}(g)}\le \frac14\), so
\[
\frac34\le 1+u\le \frac54.
\]
On this set, uniform Moser estimates imply
\begin{align}
\label{eq:Nquad-corrected}
\|\mathcal N_{g,\Sigma}(u)\|_{H^{2}(g)}
&\le
C\|u\|_{H^{4}(g)}^{2},\\
\label{eq:Nlip-corrected}
\|\mathcal N_{g,\Sigma}(u)-\mathcal N_{g,\Sigma}(v)\|_{H^{2}(g)}
&\le
C\bigl(\|u\|_{H^{4}(g)}+\|v\|_{H^{4}(g)}\bigr)\|u-v\|_{H^{4}(g)}.
\end{align}

\noindent Define
\[
\Phi(u):=
-L_{g,\Sigma}^{-1}\bigl(\mathcal E(g,\Sigma)+\mathcal N_{g,\Sigma}(u)\bigr).
\]
Let
\[
r:=2K\|\mathcal E(g,\Sigma)\|_{H^{2}(g)}.
\]
If \(\delta_{0}\) is sufficiently small, then by \eqref{eq:defect-small-proof},
\[
r\le \rho_{*},
\qquad
CKr\le \frac12,
\qquad
CKr^{2}\le \frac r2.
\]
Now let \(\|u\|_{H^{4}(g)}\le r\). Using \eqref{eq:elliptic-L-corrected} and
\eqref{eq:Nquad-corrected},
\[
\|\Phi(u)\|_{H^{4}(g)}
\le
K\|\mathcal E(g,\Sigma)\|_{H^{2}(g)}
+
K\|\mathcal N_{g,\Sigma}(u)\|_{H^{2}(g)}
\le
\frac r2+CKr^{2}
\le r.
\]
Hence \(\Phi\) maps the closed \(H^{4}\)-ball \(B_{r}\) into itself. Likewise, for
\(u,v\in B_{r}\), \eqref{eq:Nlip-corrected} gives
\[
\|\Phi(u)-\Phi(v)\|_{H^{4}(g)}
\le
K\|\mathcal N_{g,\Sigma}(u)-\mathcal N_{g,\Sigma}(v)\|_{H^{2}(g)}
\le
2CKr\,\|u-v\|_{H^{4}(g)}
\le
\frac12\|u-v\|_{H^{4}(g)}.
\]
Thus \(\Phi\) is a contraction on \(B_{r}\). By Banach's fixed-point theorem, there exists
a unique \(u\in B_{r}\) such that \(\Phi(u)=u\). Setting \(\varphi:=1+u\), we obtain a
solution of \eqref{eq:Lichnerowicz-exact-corrected} with
\[
\|\varphi-1\|_{H^{4}(g)}=\|u\|_{H^{4}(g)}
\le
C\|\mathcal E(g,\Sigma)\|_{H^{2}(g)},
\]
which proves \eqref{eq:phi-close-corrected}. Since \(\|u\|_{C^{0}(g)}\le \frac14\), we also
have \(\varphi\ge \frac34>0\).

\noindent We now prove uniqueness among \emph{all} positive solutions. Let \(\varphi\) be any
positive solution of \eqref{eq:Lichnerowicz-exact-corrected}. Let
\[
m:=\min_{M}\varphi,
\qquad
M:=\max_{M}\varphi.
\]
At a point of maximum,
\[
0
=
-8\Delta_{g}\varphi + R[g]\varphi + \frac23\varphi^{5}-S\varphi^{-7}
\ge
R[g]M+\frac23 M^{5}-S M^{-7},
\]
since $\Delta_{g} \varphi<0$ at that point.
Hence
\[
\frac23 M(M^{4}-1)\le |\rho|\,M + S M^{-7}.
\]
Similarly, at a point of minimum,
\[
0
=
-8\Delta_{g}\varphi + R[g]\varphi + \frac23\varphi^{5}-S\varphi^{-7}
\le
R[g]m+\frac23 m^{5}-S m^{-7},
\]
since $\Delta_{g} \varphi>0$ at that point.
Hence
\[
\frac23 m(1-m^{4})\le |\rho|\,m + S m^{-7}.
\]
Using \eqref{eq:defect-small-proof}, one now chooses \(\delta_{0}\) so small that these two
inequalities force
\[
\frac34\le m\le M\le \frac54.
\]
Thus every positive solution takes values in \([\frac34,\frac54]\).

\noindent  Now let \(\varphi_{1},\varphi_{2}\) be two positive solutions, and set
\[
w:=\varphi_{1}-\varphi_{2}.
\]
Subtracting the two equations and using the mean value theorem pointwise yields
\[
-8\Delta_{g}w + a(x)w=0,
\]
where
\[
a(x)
=
R[g]
+
\frac{10}{3}\theta_{1}(x)^{4}
+
7S(x)\theta_{2}(x)^{-8},
\]
for some \(\theta_{1}(x),\theta_{2}(x)\in[\frac34,\frac54]\). Hence
\[
a(x)\ge -\|\rho\|_{C^{0}(g)}-\frac23+\frac{10}{3}\Bigl(\frac34\Bigr)^{4}.
\]
After possibly shrinking \(\delta_{0}\) once more, the right-hand side is bounded below by
a positive constant \(c_{*}>0\). Therefore
\[
-8\Delta_{g}w+c_{*}w\le 0
\qquad\text{and}\qquad
-8\Delta_{g}(-w)+c_{*}(-w)\le 0,
\]
since the order of $\varphi_{1}$ and $\varphi_{2}$ here does not matter.
So the maximum principle implies \(w\equiv0\). Thus the positive solution is unique.

\noindent Define
\[
g_{0}:=\varphi^{4}g,\qquad \Sigma_{0}:=\varphi^{-2}\Sigma.
\]
The standard conformal covariance identities in dimension \(3\) give
\[
R[g_{0}]
=
\varphi^{-5}\bigl(-8\Delta_{g}\varphi+R[g]\varphi\bigr),
\qquad
|\Sigma_{0}|_{g_{0}}^{2}
=
\varphi^{-12}|\Sigma|_{g}^{2}.
\]
Hence \eqref{eq:Lichnerowicz-exact-corrected} is equivalent to
\[
R[g_{0}]+\frac23=|\Sigma_{0}|_{g_{0}}^{2}.
\]
Since \(\Sigma\) is \(g\)-trace-free and \(g\)-divergence-free, the standard conformal
transformation law for TT tensors implies
\[
\tr_{g_{0}}\Sigma_{0}=0,
\qquad
\div_{g_{0}}\Sigma_{0}=0.
\]

\noindent It remains to prove \eqref{eq:T-change-corrected}. In dimension \(3\),
\[
\mathfrak T[\varphi^{4}g]
=
\mathfrak T[g]
-
2\varphi^{-1}\Bigl(\nabla^{2}\varphi-\frac13(\Delta_{g}\varphi)g\Bigr)
+
6\varphi^{-2}\Bigl(d\varphi\otimes d\varphi-\frac13|\nabla\varphi|_{g}^{2}g\Bigr).
\]
Therefore
\begin{align*}
\mathfrak T[g_{0}]-\mathfrak T[g]
&=
-2\varphi^{-1}\Bigl(\nabla^{2}\varphi-\frac13(\Delta_{g}\varphi)g\Bigr)\\
&\qquad
+
6\varphi^{-2}\Bigl(d\varphi\otimes d\varphi-\frac13|\nabla\varphi|_{g}^{2}g\Bigr).
\end{align*}
Since \(\varphi\in[\frac34,\frac54]\) pointwise and
\(\|\varphi-1\|_{H^{4}(g)}\le C\|\mathcal E(g,\Sigma)\|_{H^{2}(g)}\), uniform composition
estimates on the bounded-geometry class imply
\[
\|\varphi^{-1}\|_{H^{4}(g)}+\|\varphi^{-2}\|_{H^{4}(g)}\le C.
\]
Using that \(H^{2}(M)\) is an algebra in dimension \(3\),
\[
\left\|
\varphi^{-1}\Bigl(\nabla^{2}\varphi-\frac13(\Delta_{g}\varphi)g\Bigr)
\right\|_{H^{2}(g)}
\le
C\|\varphi-1\|_{H^{4}(g)},
\]
and
\[
\left\|
\varphi^{-2}\Bigl(d\varphi\otimes d\varphi-\frac13|\nabla\varphi|_{g}^{2}g\Bigr)
\right\|_{H^{2}(g)}
\le
C\|\varphi-1\|_{H^{4}(g)}^{2}.
\]
Hence
\[
\|\mathfrak T[g_{0}]-\mathfrak T[g]\|_{H^{2}(g)}
\le
C\|\varphi-1\|_{H^{4}(g)}
+
C\|\varphi-1\|_{H^{4}(g)}^{2}.
\]
For \(\delta_{0}\) sufficiently small, the quadratic term is absorbed into the linear one,
and \eqref{eq:phi-close-corrected} yields
\[
\|\mathfrak T[g_{0}]-\mathfrak T[g]\|_{H^{2}(g)}
\le
C\|\mathcal E(g,\Sigma)\|_{H^{2}(g)}.
\]
This proves \eqref{eq:T-change-corrected}.

\noindent Finally, since \(g\) and \(\Sigma\) are smooth and \(\varphi\) is bounded above and below
away from zero, standard elliptic bootstrapping applied to
\eqref{eq:Lichnerowicz-exact-corrected} shows that \(\varphi\in C^{\infty}(M)\).
\end{proof}
\begin{corollary}[Persistence of a large \(H^{2}\)-norm of the trace-free Ricci tensor]
\label{cor:persistence}
Under the hypotheses of Proposition~\ref{prop:conformal-correction}, let \(A>0\). Assume in addition that
\[
\|\mathfrak T[g]\|_{H^{2}(g)}\ge A,
\qquad
\|\mathcal E(g,\Sigma)\|_{H^{2}(g)}\le \frac{A}{2C},
\]
where \(C\) is the constant in \eqref{eq:T-change-corrected}. Then
\[
\|\mathfrak T[g_{0}]\|_{H^{2}(g)}\ge \frac{A}{2}.
\]
\end{corollary}

\begin{proof}
By \eqref{eq:T-change-corrected},
\[
\|\mathfrak T[g_{0}]-\mathfrak T[g]\|_{H^{2}(g)}
\le
C\|\mathcal E(g,\Sigma)\|_{H^{2}(g)}
\le
\frac{A}{2}.
\]
Therefore, by the reverse triangle inequality,
\[
\|\mathfrak T[g_{0}]\|_{H^{2}(g)}
\ge
\|\mathfrak T[g]\|_{H^{2}(g)}
-
\|\mathfrak T[g_{0}]-\mathfrak T[g]\|_{H^{2}(g)}
\ge
A-\frac{A}{2}
=
\frac{A}{2}.
\]
This proves the claim.
\end{proof}
\noindent The following corollary provides the exact initial data that we can address in this study.
\begin{corollary}[Exact CMC initial data]
\label{cor:final-CMC}
There exist constants \(a_{0}\ge 1\) and \(0<c\le C\), depending only on \((M,\gamma)\), with the following property. For every \(a\ge a_{0}\), there exists a smooth initial data set \((g_{0,a},\Sigma_{0,a})\) on \(M\) such that
\begin{align}
R[g_{0,a}]+\frac23&=|\Sigma_{0,a}|_{g_{0,a}}^{2},\\
\div_{g_{0,a}}\Sigma_{0,a}&=0,\\
\tr_{g_{0,a}}\Sigma_{0,a}&=0.
\end{align}
Moreover,
\begin{equation}
\label{eq:g0a-gamma-equivalence}
C^{-1}\gamma\le g_{0,a}\le C\gamma,
\end{equation}
and, with all Sobolev norms and covariant derivatives taken with respect to \(\gamma\),
\begin{equation}
\label{eq:final-CMC-upper}
\sum_{I=0}^{3}\|\nabla_\gamma^{I}\Sigma_{0,a}\|_{L^{2}(\gamma)}
+\sum_{I=0}^{2}\Bigl(
\|\nabla_\gamma^{I}\mathfrak T[g_{0,a}]\|_{L^{2}(\gamma)}
+
\|e^{a}\nabla_\gamma^{I}\Sigma_{0,a}\|_{L^{2}(\gamma)}
\Bigr)
\le C e^{a/10},
\end{equation}
and moreover
\begin{equation}
\label{eq:final-CMC-lower}
\sum_{I=0}^{3}\|\nabla_\gamma^{I}\Sigma_{0,a}\|_{L^{2}(\gamma)}\lesssim e^{a/10},\sum_{I=0}^{2}\Bigl(
\|\nabla_\gamma^{I}\mathfrak T[g_{0,a}]\|_{L^{2}(\gamma)}
\Bigr)
\lesssim e^{a/10}, 
\|e^{a}\nabla_\gamma^{I}\Sigma_{0,a}\|_{L^{2}(\gamma)}\lesssim e^{a/10},
\end{equation}
for constants $\leq C$.
\end{corollary}

\begin{proof}
We combine the three ingredients established earlier in the construction of the initial data: the seed metric with controlled scalar-curvature defect and large \(H^{2}\)-size of the renormalized trace-free Ricci tensor, the construction of a transverse-traceless tensor of the required size, and the perturbative conformal correction which solves the Hamiltonian constraint exactly.

\medskip

\noindent
By the seed-metric construction proved earlier, there exist constants \(a_{1}\ge 1\) and \(C\ge 1\), depending only on \((M,\gamma)\), such that for every \(a\ge a_{1}\) there exists a smooth metric \(\widehat g_{a}\) on \(M\) satisfying
\begin{equation}
\label{eq:seed-metric-cor}
C^{-1}\gamma\le \widehat g_{a}\le C\gamma,
\end{equation}
together with
\begin{equation}
\label{eq:seed-scalar-cor}
\left\|R[\widehat g_{a}]+\frac23\right\|_{H^{2}(\gamma)}
\le C e^{-9a/5},
\qquad
\left\|R[\widehat g_{a}]+\frac23\right\|_{H^{3}(\gamma)}
\le C e^{a/5},
\end{equation}
and
\begin{equation}
\label{eq:seed-T-cor}
\|\mathfrak T[\widehat g_{a}]\|_{H^{2}(\gamma)}
\le C e^{a/10}.
\end{equation}

\medskip

\noindent By the transverse-traceless construction proved earlier, after possibly increasing \(a_{1}\), there exists a smooth symmetric \(2\)-tensor \(\widehat\Sigma_{a}\) such that
\begin{equation}
\label{eq:seed-TT-cor}
\div_{\widehat g_{a}}\widehat\Sigma_{a}=0,
\qquad
\tr_{\widehat g_{a}}\widehat\Sigma_{a}=0,
\end{equation}
and
\begin{equation}
\label{eq:seed-sigma-cor}
\|\widehat\Sigma_{a}\|_{H^{2}(\gamma)}\le C e^{-9a/10},
\qquad
\|\widehat\Sigma_{a}\|_{H^{3}(\gamma)}\le C e^{a/10}.
\end{equation}

\noindent Since \(M\) is a closed \(3\)-manifold, \(H^{2}(M)\) and \(H^{3}(M)\) are Banach algebras. Using also the uniform equivalence \eqref{eq:seed-metric-cor}, we obtain
\begin{equation}
\label{eq:sigma-square-H2-cor}
\bigl\||\widehat\Sigma_{a}|_{\widehat g_{a}}^{2}\bigr\|_{H^{2}(\gamma)}
\le C \|\widehat\Sigma_{a}\|_{H^{2}(\gamma)}^{2}
\le C e^{-9a/5},
\end{equation}
and similarly
\begin{equation}
\label{eq:sigma-square-H3-cor}
\bigl\||\widehat\Sigma_{a}|_{\widehat g_{a}}^{2}\bigr\|_{H^{3}(\gamma)}
\le C \|\widehat\Sigma_{a}\|_{H^{3}(\gamma)}^{2}
\le C e^{a/5}.
\end{equation}

\noindent Therefore the Hamiltonian defect of the seed pair \((\widehat g_{a},\widehat\Sigma_{a})\) obeys
\begin{equation}
\label{eq:seed-defect-cor}
\left\|R[\widehat g_{a}]+\frac23-|\widehat\Sigma_{a}|_{\widehat g_{a}}^{2}\right\|_{H^{2}(\gamma)}
\le C e^{-9a/5},
\end{equation}
and likewise
\begin{equation}
\label{eq:seed-defect-H3-cor}
\left\|R[\widehat g_{a}]+\frac23-|\widehat\Sigma_{a}|_{\widehat g_{a}}^{2}\right\|_{H^{3}(\gamma)}
\le C e^{a/5}.
\end{equation}

\medskip

\noindent
We now apply the perturbative conformal correction theorem to the seed pair \((\widehat g_{a},\widehat\Sigma_{a})\). It yields a smooth positive function \(\varphi_{a}\) such that, upon setting
\begin{equation}
\label{eq:conformal-cor}
g_{0,a}:=\varphi_{a}^{4}\widehat g_{a},
\qquad
\Sigma_{0,a}:=\varphi_{a}^{-2}\widehat\Sigma_{a},
\end{equation}
the pair \((g_{0,a},\Sigma_{0,a})\) satisfies
\begin{align}
R[g_{0,a}]+\frac23&=|\Sigma_{0,a}|_{g_{0,a}}^{2},\\
\div_{g_{0,a}}\Sigma_{0,a}&=0,\\
\tr_{g_{0,a}}\Sigma_{0,a}&=0.
\end{align}
Moreover, the conformal theorem yields the perturbative estimate
\begin{equation}
\label{eq:phi-small-cor}
\|\varphi_{a}-1\|_{H^{4}(\gamma)}\le C e^{-9a/5}.
\end{equation}

\noindent Since \(H^{4}(M)\hookrightarrow C^{2}(M)\) on a closed \(3\)-manifold, \eqref{eq:phi-small-cor} implies
\[
\|\varphi_{a}-1\|_{L^{\infty}(M)}\le C e^{-9a/5}.
\]
\noindent Hence, after increasing \(a_{0}\ge a_{1}\) if necessary, we may assume
\[
\frac12\le \varphi_{a}\le 2
\qquad\text{on }M.
\]
Combining this with \eqref{eq:seed-metric-cor}, we obtain
\[
C^{-1}\gamma\le g_{0,a}\le C\gamma,
\]
which proves \eqref{eq:g0a-gamma-equivalence}.

\medskip

\noindent
From \eqref{eq:conformal-cor},
\[
\Sigma_{0,a}-\widehat\Sigma_{a}=(\varphi_{a}^{-2}-1)\widehat\Sigma_{a}.
\]
By standard composition estimates in Sobolev spaces and \eqref{eq:phi-small-cor},
\begin{equation}
\label{eq:phi-inv-small-cor}
\|\varphi_{a}^{-2}-1\|_{H^{4}(\gamma)}
\le C\|\varphi_{a}-1\|_{H^{4}(\gamma)}
\le C e^{-9a/5}.
\end{equation}
Using the algebra property of \(H^{2}\) and \(H^{3}\), together with \eqref{eq:seed-sigma-cor}, we infer
\begin{align}
\|\Sigma_{0,a}\|_{H^{2}(\gamma)}
&\le \|\widehat\Sigma_{a}\|_{H^{2}(\gamma)}
   +\|(\varphi_{a}^{-2}-1)\widehat\Sigma_{a}\|_{H^{2}(\gamma)} \nonumber\\
&\le C e^{-9a/10}+C e^{-9a/5}e^{-9a/10}
\le C e^{-9a/10},
\label{eq:sigma-H2-cor}
\end{align}
and likewise
\begin{align}
\|\Sigma_{0,a}\|_{H^{3}(\gamma)}
&\le \|\widehat\Sigma_{a}\|_{H^{3}(\gamma)}
   +\|(\varphi_{a}^{-2}-1)\widehat\Sigma_{a}\|_{H^{3}(\gamma)} \nonumber\\
&\le C e^{a/10}+C e^{-9a/5}e^{a/10}
\le C e^{a/10}.
\label{eq:sigma-H3-cor}
\end{align}
This proves the first and last bounds in \eqref{eq:final-CMC-lower}.

\medskip

\noindent
From \eqref{eq:conformal-cor},
\[
g_{0,a}-\widehat g_{a}=(\varphi_{a}^{4}-1)\widehat g_{a}.
\]
Again by the composition estimates and \eqref{eq:phi-small-cor},
\[
\|\varphi_{a}^{4}-1\|_{H^{4}(\gamma)}\le C e^{-9a/5},
\]
and therefore
\begin{equation}
\label{eq:g-difference-cor}
\|g_{0,a}-\widehat g_{a}\|_{H^{4}(\gamma)}\le C e^{-9a/5}.
\end{equation}

\noindent Now the map
\[
g\mapsto \mathfrak T[g]
\]
is a smooth nonlinear differential operator of order \(2\). On any bounded subset of the \(H^{4}\)-neighborhood of \(\gamma\), it is locally Lipschitz from \(H^{4}\) to \(H^{2}\). Hence \eqref{eq:g-difference-cor} implies
\begin{equation}
\label{eq:T-difference-cor}
\|\mathfrak T[g_{0,a}]-\mathfrak T[\widehat g_{a}]\|_{H^{2}(\gamma)}
\le C \|g_{0,a}-\widehat g_{a}\|_{H^{4}(\gamma)}
\le C e^{-9a/5}.
\end{equation}
Combining \eqref{eq:T-difference-cor} with \eqref{eq:seed-T-cor}, we conclude
\begin{equation}
\label{eq:T-final-cor}
\|\mathfrak T[g_{0,a}]\|_{H^{2}(\gamma)}
\le C e^{a/10}.
\end{equation}
This is exactly the second bound asserted in \eqref{eq:final-CMC-lower}.

\medskip

\noindent
By definition of Sobolev norms with respect to \(\gamma\),
\[
\sum_{I=0}^{3}\|\nabla_{\gamma}^{I}\Sigma_{0,a}\|_{L^{2}(\gamma)}
\le C \|\Sigma_{0,a}\|_{H^{3}(\gamma)}
\le C e^{a/10}
\]
by \eqref{eq:sigma-H3-cor}. Also,
\[
\sum_{I=0}^{2}\|e^{a}\nabla_{\gamma}^{I}\Sigma_{0,a}\|_{L^{2}(\gamma)}
\le C e^{a}\|\Sigma_{0,a}\|_{H^{2}(\gamma)}
\le C e^{a}e^{-9a/10}
= C e^{a/10}
\]
by \eqref{eq:sigma-H2-cor}. Finally,
\[
\sum_{I=0}^{2}\|\nabla_{\gamma}^{I}\mathfrak T[g_{0,a}]\|_{L^{2}(\gamma)}
\le C \|\mathfrak T[g_{0,a}]\|_{H^{2}(\gamma)}
\le C e^{a/10}
\]
by \eqref{eq:T-final-cor}. This proves \eqref{eq:final-CMC-upper}.

\noindent Therefore, for every \(a\ge a_{0}\), there exists a smooth exact CMC initial data set \((g_{0,a},\Sigma_{0,a})\) satisfying all the stated properties. This completes the proof.
\end{proof}

\begin{remark}
 Notice the vital point: by choosing the perturbations to be transverse-traceless (or $TT$), we eliminate the principal leading order term in the scalar curvature deviation. This is the precise mechanism that yields almost unmodified scalar curvature in $H^{2}$ norm and hence by Sobolev embedding, in the point-wise sense as well. On the other hand, the principal leading order term in the deviation of the trace-free Ricci curvature $\mathfrak{T}$ is the main contributor to its large   $H^{2}$ norm.
\end{remark}

\section{Hyperbolic Estimates}

\subsection{Setting up the bootstrap argument}
\label{bootstrapargument}
\noindent In order to establish global-in-time control of the solutions of the Einstein vacuum equations with cosmological constant $\Lambda>0$—in the rescaled variables introduced earlier—we employ a bootstrap argument. Specifically, we aim to derive uniform bounds on three scale-invariant norms:
\[
\mathcal{O}, \mathcal{F}, \mathcal{N}^\infty,
\]
which respectively control curvature energies, renormalized entropy for the transverse-traceless ($TT$) second fundamental form, and point-wise norm of gauge variable and $TT$ tensor $\Sigma$. These norms are defined in the rescaled setting associated with the dynamical time variable $T \in [T_0, T_\infty)$.

\noindent We assume the initial data at time $T = T_0$ satisfies the bound
\begin{align} \label{eq:initial-bound}
\mathcal{O}(T_0) + \mathcal{F}(T_0) + \mathcal{N}^\infty(T_0) \lesssim \mathcal{I}^0,
\end{align}
where $\mathcal{I}^0 \gg 1$ denotes the (large) initial data size. Our main objective is to prove that for all $T \in [T_0, T_\infty]$, the combined norm satisfies the uniform a priori estimate
\begin{align} \label{eq:goal-bound}
\mathcal{O}(T) + \mathcal{F}(T) + \mathcal{N}^\infty(T) \lesssim \mathcal{I}^0 + 1.
\end{align}
This suffices to propagate the solution globally in time via standard local well-posedness and continuation arguments (see Section~\ref{local}).

\noindent To achieve \eqref{eq:goal-bound}, we proceed via a bootstrap scheme. We assume that on a given time interval $[T_0, T] \subset [T_0, T_\infty)$, the following bounds hold:
\begin{align} \label{eq:strap2}
\mathcal{O}(T) \leq \Gamma, \qquad \mathcal{F}(T) \leq \mathds{Y}, \qquad \mathcal{N}^\infty(T) \leq \mathds{L},
\end{align}
where the constants $\Gamma, \mathds{Y}, \mathds{L}$ satisfy
\begin{align} \label{eq:bootstrap-size}
(\mathcal{I}^0)^2 + \mathcal{I}^0 + 1 < \min\{\Gamma, \mathds{Y}, \mathds{L}\},
\end{align}
and are additionally constrained by
\begin{align} \label{eq:bootstrap-growth}
\Gamma + \mathds{Y} + \mathds{L} \leq e^{\frac{T_0}{10}}.
\end{align}

\noindent Define the set of admissible bootstrap times:
\[
\mathcal{S} := \left\{ T \in [T_0, T_\infty] \mid \text{the bounds \eqref{eq:strap2} hold on } [T_0, T] \right\}.
\]
Our aim is to prove that $\mathcal{S} = [T_0, T_\infty]$ by showing that $\mathcal{S}$ is both open and closed.

\paragraph{Step 1: Non-emptiness} By local well-posedness theory for the Einstein--$\Lambda$ system in the chosen gauge, there exists $\varepsilon > 0$ such that a solution exists on $[T_0, T_0 + \varepsilon]$ and satisfies
\begin{align}
\mathcal{O}(T) + \mathcal{F}(T) + \mathcal{N}^\infty(T) \lesssim 2 \mathcal{I}^0, \qquad \forall T \in [T_0, T_0 + \varepsilon],
\end{align}
thus implying $\mathcal{S} \neq \{\emptyset\}$.

\paragraph{Step 2: Closedness} The closedness of $\mathcal{S}$ follows from standard continuation results and the fact that the norms $\mathcal{O}, \mathcal{F}, \mathcal{N}^\infty$ are continuous in time. More precisely, the uniform bounds implied by the bootstrap assumptions extend continuously to the supremum of $\mathcal{S}$, which must then also belong to $\mathcal{S}$.

\paragraph{Step 3: Openness} The core of the argument lies in establishing a hierarchy of estimates:
\[
\mathcal{N}^\infty \longrightarrow \mathcal{F} \longrightarrow \mathcal{O},
\]
that enable one to strictly improve the assumptions in \eqref{eq:strap2}. These estimates are derived as follows:

\begin{itemize}
  \item The gauge quantities in $\mathcal{N}^\infty$ are controlled via elliptic regularity theory and Sobolev inequalities on the spatial slices $\Sigma_T$, leading to the estimate
  \begin{align} \label{eq:estN}
  \mathcal{N}^\infty(T) \lesssim \mathcal{O}(T) + \mathcal{F}(T)^2 + \mathcal{F}(T).
  \end{align}
  
  \item The norm $\mathcal{F}$, which captures the trace-free part of the second fundamental form, is controlled by integrating the transport equation for $\Sigma$ forward in $T$, yielding
  \begin{align} \label{eq:estF}
  \mathcal{F}(T) \lesssim \mathcal{O}(T) + \mathcal{I}^0 + 1.
  \end{align}
  
  \item Finally, $\mathcal{O}$ is estimated via energy methods, applied to the hyperbolic system, giving the uniform bound
  \begin{align} \label{eq:estO}
  \mathcal{O}(T) \lesssim \mathcal{I}^0 + 1.
  \end{align}
\end{itemize}

\noindent Combining \eqref{eq:estN}--\eqref{eq:estO}, and choosing the bootstrap constants sufficiently large relative to $\mathcal{I}^0$ yet satisfying \eqref{eq:bootstrap-growth}, we obtain strict improvements of the assumptions in \eqref{eq:strap2}. Therefore, by the local well-posedness theory, the solution to the Einsten-$\Lambda$ equation can be extended a bit towards a larger value of $T$, implying the openness of $\mathcal{S}$.

\noindent Since $\mathcal{S} \subset [T_0, T_\infty]$ is non-empty, open, and closed, and $[T_0, T_\infty]$ is connected, it follows that
\[
\mathcal{S} = [T_0, T_\infty].
\]
Thus, the solution exists globally in time, and the norms $\mathcal{O}, \mathcal{F}, \mathcal{N}^\infty$ obey uniform bounds independent of $T_\infty$. Sending $T_\infty \to \infty$ yields global existence and uniform control for all future times.

\subsection{Estimates for the metric components}
\label{metriccontrol}
\begin{proposition}[Uniform Control of the Metric Volume Density]
\label{metric1}
Let $(\widetilde{M}, \widehat{g}(T))$ be a smooth globally hyperbolic spacetime foliated by constant mean curvature (CMC) slices, and suppose that the metric \( g(T) \) evolves according to the Einstein vacuum equations with cosmological constant in the CMC-spatially transported gauge. Let the initial data at \( T = T_0 \) satisfy the assumptions of Theorem~\ref{main}, and assume the bootstrap condition
\[
\left\| \frac{N}{n} - 1 \right\|_{L^\infty(M)} \leq \mathds{L} e^{-2\gamma T_{0}}, \qquad \text{for all } T \in [T_0, T],
\]
where \( \mathds{L}, \gamma > 0 \) are constants. Then the Riemannian volume density \( \mu_g := \sqrt{\det g_{ij}} \) associated with the induced metric \( g_{ij}(T) \) satisfies the uniform bounds
\[
C_1 \leq \det(g_{ij}(T)) \leq C_2, \qquad \text{for all } T \in [T_0, T],
\]
for some positive constants \( C_1, C_2 > 0 \) depending only on the initial data at \( T_0 \).
\end{proposition}

\begin{proof}
We begin with the evolution equation for the spatial metric \( g_{ij} \) in CMC-spatially transported gauge:
\[
\partial_T g_{ij} = -\frac{2\varphi}{\tau} N \Sigma_{ij} - 2\left(1 - \frac{N}{n}\right) g_{ij}.
\]
Taking the trace with respect to \( g^{ij} \), and using the fact that \( \operatorname{tr}_g \Sigma = 0 \), we obtain:
\[
\operatorname{tr}_g(\partial_T g) = -2n\left(1 - \frac{N}{n}\right).
\]
Let \( \mu_g := \sqrt{\det g_{ij}} \) denote the Riemannian volume density. The evolution of \( \mu_g \) is governed by:
\[
\partial_T \mu_g = \frac{1}{2} \mu_g \operatorname{tr}_g(\partial_T g) = -n \mu_g \left(1 - \frac{N}{n}\right),
\]
which implies
\[
\mu_g(T) = \mu_g(T_0) \exp\left(n \int_{T_0}^{T} \left(\frac{N}{n} - 1 \right)(s) \, ds \right).
\]
Applying the bootstrap assumption gives:
\[
\left| \log\left(\frac{\mu_g(T)}{\mu_g(T_0)}\right) \right| \leq n \int_{T_0}^{T} \left| \frac{N}{n} - 1 \right|(s) ds \leq \frac{n \mathds{L}}{2\gamma} e^{-2\gamma T_0}.
\]
Exponentiating, we obtain the bounds:
\[
\mu_g(T_0) e^{-\frac{n\mathds{L}}{2\gamma} e^{-2\gamma T_0}} \leq \mu_g(T) \leq \mu_g(T_0) e^{\frac{n\mathds{L}}{2\gamma} e^{-2\gamma T_0}},
\]
or 
\[
|\mu_{g}(T)-\mu_{g}(T_{0})|\leq \mu_{g}(T_{0})|e^{\frac{n\mathds{L}}{2\gamma}e^{-2\gamma T_{0}}}-1|\lesssim \mathds{L}e^{-2\gamma T_{0}}. 
\]
which imply, in particular, the uniform bounds on \( \det g_{ij} \). The constants \( C_1, C_2 \) depend only on the initial data.
\end{proof}

\begin{corollary}[Exponential Stability of Volume Function]
\label{cor:volume-stability}
Under the same assumptions as in Proposition~\ref{metric1}, the total volume functional satisfies the bound:
\[
\left| \operatorname{Vol}(g(T)) - \operatorname{Vol}(g(T_0)) \right| \lesssim \mathds{L} e^{-2\gamma T_0}, \qquad \text{for all } T \in [T_0, T_\infty],
\]
where \( \operatorname{Vol}(g(T)) = \int_M \mu_g(T) \, dx \) denotes the total volume with respect to \( g(T) \).
\end{corollary}

\begin{proof}
From the previous proposition and the evolution equation for \( \mu_g \), we compute:
\[
\frac{d}{dT} \operatorname{Vol}(g(T)) = \int_M \partial_T \mu_g \, dx = n \int_M \mu_g\left(\frac{N}{n} - 1\right) dx.
\]
Integrating in time and using the uniform bound \( \left\| \frac{N}{n} - 1 \right\|_{L^\infty(M)} \leq \mathds{L} e^{-2\gamma T} \), we deduce:
\[
\left| \operatorname{Vol}(g(T)) - \operatorname{Vol}(g(T_0)) \right| \lesssim \operatorname{Vol}(g(T_0))  \int_{T_0}^{T} \mathds{L} e^{-2\gamma t} dt \lesssim \mathds{L} e^{-2\gamma T_0}.
\]
This completes the proof.
\end{proof}
\begin{proposition}
\label{metric2}
Let $(x,T)\in M\times [T_{0},T_{\infty}]$ denote coordinates on a globally hyperbolic spacetime foliated by constant mean curvature (CMC) hypersurfaces, and let $g(T,x)$ denote the induced Riemannian metric on the CMC slices in CMC-transported coordinates. Denote by $\alpha(T)$ and $\beta(T)$ the maximal and minimal eigenvalues, respectively, of the symmetric bilinear form $g(T,x)$ with respect to the initial metric $g(T_0,x)$. Then under the bootstrap assumption \eqref{eq:strap2}, we have the estimate
\[
|\alpha(T) - 1| + |\beta(T) - 1| \lesssim \mathds{L} e^{-2\gamma T_0},
\]
where $\mathds{L}$ is a constant depending on the norms in the bootstrap assumptions and $\gamma > 0$ is a fixed decay rate.
\end{proposition}

\begin{proof}
Fix a point $x \in M$, and consider the symmetric bilinear form $g(T,x)$ on $T_x M$. Define the maximal and minimal eigenvalues of $g(T,x)$ with respect to $g(T_0,x)$ at the point $x$ as
\begin{align}
\label{eq:eigdef}
\alpha(x,T) := \sup_{0 \neq Y \in T_x M} \frac{g(T,x)(Y,Y)}{g(T_0,x)(Y,Y)}, \quad
\beta(x,T) := \inf_{0 \neq Y \in T_x M} \frac{g(T,x)(Y,Y)}{g(T_0,x)(Y,Y)}.
\end{align}
We aim to estimate $|\alpha(x,T)-1|$ and $|\beta(x,T)-1|$ uniformly in $x$ and $T \in [T_0,T_\infty]$.
Recall the evolution equation for the metric in CMC-transported coordinates:
\begin{align}
\label{eq:metric_evolution}
\partial_T g_{ij} = -\frac{2 \varphi}{\tau} N \Sigma_{ij} - 2\left(1 - \frac{N}{n}\right) g_{ij},
\end{align}
where $N$ is the lapse function, $\Sigma_{ij}$ is the transverse-traceless part of the second fundamental form, $\tau$ is the mean curvature, and $\varphi$ is a smooth weight function. For a fixed tangent vector $Y \in T_x M$, we obtain from \eqref{eq:metric_evolution}
\begin{align}
\label{eq:scalar_evolution}
\partial_T g(Y,Y) = -\frac{2\varphi}{\tau} N \Sigma(Y,Y) - 2\left(1 - \frac{N}{n} \right) g(Y,Y).
\end{align}
To estimate the right-hand side of \eqref{eq:scalar_evolution}, we begin with the bound
\begin{align}
\label{eq:sigma_pointwise}
|\Sigma(Y,Y)| \leq \sqrt{g(T_0,x)^{ik} g(T_0,x)^{jl} \Sigma_{ij} \Sigma_{kl}} \cdot g(T_0,x)(Y,Y),
\end{align}
since $\Sigma$ is a symmetric tensor and $g(T_0,x)$ is a fixed reference metric. Thus, using the bootstrap assumption \eqref{eq:strap2}, we have
\[
|\Sigma(Y,Y)| \lesssim \|\Sigma\|_{L^\infty(M)} \cdot g(T_0,x)(Y,Y).
\]
Now, using \eqref{eq:scalar_evolution}, we integrate from $T_0$ to $T$:
\begin{align}
|g(T,x)(Y,Y) - g(T_0,x)(Y,Y)| &\leq \int_{T_0}^T \left| \partial_s g(s,x)(Y,Y) \right| ds \\
&\lesssim \int_{T_0}^T \left( \left|\frac{\varphi(s)}{\tau(s)}\right| \cdot |N \Sigma(Y,Y)| + \left|1 - \frac{N}{n}\right| g(Y,Y) \right) ds.
\end{align}
From the bootstrap assumption, we have the decay bounds:
\[
\left| \frac{\varphi(s)}{\tau(s)} \right| \lesssim e^{-s}, \quad \|\Sigma\|_{L^\infty(M)} \lesssim e^{-\gamma s}, \quad \left\| 1 - \frac{N}{n} \right\|_{L^\infty(M)} \lesssim e^{-2\gamma s}.
\]
Therefore,
\begin{align}
|g(T,x)(Y,Y) - g(T_0,x)(Y,Y)| &\lesssim \int_{T_0}^T e^{-s} e^{-\gamma s} g(T_0,x)(Y,Y) ds + \int_{T_0}^T e^{-\gamma s} g(T_0,x)(Y,Y) ds \\
&\lesssim \left( \int_{T_0}^\infty e^{-(1+\gamma) s} ds + \int_{T_0}^\infty e^{-2\gamma s} ds \right) g(T_0,x)(Y,Y).
\end{align}
This yields
\[
|g(T,x)(Y,Y) - g(T_0,x)(Y,Y)| \lesssim e^{-2\gamma T_0} g(T_0,x)(Y,Y).
\]
To improve this to an estimate of relative deviation, we write
\[
\left| \frac{g(T,x)(Y,Y)}{g(T_0,x)(Y,Y)} - 1 \right| \lesssim e^{-2\gamma T_0}.
\]
Taking the supremum and infimum over unit vectors $Y$ with respect to $g(T_0,x)$ as in \eqref{eq:eigdef}, we obtain
\[
|\alpha(x,T) - 1| + |\beta(x,T) - 1| \lesssim \mathds{L} e^{-2\gamma T_0}.
\]
Since the estimates are uniform in $x$, the result follows.
\end{proof}

\begin{remark}
Propositions~\ref{metric1} and~\ref{metric2} establish uniform two-sided bounds on the evolving spatial metric $g(T,x)$ in terms of the initial metric $g_0 := g(T_0,x)$, valid on the entire bootstrap interval $T \in [T_0, T_{\infty}]$, with $T_0 \gg 1$ sufficiently large. More precisely, there exists a numerical constant $C \geq 1$, such that
\begin{equation}
    C^{-1} g_0 \leq g(T,x) \leq C g_0,
\end{equation}
in the sense of positive-definite symmetric bilinear forms. In particular, the evolving metric remains uniformly equivalent to the initial metric along the foliation. This uniform equivalence enables the definition of time-dependent Sobolev norms $\|\cdot\|_{H^s(g(T))}$ for tensor fields on the spatial slices $M_{T}$, and ensures that all Sobolev inequalities, elliptic estimates, derived from the geometry of $g_0$ continue to hold with constants depending only on $C$. This equivalence plays a critical role in establishing quantitative a priori estimates.
\end{remark}
\noindent We prove the estimates on the Sobolev constants in the following section.

\subsection{Controlling the Sobolev Constants}
\label{sobolev2}
\noindent One of the fundamental aspects of this problem is to control the appropriate Sobolev norms of the solution variables ($\Sigma,\mathfrak{T},N)$ for large times. In addition, one needs to make use of the Sobolev inequalities for the tensor fields on a manifold. Recall that the Sobolev norms are defined with respect to the dynamical metric $g$. This metric verifies the transport equation \ref{eq:cmc1}. Therefore, one needs to make sure that the metric $g$ does not degenerate during the evolution. Using the estimates derived on the metric components, we define the isoperimetric constant for the Cauchy slice
\begin{align}
 \text{I}(M):=\sup_{U\subset M,\partial U\in C^{1}}\frac{\min\{\text{vol}(U),\text{vol}(U^{c})\}}{[\text{Area}(\partial U)]^{\frac{3}{2}}}.   
\end{align}
The following proposition yields an upper bound for $I(M)$ 
\begin{proposition}\label{prop:isoperimetric-control}
Let $(\mathcal{M},g)$ be a globally hyperbolic spacetime endowed with a
smooth time function $T\colon \mathcal{M}\to\mathbb{R}$ whose level sets 
\[
M_{T}:=\{T=\mathrm{const}\}
\]
form a foliation by compact $2$-dimensional Riemannian manifolds 
(with induced metric still denoted $g(T)$).  
Assume the initial data on $M_{T_{0}}$ satisfy the hypotheses of Section~2
and that the bootstrap assumption \emph{(2.10)} holds along the entire slab 
$T\in[T_{0},T_{\infty})$.
Then for every $T\in [T_{0},T_{\infty})$ the isoperimetric constant satisfies
\begin{equation}\label{eq:iso-bound}
     I(M_{T}) \;\leq\; 10\,I(M_{T_{0}}).
\end{equation}
\end{proposition}

\begin{proof}
Fix any $T\in[T_{0},T_{\infty})$ and let $U\subset M_{T}$ be an arbitrary Caccioppoli
set whose boundary $\partial U$ is of class $C^{1}$.  
Let $\Phi_{s}$ denote the flow generated by the future-directed vector 
field $\partial_{T}$, so that 
\(
\Phi_{T-T_{0}}\colon M_{T_{0}}\to M_{T}
\)
is a diffeomorphism.
Set
\[
      U_{0}:=\Phi_{-(T-T_{0})}(U)\subset M_{T_{0}},
      \qquad 
      U^{c}_{0}:=\Phi_{-(T-T_{0})}(U^{c}).
\]
By global hyperbolicity, the integral curves of $\partial_{T}$
intersect each $M_{T}$ exactly once, so the map
$\Phi_{T-T_{0}}$ is globally well-defined and smooth.

\medskip
\noindent Let $\nu$ denote the outward unit normal to $\partial U$ in $(M_{T},g(T))$ and 
let $\nu_{0}$ denote the outward unit normal to $\partial U_{0}$ 
in $(M_{T_{0}},g(T_{0}))$.  
By Proposition~\ref{metric2}, the differential 
$D\Phi_{T-T_{0}}|_{T_{x}(\partial U_{0})}$ satisfies the uniform
ellipticity bounds
\begin{equation}\label{eq:area-distortion}
\beta(T)
\;\le\;
\frac{d\sigma_{g(T)}}{d\sigma_{g(T_{0})}}
\Big|_{\partial U}
\;\le\;
\alpha(T),
\end{equation}
where $\beta(T)>0$ and $\alpha(T)<\infty$ are the 
eigenvalues as in Proposition~\ref{metric2} and verify the uniform estimates in the same proposition \ref{metric2}.
In particular,
\begin{equation}\label{eq:area-ineq}
   \mathrm{Area}_{g(T)}(\partial U)
   \;\ge\;
   \beta(T)\,
   \mathrm{Area}_{g(T_{0})}(\partial U_{0}).
\end{equation}

\medskip
\noindent Similarly, Proposition~\ref{metric2} implies pointwise bounds for the 
Riemannian volume form:
\begin{equation}\label{eq:vol-distortion}
     \beta(T)^{\frac{3}{2}}
     \;\le\;
     \frac{d\mathrm{vol}_{g(T)}}{d\mathrm{vol}_{g(T_{0})}}
     \;\le\;
     \alpha(T)^{\frac{3}{2}},
\end{equation}
where $\alpha(T),\beta(T)$ are likewise controlled 
by \ref{metric2}.  
Integrating \eqref{eq:vol-distortion} over $U_{0}$ and $U^{c}_{0}$ yields
\begin{equation}\label{eq:vol-ineq}
   \mathrm{Vol}_{g(T)}(U)
       \;\le\; \alpha(T)^{\frac{3}{2}}\,\mathrm{Vol}_{g(T_{0})}(U_{0}), 
   \qquad
   \mathrm{Vol}_{g(T)}(U^{c})
       \;\le\; \alpha(T)^{\frac{3}{2}}\,\mathrm{Vol}_{g(T_{0})}(U^{c}_{0}).
\end{equation}

\medskip
\noindent By definition of the isoperimetric constant on $M_{T}$,
\[
   I(M_{T})
   =\sup_{U\subset M,\partial U\in C^{1}}\frac{\min\{\text{vol}(U),\text{vol}(U^{c})\}}{[\text{Area}(\partial U)]^{\frac{3}{2}}}.
\]
Using \eqref{eq:area-ineq} and \eqref{eq:vol-ineq} gives
\[
   I(M_{T})
   \leq \frac{\alpha(T)^{\frac{3}{2}}}{\beta(T)^{\frac{3}{2}}}\sup_{U\subset M,\partial U_{0}\in C^{1}}\frac{\min\{\text{vol}(U_{0}),\text{vol}(U^{c}_{0})\}}{[\text{Area}(\partial U_{0})]^{\frac{3}{2}}}=\frac{\alpha(T)^{\frac{3}{2}}}{\beta(T)^{\frac{3}{2}}}I(M_{T_{0}}).
\]
\medskip
Now in light of the proposition \ref{metric2}, the ratio $
      \frac{\alpha(T)^{\frac{3}{2}}}{\beta(T)^{\frac{3}{2}}}$
remains contained in the interval \([1/10,\, 10]\) for all 
$T\in[T_{0},T_{\infty})$.  
Hence, $I(M_{T})\;\leq\;10\,I(M_{T_{0}})$,
which establishes the desired estimate \eqref{eq:iso-bound}.
\end{proof}

\begin{proposition}[Sobolev Embedding for Tensor Fields]
\label{sobolev}
Let $(M,g)$ be a closed (i.e., compact without boundary), $C^{\infty}$ $3$-dimensional Riemannian manifold, and let $\psi \in C^{\infty}(\Gamma({}^{K}TM \otimes {}^{L}T^{*}M))$ be a smooth section of a mixed tensor bundle over $M$. Then the following Sobolev-type inequalities hold:
\begin{align}
\label{eq:sobolev-L4}
[\text{Vol}_{g}(M)]^{-\frac{1}{12}}||\psi||_{L^{4}(M,g)}\leq C\bigg(\max(1,I(M,g))\bigg)^{\frac{1}{2}}\bigg(||\nabla\psi||_{L^{2}(M,g)}+[\text{Vol}_{g}(M)]^{-\frac{1}{3}}||\psi||_{L^{2}(M,g)}\bigg),\\
\label{eq:sobolevL4}
 ||\psi||_{L^{\infty}(M,g)}\leq C\bigg(\max(1,I(M,g))\bigg)^{\frac{1}{2}}[\text{Vol}_{g}(M)]^{\frac{1}{3}-\frac{1}{4}}\bigg(||\nabla\psi||_{L^{4}(M,g)}+[\text{Vol}_{g}(M)]^{-\frac{1}{3}}||\psi||_{L^{4}(M,g)}\bigg),   
\end{align}
where the constants are purely numerical.
\end{proposition}

\begin{proof}
The proof proceeds via localization, partition of unity, and a scalar reduction argument adapted to tensor fields. First, since $M$ is a closed Riemannian manifold, it admits a finite atlas $\{(U_i, \varphi_i)\}_{i=1}^N$ such that each chart $\varphi_i : U_i \to \mathbb{R}^3$ is a diffeomorphism onto its image, and the pulled-back metric components $g_{ij}$ and their derivatives are uniformly bounded on each chart. Let $\{\chi_i\}_{i=1}^N$ be a smooth partition of unity subordinate to this cover.

\noindent Let $\psi \in C^\infty(\Gamma({}^{K}TM \otimes {}^{L}T^*M))$. Define for $\epsilon > 0$ the scalar function:
\[
f_\epsilon := \sqrt{|\psi|^2 + \epsilon},
\]
where $|\psi|^2 = g^{i_1 j_1} \cdots g^{i_K j_K} g_{a_1 b_1} \cdots g_{a_L b_L} \psi_{i_1 \cdots i_K}^{a_1 \cdots a_L} \psi_{j_1 \cdots j_K}^{b_1 \cdots b_L}$ is the pointwise norm squared induced by $g$.

\noindent Since the Levi-Civita connection $\nabla$ is compatible with $g$, we compute:
\[
\nabla f_\epsilon = \frac{1}{\sqrt{|\psi|^2 + \epsilon}} \langle \psi, \nabla \psi \rangle_g,
\]
where the inner product $\langle \psi, \nabla \psi \rangle_g$ is taken pointwise with respect to $g$. This yields the pointwise bound:
\[
|\nabla f_\epsilon| \leq \frac{|\psi| \cdot |\nabla \psi|}{\sqrt{|\psi|^2 + \epsilon}} \leq |\nabla \psi|.
\]

\noindent We now apply the standard scale-invariant Sobolev inequality for scalar functions on 3-dimensional Riemannian manifolds:
\[
[\text{Vol}_{g}(M)]^{-\frac{1}{12}}||f_{\epsilon}||_{L^{4}(M,g)}\leq C\bigg(\max(1,I(M,g))\bigg)^{\frac{1}{2}}\bigg(||\nabla f_{\epsilon}||_{L^{2}(M,g)}+[\text{Vol}_{g}(M)]^{-\frac{1}{3}}||f_{\epsilon}||_{L^{2}(M,g)}\bigg).
\]
Using the inequality $||f_\epsilon||_{L^p(M)} \leq \||\psi|\|_{L^p(M)} + \epsilon^{1/2}|M|^{1/p}$ and passing to the limit as $\epsilon \to 0$ yields
\[
[\text{Vol}_{g}(M)]^{-\frac{1}{12}}||\psi||_{L^{4}(M,g)}\leq C\bigg(\max(1,I(M,g))\bigg)^{\frac{1}{2}}\bigg(||\nabla\psi||_{L^{2}(M,g)}+[\text{Vol}_{g}(M)]^{-\frac{1}{3}}||\psi||_{L^{2}(M,g)}\bigg).
\]
This proves inequality \eqref{eq:sobolev-L4}.

\noindent 
We assume the scalar version of the scale-invariant Sobolev inequality:
\begin{equation}\label{scalar-Sobolev}
 ||f||_{L^{\infty}(M,g)}\leq C\bigg(\max(1,I(M,g))\bigg)^{\frac{1}{2}}[\text{Vol}_{g}(M)]^{\frac{1}{3}-\frac{1}{4}}\bigg(||\nabla f||_{L^{4}(M,g)}+[\text{Vol}_{g}(M)]^{-\frac{1}{3}}||f||_{L^{4}(M,g)}\bigg),
\end{equation}
for all smooth real-valued functions $f$. Let $\psi\in C^{\infty}(\Gamma({}^{K}TM\otimes {}^{L}T^{*}M))$ be a smooth mixed tensor field.
Set 
\[
f := |\psi|_{g},
\]
the pointwise norm induced by $g$. Then $f$ is a smooth nonnegative scalar function, and
\[
\|\psi\|_{L^{\infty}}=\|f\|_{L^{\infty}},\qquad
\|\psi\|_{L^{4}}=\|f\|_{L^{4}}.
\]
\noindent 
For tensor fields on a Riemannian manifold, one has the pointwise Kato inequality as seen in the proof of the first inequality (\ref{eq:sobolevL4})
\begin{equation}\label{Kato}
 |\nabla f|\le |\nabla\psi|.
\end{equation}
This follows from writing $f = (\psi_{\alpha}\psi_{\alpha}+\epsilon)^{1/2}$ in a local chart and
and applying the Cauchy-Schwarz inequality.
Thus 
\[
\|\nabla f\|_{L^{4}} \le \|\nabla\psi\|_{L^{4}}.
\]
Applying \eqref{scalar-Sobolev} to the scalar function $f=|\psi|$ gives
\[
\|\psi\|_{L^{\infty}}
=\|f\|_{L^{\infty}}
\le C\Big(\max(1,I(M,g))\Big)^{\frac12}
 \Big( Vol_{g}(M)^{\frac13}\|\nabla f\|_{L^{4}}
     + {Vol}_{g}(M)^{-\frac14}\|f\|_{L^{4}} \Big).
\]
\noindent Using Kato's inequality \eqref{Kato} and the equalities
$\|f\|_{L^{4}}=\|\psi\|_{L^{4}}$, 
$\|\nabla f\|_{L^{4}}\le \|\nabla\psi\|_{L^{4}}$,
we obtain
\[
\|\psi\|_{L^{\infty}}
\le C\Big(\max(1,I(M,g))\Big)^{\frac12}
\Big( {Vol}_{g}(M)^{\frac13}\|\nabla\psi\|_{L^{4}}
    + {Vol}_{g}(M)^{-\frac14}\|\psi\|_{L^{4}} \Big),
\]
which is exactly the desired tensorial inequality \eqref{eq:sobolevL4}.
\end{proof}

\begin{corollary}
For the dynamical manifold $(M_{T},g(T))$, the Sobolev inequalities \ref{eq:sobolev-L4}-\ref{eq:sobolevL4} hold with the constant $C\bigg(\max(1,I(M,g))\bigg)$ being uniformly bounded by $O(1)$ purely numerical entity $\forall T\in [T_{0},T_{\infty})$ in light of the proposition \ref{prop:isoperimetric-control}.    
\end{corollary}

\noindent Lastly, we note that on $(M,g)$ we can obtain compactness theorems for Sobolev embeddings. This can be proven by working on local charts using the uniform estimates on the metric $g(t)$ on the interval $[T_{0},T_{\infty}]$ in light of propositions \ref{metric1}-\ref{metric2} and gluing the charts together in a compatible manner. In particular, on $(M,g)$ for a $C^{\infty}$ tensor field $\psi$, the Sobolev norm $H^{I}(M)$ is defined as
\begin{align}
 ||\psi||_{H^{I}(M)}:=\sum_{j\leq I}||\nabla^{j}\psi||_{L^{2}(M)}.   
\end{align}
\begin{proposition}
The embedding $H^{I}(M)\hookrightarrow H^{J}(M)$ is compact for $I>J$.      
\end{proposition}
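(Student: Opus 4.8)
The plan is to prove this as an instance of the Rellich--Kondrachov compactness theorem, transferred from Euclidean space to $M$ by localization. First I would reduce to the case $I=J+1$: the norm on $H^{I}(M)$ dominates the norm on $H^{J+1}(M)$, so the inclusion $H^{I}(M)\hookrightarrow H^{J+1}(M)$ is bounded, and the composition of a bounded operator with a compact one is compact; hence it suffices to show that $H^{J+1}(M)\hookrightarrow H^{J}(M)$ is compact.

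Since $M$ is closed, fix a finite atlas $\{(U_{\alpha},\phi_{\alpha})\}_{\alpha=1}^{N_{0}}$ with each $\phi_{\alpha}(U_{\alpha})$ a bounded open subset of $\mathbb{R}^{3}$, together with a subordinate smooth partition of unity $\{\chi_{\alpha}\}$ satisfying $\supp\chi_{\alpha}\Subset U_{\alpha}$. By Propositions \ref{metric1} and \ref{metric2} the dynamical metric $g$ at any fixed time $T$ is uniformly comparable to a fixed smooth reference metric, and, being a smooth metric on the compact manifold $M$, its Christoffel symbols and all of their derivatives are bounded. Consequently, for every $m\ge 0$ and every tensor field $\psi\in\Gamma(^{K}TM\otimes{}^{L}T^{*}M)$, the invariantly defined norm $\sum_{j\le m}||\nabla^{j}\psi||_{L^{2}(M)}$ is equivalent --- with constants depending only on $g$ and the fixed atlas --- to $\sum_{\alpha}\sum_{\text{components}}||(\chi_{\alpha}\psi)\circ\phi_{\alpha}^{-1}||_{H^{m}(\phi_{\alpha}(U_{\alpha}))}$, where the latter uses the ordinary Euclidean Sobolev norm of each scalar component of the localized, pushed-forward tensor. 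Establishing this two-sided estimate is the technical core of the argument, and it is precisely where the metric bounds of Propositions \ref{metric1}--\ref{metric2} are used.

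Granting the equivalence, take a sequence $\psi_{n}$ bounded in $H^{J+1}(M)$. Then for each $\alpha$ the component functions of $(\chi_{\alpha}\psi_{n})\circ\phi_{\alpha}^{-1}$ form a bounded sequence in $H^{J+1}$ of a fixed bounded domain, all supported in one fixed compact subset of it. The classical Rellich--Kondrachov theorem produces a subsequence converging strongly in $H^{J}$ of that domain; applying this in turn for $\alpha=1,\dots,N_{0}$ and passing to a diagonal subsequence yields one subsequence, still written $\psi_{n}$, for which every $(\chi_{\alpha}\psi_{n})\circ\phi_{\alpha}^{-1}$ converges in $H^{J}(\phi_{\alpha}(U_{\alpha}))$. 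Writing $\psi_{n}=\sum_{\alpha}\chi_{\alpha}\psi_{n}$ and invoking the norm equivalence once more, this time to bound $||\cdot||_{H^{J}(M)}$ by the sum of the local norms, shows that $\psi_{n}$ is Cauchy, hence convergent, in $H^{J}(M)$. Therefore every bounded sequence in $H^{I}(M)$ has an $H^{J}(M)$-convergent subsequence, which is the asserted compactness.

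The main obstacle --- really the only nontrivial point --- is the chart-to-manifold norm comparison: one must verify that commuting $\nabla^{j}$ past the cutoffs $\chi_{\alpha}$ and converting covariant into coordinate derivatives generates only lower-order terms whose coefficients are polynomial expressions in $g$, $g^{-1}$, and finitely many derivatives of the Christoffel symbols. All such coefficients are uniformly bounded by Propositions \ref{metric1}--\ref{metric2} and the smoothness of $g$, so the equivalence holds; this step is routine but must be spelled out because, unlike in the standard stability literature, the Sobolev norms here are defined with respect to the dynamical metric rather than a fixed background.
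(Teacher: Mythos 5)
Your argument is correct and is exactly the route the paper indicates: the paper offers only a one-sentence sketch (``working on local charts using the uniform estimates on the metric \ldots and gluing the charts together''), and your localization via a finite atlas and partition of unity, the chart-to-manifold norm equivalence, and the Rellich--Kondrachov plus diagonal-subsequence argument fill in precisely that outline. The only caveat worth recording is that Propositions \ref{metric1}--\ref{metric2} control $g$ itself but not its derivatives, so the constants in your norm equivalence rely on smoothness of $g$ at the fixed time $T$ rather than on those propositions alone; this does not affect the compactness statement as given.
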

\begin{theorem}[Local Well-Posedness and Continuation]
\label{local}
Let \( g_0 \) be a Riemannian metric on a compact 3-manifold \( M \), satisfying the uniform ellipticity bounds
\[
C^{-1} \xi_0 \leq g_0 \leq C \xi_0
\]
for some smooth background metric \( \xi_0 \) and constant \( C > 0 \). Suppose the initial data \((\Sigma_0, \mathfrak{T}_0)\) belong to the Sobolev space \( H^I \times H^{I-1} \) for some \( I > \frac{5}{2} \), and satisfy the constraint equations \eqref{eq:HC11}-\eqref{eq:cmc2} associated with the re-scaled evolution system \eqref{eq:curvature1}--\eqref{eq:curvature2}, supplemented by the elliptic lapse equation \eqref{eq:lapse3} in constant mean curvature (CMC) transported spatial gauge.

\noindent Then there exists a time \( t^* > 0 \), depending only on \( C \), \( \| \Sigma_0 \|_{H^I} \), and \( \| \mathfrak{T}_0 \|_{H^{I-1}} \), such that the Cauchy problem admits a unique solution
\[
(\Sigma(t), \mathfrak{T}(t)) \in C\left([0, t^*]; H^I \times H^{I-1}\right),
\]
with lapse \( N(t) \in H^{I+2} \), and the solution map
\[
(\Sigma_0, \mathfrak{T}_0) \mapsto (\Sigma(t), \mathfrak{T}(t), N(t))
\]
is continuous as a map
\[
H^I \times H^{I-1} \to C\left([0, t^*]; H^I \times H^{I-1} \times H^{I+2}\right).
\]

\noindent Let \( t^{\max} \geq t^* \) denote the maximal time of existence of the solution. Then either \( t^{\max} = \infty \), or the solution breaks down at \( t^{\max} \) in the sense that
\[
\lim_{t \to t^{\max}} \sup \max \left( \mu_1(t)^{-1}, \mu_2(t)^{-1}, \mu_3(t)^{-1}, \| \Sigma(t) \|_{L^\infty},||\nabla(\frac{N}{n}-1)||_{L^{\infty}} \right) = \infty,
\]
where \( \{ \mu_i(t) \}_{i=1}^3 \) denote the eigenvalues of the symmetric endomorphism \( \xi_0^{-1} g(t) \).
\end{theorem}

\label{intermediate}
\noindent  In this section, we derive uniform estimates for the spatial Riemann curvature tensor $\operatorname{Riem}_{ijkl}$ on each constant mean curvature (CMC) slice, under the bootstrap assumptions introduced in Section~\ref{bootstrapargument}. Our goal is to express $\operatorname{Riem}_{ijkl}$ entirely in terms of the renormalized dynamical variables, namely the trace-free part of the second fundamental form $\Sigma$ and the auxiliary tensor $\mathfrak{T}$, and to obtain estimates compatible with the Sobolev regularity used in our energy framework.

\noindent To this end, we recall that in three spatial dimensions, the Riemann curvature tensor admits the following Ricci decomposition:
\begin{align} \label{eq:ricci-decomposition}
\operatorname{Riem}_{ijkl}
= \frac{1}{2}\left( g_{ik} \operatorname{Ric}_{jl} + g_{jl} \operatorname{Ric}_{ik} - g_{il} \operatorname{Ric}_{jk} - g_{jk} \operatorname{Ric}_{il} \right)
+ \frac{1}{6} R(g) \left( g_{il} g_{jk} - g_{ik} g_{jl} \right),
\end{align}
where $\operatorname{Ric}_{ij}$ and $R(g)$ denote, respectively, the Ricci tensor and scalar curvature of the induced Riemannian metric $g$ on the CMC slice.

\noindent The Ricci tensor can be expressed algebraically in terms of the renormalized momentum variable $\mathfrak{T}$ via the relation
\begin{align} \label{eq:ricci-mathfrakT}
\operatorname{Ric}_{ij} = \mathfrak{T}_{ij} + \frac{1}{3}R(g) g_{ij},
\end{align}
Furthermore, the scalar curvature satisfies the renormalized Hamiltonian constraint, which reads
\begin{align} \label{eq:scalar-renorm}
R(g) + \frac{n-1}{n} = |\Sigma|_g^2,
\end{align}
where $|\Sigma|_g^2 = g^{ia}g^{jb} \Sigma_{ij} \Sigma_{ab}$ denotes the pointwise squared norm of the symmetric trace-free tensor $\Sigma$.

\noindent Substituting~\eqref{eq:ricci-mathfrakT} and~\eqref{eq:scalar-renorm} into~\eqref{eq:ricci-decomposition}, we conclude that the full spatial Riemann curvature tensor $\operatorname{Riem}_{ijkl}$ is entirely determined by the tensors $\mathfrak{T}$ and $\Sigma$, together with the metric $g$. That is,
\begin{align} \label{eq:riem-sigma-mathfrakT}
\operatorname{Riem}_{ijkl} = \mathcal{F}_{ijkl}(g,\Sigma,\mathfrak{T}),
\end{align}
where $\mathcal{F}_{ijkl}$ denotes a universal algebraic expression involving the metric, its inverse, and the tensors $\mathfrak{T}$ and $\Sigma$. In particular, any Sobolev or pointwise bound on $\mathfrak{T}$ and $\Sigma$ yields a corresponding bound on the full curvature tensor.

\noindent This observation is central to the curvature estimates employed throughout the remainder of the analysis. In particular, it ensures that the control of geometric quantities such as $\operatorname{Riem}$ reduces to the control of the dynamical variables governed by the Einstein evolution equations in the chosen CMC-transported spatial gauge.
 \footnote{In $n\geq 4$, one needs to estimate the additional Weyl curvature as well. In such case, one first imposes bootstrap assumption on Weyl curvature and then improves the bootstrap by means of  separate energy estimates for the Weyl curvature}.
\subsection{Elliptic estimates}
\label{ellipticestimates}
\noindent Recall the elliptic equation 
\begin{align}
-\Delta_{g}(\frac{N}{n}-1)+(|\Sigma|^{2}+\frac{1}{3})(\frac{N}{n}-1)=-|\Sigma|^{2}. 
\end{align}
We obtain the following estimate for the normalized lapse function $\frac{N}{n}-1$
\begin{proposition}\label{elliptic}
Let $(M,g)$ be a compact Riemannian $3$-manifold with metric $g$ satisfying the bootstrap assumption \eqref{eq:strap2}, and let $\Sigma$ denote a symmetric transverse-traceless (TT) $2$-tensor on $M$. Consider the lapse function $N$ solving the elliptic equation
\[
-\Delta_g \left(\frac{N}{n} - 1\right) + \left(|\Sigma|_g^2 + \frac{1}{3}\right) \left(\frac{N}{n} - 1\right) = - |\Sigma|_g^2,
\]
where $n=3$. Then for any integer $I \in \{2,3\}$, the following Sobolev estimate holds:
\begin{align}\label{eq:lapse_sobolev_estimate}
\sum_{j=0}^{I+2} \|\nabla^j\big(\frac{N}{n} - 1\big)\|_{L^2(M)} 
\lesssim (1 + \Gamma) \sum_{j=0}^I \|\nabla^j \Sigma\|_{L^2(M)}^2,
\end{align}
where $\nabla$ denotes the Levi-Civita connection of $g$ and $\Gamma$ controls curvature norms as specified in the bootstrap regime.
\end{proposition}
\begin{remark}
Note the loss of decay of $(\frac{N}{n}-1)$ at the top-most order due to loss of decay of $\Sigma$ at the top-most order.     
\end{remark}
\begin{proof}
\noindent
Under the bootstrap assumption~\eqref{eq:strap2}, we establish energy estimates for the spatial Riemann curvature tensor $\operatorname{Riem}$ of the Riemannian metric $g$ on the spatial slice $M$. More precisely, for all integers $0 \leq I \leq 2$, we claim the following uniform bound:
\begin{align} \label{eq:Riem_bound_claim}
    \|\nabla^I \operatorname{Riem} \|_{L^2(M)} \lesssim \Gamma + 1.
\end{align}

\noindent
To derive this, we exploit the classical Ricci decomposition of the Riemann tensor in dimension $n = 3$, where the Weyl tensor vanishes identically:
\begin{align}
    \operatorname{Riem}_{ijkl} = \tfrac{1}{2} \left( g_{ik} \operatorname{Ric}_{jl} + g_{il} \operatorname{Ric}_{kj} - g_{jk} \operatorname{Ric}_{il} - g_{jl} \operatorname{Ric}_{ik} \right) 
    + \tfrac{1}{6} R(g) \left( g_{il} g_{jk} - g_{ik} g_{jl} \right).
\end{align}
\noindent
From this decomposition, it follows that the full Riemann tensor is algebraically expressible in terms of the Ricci tensor and scalar curvature. Invoking the renormalized Hamiltonian constraint, we write:
\begin{align}
    R(g) = -\tfrac{n-1}{n} + |\Sigma|^2,
\end{align}
\noindent
where $\Sigma$ denotes the trace-free part of the second fundamental form in CMC gauge, and where all norms and derivatives are computed with respect to the spatial metric $g$.

\vspace{1ex}
\noindent
Let us now provide schematic bounds for $\nabla^I \operatorname{Riem}$ in terms of the fundamental dynamical fields $\mathfrak{T}$ and $\Sigma$. Using the expressions for $\operatorname{Ric}$ and $R(g)$, we obtain:
\begin{align}
    |\nabla \operatorname{Riem}| 
    &\lesssim |\nabla \mathfrak{T}| + |\nabla R(g)| 
    \lesssim |\nabla \mathfrak{T}| + |\Sigma||\nabla \Sigma|, \\
    |\nabla^2 \operatorname{Riem}| 
    &\lesssim |\nabla^2 \mathfrak{T}| + |\nabla \Sigma|^2 + |\Sigma||\nabla^2 \Sigma|.
\end{align}

\noindent
By applying the Cauchy–Schwarz inequality and standard Sobolev embedding on the compact manifold $M$, together with the bootstrap assumption and bounds derived in Propositions~\ref{metric1}--\ref{metric2}, we deduce:
\begin{align}
    \| \nabla \operatorname{Riem} \|_{L^2(M)} 
    &\lesssim \| \nabla \mathfrak{T} \|_{L^2(M)} 
    + \| \nabla \Sigma \|_{L^2(M)} \| \nabla^2 \Sigma \|_{L^2(M)} 
    \lesssim \Gamma + e^{-2\gamma T} \mathds{Y}^2 
    \lesssim \Gamma + 1, \\
    \| \nabla^2 \operatorname{Riem} \|_{L^2(M)} 
    &\lesssim \| \nabla^2 \mathfrak{T} \|_{L^2(M)} + \| \nabla^2 \Sigma \|_{L^2(M)}^2 
    \lesssim \Gamma + e^{-2\gamma T} \mathds{Y}^2 
    \lesssim \Gamma + 1.
\end{align}

\noindent
Finally, the $L^2$ bound for $\operatorname{Riem}$ without derivatives follows directly from the pointwise Ricci decomposition and the bootstrap bound on $\mathfrak{T}$ and $\Sigma$:
\begin{align}
    \| \operatorname{Riem} \|_{L^2(M)} \lesssim \Gamma + 1.
\end{align}

\noindent
Thus, we conclude that all desired curvature norms up to three derivatives are uniformly controlled by the bootstrap parameter $\Gamma$, thereby validating the estimate~\eqref{eq:Riem_bound_claim}. Now we are ready to obtain the estimates for $\frac{N}{n}-1$. Let us recall the elliptic equation 
\begin{align}
-\Delta_{g}(\frac{N}{n}-1)+(|\Sigma|^{2}+\frac{1}{3})(\frac{N}{n}-1)=-|\Sigma|^{2}.    
\end{align}
Let us denote the operator $-\Delta_{g}+(|\Sigma|^{2}+\frac{1}{3})$ by $\mathscr{L}$. We claim that an estimate of the following type holds for $u\in C^{\infty}(M)$
\begin{align}
\sum_{j=0}^{I+2}||\nabla^{j}u||_{L^{2}(M)}\lesssim \sum_{j=0}^{I}||\nabla^{j}(\mathscr{L}u)||_{L^{2}(M)}
\end{align}
with the regularity $\nabla^{I-1}\text{Riem}\in L^{2}(M)$. To do this we commute the lapse equation with $\nabla^{I}$ for $0\leq I\leq 3$
\begin{align}
\mathscr{L}\nabla^{I}(\frac{N}{n}-1)=-\Delta_{g}\nabla^{I}(\frac{N}{n}-1)+(|\Sigma|^{2}+\frac{1}{3})\nabla^{I}(\frac{N}{n}-1)\nonumber=\nabla^{I}\mathscr{L}(\frac{N}{n}-1)\\\nonumber+\sum_{J_{1}+J_{2}+J_{3}=I-1}\nabla^{J_{1}+1}\Sigma\nabla^{J_{2}}\Sigma\nabla^{J_{3}}(\frac{N}{n}-1)+\sum_{m=0}^{I-1}\sum_{J_{1}+J_{2}=m}\nabla^{J_{1}}\text{Ric}\nabla^{J_{2}+I-m}(\frac{N}{n}-1)\\\nonumber+\sum_{m=0}^{I-1}\sum_{J_{1}+J_{2}=m}\nabla^{J_{1}}\text{Riem}\nabla^{J_{2}+I-m}(\frac{N}{n}-1)+\sum_{m=0}^{I-2}\sum_{J_{1}+J_{2}=m}\nabla^{J_{1}+1}\text{Riem}\nabla^{J_{2}+I-m-1}(\frac{N}{n}-1).
\end{align}
The elliptic regularity implies 
\begin{align}
\nonumber||\nabla^{I+2}(\frac{N}{n}-1)||_{L^{2}(M)}\lesssim ||\nabla^{I}\mathscr{L}(\frac{N}{n}-1)||_{L^{2}(M)}\\\nonumber+||(|\Sigma|^{2}+\frac{1}{3})\nabla^{I}(\frac{N}{n}-1)||_{L^{2}(M)}+||\sum_{J_{1}+J_{2}+J_{3}=I-1}\nabla^{J_{1}+1}\Sigma\nabla^{J_{2}}\Sigma\nabla^{J_{3}}(\frac{N}{n}-1)||_{L^{2}(M)}\\\nonumber+||\sum_{m=0}^{I-1}\sum_{J_{1}+J_{2}=m}\nabla^{J_{1}}\text{Ric}\nabla^{J_{2}+I-m}(\frac{N}{n}-1)||_{L^{2}(M)}+||\sum_{m=}^{I-1}\sum_{J_{1}+J_{2}=m}\nabla^{J_{1}}\text{Riem}\nabla^{J_{2}+I-m}(\frac{N}{n}-1)||_{L^{2}(M)}\\\nonumber+||\sum_{m=0}^{I-2}\sum_{J_{1}+J_{2}=m}\nabla^{J_{1}+1}\text{Riem}\nabla^{J_{2}+I-m-1}(\frac{N}{n}-1)||_{L^{2}(M)}.    
\end{align}
Under the boot-strap assumption \ref{eq:strap2}, we estimate each term separately. First estimate
\begin{align}
||(|\Sigma|^{2}+\frac{1}{3})\nabla^{I}(\frac{N}{n}-1)||_{L^{2}(M)}\lesssim \nonumber||\nabla^{I}(\frac{N}{n}-1)||_{L^{2}(M)}(1+e^{-2\gamma T}\mathds{Y}^{2})\lesssim ||\nabla^{I}(\frac{N}{n}-1)||_{L^{2}(M)}.    
\end{align}
The next term reads 
\begin{align}
 ||\sum_{J_{1}+J_{2}+J_{3}=I-1}\nabla^{J_{1}+1}\Sigma\nabla^{J_{2}}\Sigma\nabla^{J_{3}}(\frac{N}{n}-1)||_{L^{2}(M)}\lesssim e^{-\gamma T}\mathds{Y}\Gamma ||\nabla^{I}(\frac{N}{n}-1)||_{L^{2}(M)}\\\nonumber\lesssim ||\nabla^{I}(\frac{N}{n}-1)||_{L^{2}(M)}.   
\end{align}
for the following terms we have 
\begin{align}
||\sum_{m=0}^{I-1}\sum_{J_{1}+J_{2}=m}\nabla^{J_{1}}\text{Ric}\nabla^{J_{2}+I-m}(\frac{N}{n}-1)||_{L^{2}(M)}\lesssim \Gamma ||\nabla^{I}(\frac{N}{n}-1)||_{L^{2}(M)},\\\nonumber 
||\sum_{m=}^{I-1}\sum_{J_{1}+J_{2}=m}\nabla^{J_{1}}\text{Riem}\nabla^{J_{2}+I-m}(\frac{N}{n}-1)||_{L^{2}(M)}\lesssim \Gamma ||\nabla^{I}(\frac{N}{n}-1)||_{L^{2}(M)},\\\nonumber
||\sum_{m=0}^{I-2}\sum_{J_{1}+J_{2}=m}\nabla^{J_{1}+1}\text{Riem}\nabla^{J_{2}+I-m-1}(\frac{N}{n}-1)||_{L^{2}(M)}\lesssim \Gamma ||\nabla^{I}(\frac{N}{n}-1)||_{L^{2}(M)}.
\end{align}
Therefore collecting all the terms together, we conclude 
\begin{align}
\label{eq:iteratenew}
||(\frac{N}{n}-1)||_{H^{I+2}(M)}\lesssim ||\mathscr{L}(\frac{N}{n}-1)||_{H^{I}(M)}+(1+\Gamma)||(\frac{N}{n}-1)||_{H^{I}(M)}.    
\end{align}
Now we prove the desired estimates in an iterative way starting from the first-order estimate. First, recall the elliptic equation 
\begin{align}
-\Delta_{g}(\frac{N}{n}-1)+(|\Sigma|^{2}+\frac{1}{3})(\frac{N}{n}-1)=-|\Sigma|^{2}.     
\end{align}
Now multiply both sides by $\frac{N}{n}-1$ and integrate by parts 
\begin{align}
\label{eq:intby}
 \int_{M}|\nabla(\frac{N}{n}-1)|^{2}\mu_{g}+\int_{M}(|\Sigma|^{2}+\frac{1}{3})(\frac{N}{n}-1)^{2}\mu_{g}=-\int_{M}|\Sigma|^{2}(\frac{N}{n}-1)\mu_{g}.   
\end{align}
Now we estimate the term on the right-hand side by Cauchy-Schwartz as follows 
\begin{align}
-\int_{M}|\Sigma|^{2}(\frac{N}{n}-1)\mu_{g}\leq (n+1)\int_{M}|\Sigma|^{4}\mu_{g}+\frac{1}{n+1}\int_{M}(\frac{N}{n}-1)^{2}\mu_{g}.    
\end{align}
Therefore, \ref{eq:intby} reads 
\begin{align}
 \int_{M}|\nabla(\frac{N}{n}-1)|^{2}\mu_{g}+\int_{M}|\Sigma|^{2}(\frac{N}{n}-1)^{2}\mu_{g}\nonumber+(\frac{1}{3}-\frac{1}{n+1})\int_{M}(\frac{N}{n}-1)^{2}\mu_{g}\leq (n+1)\int_{M}|\Sigma|^{4}\mu_{g}
\end{align}
which by means of the inequality \ref{sobolev} yields 
\begin{align}
  \int_{M}|\nabla(\frac{N}{n}-1)|^{2}\mu_{g}+\int_{M}|\Sigma|^{2}(\frac{N}{n}-1)^{2}\mu_{g}\nonumber+\frac{1}{n(n+1)}\int_{M}(\frac{N}{n}-1)^{2}\mu_{g} \\\nonumber 
  \lesssim \left(\int_{M}|\nabla\Sigma|^{2}\mu_{g}+\int_{M}|\Sigma|^{2}\mu_{g}\right)^{2}.
\end{align}
More concretely, we have the two lowest-order estimates 
\begin{align}
||\frac{N}{n}-1||_{L^{2}(M)}\lesssim ||\Sigma||^{2}_{H^{1}(M)},\\
||\frac{N}{n}-1||_{H^{1}(M)}\lesssim ||\Sigma||^{2}_{H^{1}(M)}.
\end{align}
Now iterate this using \ref{eq:iteratenew} to obtain 
\begin{align}
||\frac{N}{n}-1||_{H^{2}(M)}\lesssim |||\Sigma|^{2}||_{L^{2}(M)}+(1+\Gamma)||\frac{N}{n}-1)||_{L^{2}(M)}\lesssim |||\Sigma|^{2}||_{L^{2}(M)}\nonumber+(1+\Gamma)||\Sigma||^{2}_{H^{1}(M)},\\\nonumber
||\frac{N}{n}-1||_{H^{3}(M)}\lesssim |||\Sigma|^{2}||_{H^{1}(M)}+(1+\Gamma)||\frac{N}{n}-1)||_{H^{1}(M)}\lesssim |||\Sigma|^{2}||_{H^{1}(M)}+(1+\Gamma)||\Sigma||^{2}_{H^{1}(M)},\\\nonumber
||\frac{N}{n}-1||_{H^{4}(M)}\lesssim |||\Sigma|^{2}||_{H^{2}(M)}+(1+\Gamma)||\frac{N}{n}-1)||_{H^{2}(M)}\lesssim |||\Sigma|^{2}||_{H^{2}(M)}+(1+\Gamma)||\Sigma||^{2}_{H^{1}(M)}
\end{align}
and so on. ow use the embedding $H^{i_{1}}(M)\hookrightarrow H^{i_{2}}(M)$ for $i_{1}>i_{2}$, and the algebra property of the Sobolev spaces $H^{s}(M)$ for $s>\frac{n}{2}$ or that of $L^{\infty}(M)\cap H^{s}(M)$ for $s\geq 0$, we conclude the desired estimate 
\begin{align}
 \sum_{j=0}^{I+2}||\nabla^{j}(\frac{N}{n}-1)||_{L^{2}(M)}\lesssim (1+\Gamma)\sum_{j=0}^{I}||\nabla^{j}\Sigma||^{2}_{L^{2}(M)},~I>2.   
\end{align}

\noindent The elliptic estimate yields, together with the algebra property of Sobolev spaces for $I>2$
\begin{align}
 ||\frac{N}{n}-1||_{H^{I+2}(M)}\lesssim ||\Sigma||^{2}_{H^{I}(M)}
\end{align}
\end{proof}

\noindent The next part entails controlling the lower order norm $\mathcal{F}$ of $\Sigma$ in terms of the top order norm $\mathcal{O}$ and the initial data norm $\mathcal{I}^{0}$. 

\subsection{Estimating the lower order norm of TT tensor $\Sigma$ via transport equation for $\Sigma$}
\label{entropy}
\noindent In this section, we prove that the weighted lower order norm of the $TT$ tensor $\Sigma$ is controlled by the top order norm of $(\mathfrak{T},\Sigma)$ and the initial data. Recall the equation for $\Sigma$
\begin{align}
\frac{\partial \Sigma_{ij}}{\partial T}&=&-(n-1)\Sigma_{ij}-\frac{\varphi}{\tau}N\mathfrak{T}_{ij}+\frac{\varphi}{\tau}\nabla_{i}\nabla_{j}(\frac{N}{n}-1)+\frac{2\varphi}{\tau}N\Sigma_{ik}\Sigma^{k}_{j}\\
 &&\nonumber-\frac{\varphi}{n\tau}(\frac{N}{n}-1)g_{ij}-(n-2)(\frac{N}{n}-1)\Sigma_{ij}.    
\end{align}
To estimate the lower order norm $\mathcal{F}$, we will treat the equation for $\Sigma$ as a transport equation. 

\begin{definition}
\label{entropy2}
\begin{align} 
\mathscr{E}^{low}:=\frac{1}{2}\sum_{I\leq 2}\int_{M}|\nabla^{I}\Sigma|^{2}\mu_{g}
\end{align}
\end{definition}
\noindent First note that $\mathcal{F}^{2}\approx e^{2\gamma T}\mathscr{E}^{low}$. We have the following theorem estimating $\mathcal{F}^{2}$.
\begin{proposition}[Coercive control of the reversed entropy]\label{model2}
Let $(M^n,g(T))$ be a smooth solution of the rescaled Einstein vacuum equations with cosmological constant $\Lambda>0$ on a time interval $[T_0,T]$, written in the fixed gauge of this work.  
Let $\mathscr{E}^{\mathrm{low}}(T)\ge0$ denote the reversed entropy functional controlling the lower order norm of $\Sigma$ as defined in \ref{entropy2} and let $\mathcal{F}(T)$ be the associated re-scaled entity i.e., 
\begin{equation}\label{eq:equiv}
e^{2T}\,\mathscr{E}^{\mathrm{low}}(T)\ \lesssim \ \mathcal{F}(T)^2\ \lesssim\ e^{2T}\,\mathscr{E}^{\mathrm{low}}(T)
\end{equation}
and $\mathcal{O}$ be the top order norm \ref{eq:O}
Then the following estimate for the re-scaled reversed entropy holds
\begin{enumerate}
\item[{\normalfont(i)}] (\emph{A–priori upper bound}).  
\begin{equation}\label{eq:upper}
\mathcal{F}(T)^2\ \lesssim\ \Bigl(1+\bigl(\mathcal{I}^{0}\bigr)^{2}+\mathcal{O}^{2}\Bigr).
\end{equation}
\end{enumerate}
In particular, combining \eqref{eq:equiv} and \eqref{eq:upper} yields
\begin{equation}\label{eq:announced}
e^{2T}\,\mathscr{E}^{\mathrm{low}}(T)\ \lesssim\ \Bigl(1+\bigl(\mathcal{I}^{0}\bigr)^{2}+\mathcal{O}^{2}\Bigr),
\end{equation}
where the involved constants are of pure numerical nature i.e., dpendend on the dimension and independent of $T$. 
\end{proposition}

\begin{proof}
Recall definition \ref{entropy2}
 \begin{align}
 \mathscr{E}^{low}(T):=\frac{1}{2}\sum_{I\leq 2}\int_{M(T)}\langle\nabla^{I}\Sigma,\nabla^{I}\Sigma\rangle\mu_{g}.
\end{align}
Explicit application of the time evolution operator $\frac{d}{dT}$ together with the transport theorem \cite{marsden} yields
\begin{align}
 \frac{d\mathscr{E}^{low}}{dT}=\sum_{I\leq 2}\int_{M}\langle\nabla^{I}\partial_{T}\Sigma,\nabla^{I}\Sigma\rangle\mu_{g}+\sum_{I\leq 2}\int_{M}\langle[\partial_{T},\nabla^{I}]\Sigma,\nabla^{I}\Sigma\rangle\mu_{g}+\mathscr{ER}_{2},   
\end{align}
where the error terms $\mathscr{ER}_{2}$ reads schematically
\begin{align}
\mathscr{ER}_{2}\sim \sum_{I\leq 2}\int_{M(T)}\partial_{T}(g^{-1}....g^{-1}g.....g)\nabla^{I}\Sigma\nabla^{I}\Sigma\mu_{g}+\sum_{I\leq 2}\int_{M(T)}|\nabla^{I}\Sigma|^{2}\partial_{T}\mu_{g}.      
\end{align}
Using the equation for $\partial_{T}\Sigma$, $\frac{d\mathscr{E}^{low}}{dT}$ reads 
\begin{align}
\nonumber \frac{d\mathscr{E}^{low}}{dT}=-2(n-1)\mathscr{E}+\sum_{I\leq 2}\frac{\varphi}{\tau}\int_{M(T)}N \nabla^{I}\mathfrak{T}\nabla^{I}\Sigma\mu_{g}
+\sum_{I\leq 2} \sum_{m=0}^{I-1}\sum_{J_{1}+J_{2}=m}\int_{M(T)}\frac{\varphi}{\tau}\nabla^{J_{1}+1}(\frac{N}{n}-1)\nabla^{J_{2}+I-m-1}\Sigma\nabla^{I}\Sigma\mu_{g}\\\nonumber 
+\sum_{I\leq 2}\sum_{m=0}^{I-1}\sum_{J_{1}+J_{2}+J_{3}=m}\int_{M(T)}\frac{\varphi}{\tau}\nabla^{J_{1}}N\nabla^{J_{2}+1}\Sigma\nabla^{J_{3}+I-m-1}\Sigma\nabla^{I}\Sigma\mu_{g}
 +\sum_{I\leq 2}\sum_{m=0}^{I-1}\sum_{J_{1}+J_{2}=m}\int_{M(T)}\nabla^{J_{1}+1}(\frac{N}{n}-1)\nabla^{J_{2}+I-m-1}\Sigma\nabla^{I}\Sigma\mu_{g}\\\nonumber 
 +\frac{\varphi}{\tau}\sum_{I\leq 2}\sum_{J_{1}+J_{2}+J_{3}=I}\int_{M(T)}\nabla^{J_{1}}N\nabla^{J_{2}}\Sigma\nabla^{J_{3}}\Sigma\nabla^{I}\Sigma\mu_{g} 
 +\sum_{I\leq 2}\frac{\varphi}{\tau}\int_{M(T)}\nabla^{I+2}(\frac{N}{n}-1)\nabla^{I}\Sigma\mu_{g}\\\nonumber 
 +\sum_{I\leq 2}\sum_{J_{1}+J_{2}=I}\int_{M(T)}\nabla^{J_{1}}(\frac{N}{n}-1)\nabla^{J_{2}}\Sigma\nabla^{I}\Sigma\mu_{g} 
 +\sum_{I\leq 2}\frac{\varphi}{\tau}\int_{M}\sum_{J_{1}+J_{2}=I-1}\nabla^{J_{1}+1}(\frac{N}{n}-1)\nabla^{J_{2}}\mathfrak{T}\nabla^{I}\Sigma\mu_{g}+\mathscr{ER}_{2}.
\end{align}
Now we estimate each term separately. First, note that theorem \ref{topestimate} implies boundedness of $\sum_{I\leq 2}||\nabla^{I}\mathfrak{T}||_{L^{2}(M)}$ by $\mathcal{I}^{0}$. Therefore   
\begin{align}
 |\sum_{I\leq 2}\frac{\varphi}{\tau}\int_{M(T)}N \nabla^{I}\mathfrak{T}\nabla^{I}\Sigma\mu_{g}|\lesssim e^{-T}\mathcal{O}^{2}.   
\end{align}
The next term is estimated as follows 
\begin{align}
|\sum_{I\leq 2} \sum_{m=0}^{I-1}\sum_{J_{1}+J_{2}=m}\int_{M(T)}\frac{\varphi}{\tau}\nabla^{J_{1}+1}(\frac{N}{n}-1)\nabla^{J_{2}+I-m-1}\Sigma\nabla^{I}\Sigma\mu_{g}|   \lesssim e^{-(1+4\gamma)T}\mathds{Y}^{4}, 
\end{align}
\begin{align}
 |\sum_{I\leq 2}\sum_{m=0}^{I-1}\sum_{J_{1}+J_{2}+J_{3}=m}\int_{M(T)}\frac{\varphi}{\tau}\nabla^{J_{1}}N\nabla^{J_{2}+1}\Sigma\nabla^{J_{3}+I-m-1}\Sigma\nabla^{I}\Sigma\mu_{g}|\lesssim e^{-(1+3\gamma)T}\mathds{Y}^{3},   
\end{align}

\begin{align}
|\sum_{I\leq 2}\sum_{m=0}^{I-1}\sum_{J_{1}+J_{2}=m}\int_{M(T)}\nabla^{J_{1}+1}(\frac{N}{n}-1)\nabla^{J_{2}+I-m-1}\Sigma\nabla^{I}\Sigma\mu_{g}|\lesssim e^{-4\gamma T}\mathds{Y}^{4},    
\end{align}

\begin{align}
|\frac{\varphi}{\tau}\sum_{I\leq 2}\sum_{J_{1}+J_{2}+J_{3}=I}\int_{M(T)}\nabla^{J_{1}}N\nabla^{J_{2}}\Sigma\nabla^{J_{3}}\Sigma\nabla^{I}\Sigma\mu_{g}|\lesssim e^{-(1+3\gamma)T}\mathds{Y}^{3},
\end{align}

\begin{align}
|\sum_{I\leq 2}\frac{\varphi}{\tau}\int_{M(T)}\nabla^{I+2}(\frac{N}{n}-1)\nabla^{I}\Sigma\mu_{g}|\lesssim e^{-(1+3\gamma)T}\mathds{Y}^{3},
\end{align}

\begin{align}
 |\sum_{I\leq 2}\sum_{J_{1}+J_{2}=I}\int_{M(T)}\nabla^{J_{1}}(\frac{N}{n}-1)\nabla^{J_{2}}\Sigma\nabla^{I}\Sigma\mu_{g}| \lesssim e^{-4\gamma T} \mathds{Y}^{4}, 
\end{align}

\begin{align}
|\sum_{I\leq 2}\frac{\varphi}{\tau}\int_{M}\sum_{J_{1}+J_{2}=I-1}\nabla^{J_{1}+1}(\frac{N}{n}-1)\nabla^{J_{2}}\mathfrak{T}\nabla^{I}\Sigma\mu_{g}| \lesssim e^{-(1+3\gamma)T}\Gamma\mathds{Y}^{3}.   
\end{align}
Collection of all terms yields 
\begin{align}
\frac{d\mathscr{E}}{dT}+2(n-1)\mathscr{E}\lesssim e^{-T}\mathcal{O}^{2}+e^{-(1+4\gamma)T}\mathds{Y}^{4}+e^{-(1+3\gamma)T}\mathds{Y}^{3}\\\nonumber+e^{-4\gamma T}\mathds{Y}^{4}+e^{-(1+3\gamma)T}\Gamma\mathds{Y}^{3}
\end{align}
integration of which yields 
\begin{align}
\mathscr{E}(T) \lesssim\; & e^{-2(n-1)(T-T_{0})} \mathscr{E}(T_{0)} \notag \\
& + e^{-2(n-1)T} \left(e^{[2(n-1)-1]T} - e^{[2(n-1)-1]T_{0}}\right) \mathcal{O}^{2} \notag \\
& + e^{-2(n-1)T} \left(e^{[2(n-1)-(1+4\gamma)]T} - e^{[2(n-1)-(1+4\gamma)]T_{0}}\right) \mathds{Y}^{4} \notag \\
& + e^{-2(n-1)T} \left(e^{[2(n-1)-(1+3\gamma)]T} - e^{[2(n-1)-(1+3\gamma)]T_{0}}\right) \mathds{Y}^{3} \notag \\
& + e^{-2(n-1)T} \left(e^{[2(n-1)-4\gamma]T} - e^{[2(n-1)-4\gamma]T_{0}}\right) \mathds{Y}^{4} \notag \\
& + e^{-2(n-1)T} \left(e^{[2(n-1)-(1+3\gamma)]T} - e^{[2(n-1)-(1+3\gamma)]T_{0}}\right) \Gamma \mathds{Y}^{3}.
\end{align}

Alternatively, we can write:
\begin{align}
e^{2\gamma T} \mathscr{E}(T) \leq\; & \frac{e^{2(n-1-\gamma)T_{0}}}{e^{2(n-1-\gamma)T}} e^{2\gamma T_{0}} \mathscr{E}(T_{0}) \notag \\
& + e^{-2(n-1-\gamma)T} \left(e^{[2(n-1)-1]T} - e^{[2(n-1)-1]T_{0}}\right) \mathcal{O}^{2} \notag \\
& + e^{-2(n-1-\gamma)T} \left(e^{[2(n-1)-(1+4\gamma)]T} - e^{[2(n-1)-(1+4\gamma)]T_{0}}\right) \mathds{Y}^{4} \notag \\
& + e^{-2(n-1-\gamma)T} \left(e^{[2(n-1)-(1+3\gamma)]T} - e^{[2(n-1)-(1+3\gamma)]T_{0}}\right) \mathds{Y}^{3} \notag \\
& + e^{-2(n-1-\gamma)T} \left(e^{[2(n-1)-4\gamma]T} - e^{[2(n-1)-4\gamma)]T_{0}}\right) \mathds{Y}^{4} \notag \\
=\; & \frac{e^{2(n-1-\gamma)T_{0}}}{e^{2(n-1-\gamma)T}} e^{2\gamma T_{0}} \mathscr{E}(T_{0}) \notag \\
& + e^{-(1-2\gamma)T} \left(1 - \frac{e^{[2(n-1)-1]T_{0}}}{e^{[2(n-1)-1]T}}\right) \mathcal{O}^{2} \notag \\
& + e^{-(1+2\gamma)T} \left(1 - \frac{e^{[2(n-1)-(1+4\gamma)]T_{0}}}{e^{[2(n-1)-(1+4\gamma)]T}}\right) \mathds{Y}^{4} \notag \\
& + e^{-(1+\gamma)T} \left(1 - \frac{e^{[2(n-1)-(1+3\gamma)]T_{0}}}{e^{[2(n-1)-(1+3\gamma)]T}}\right) \mathds{Y}^{3} \notag \\
& + e^{-2\gamma T} \left(1 - \frac{e^{[2(n-1)-4\gamma]T_{0}}}{e^{[2(n-1)-4\gamma]T}}\right) \mathds{Y}^{3} \notag \\
\lesssim\; & e^{2\gamma T_{0}} \mathscr{E}(T_{0}) 
+ e^{-(1-2\gamma)T} \mathcal{O}^{2} 
+ e^{-(1+2\gamma)T_{0}} \mathds{Y}^{4} 
+ e^{-(1+\gamma)T_{0}} \mathds{Y}^{3} 
+ e^{-2\gamma T_{0}} \mathds{Y}^{3} \notag \\
\lesssim\; & e^{-(1-2\gamma)T} \mathcal{O}^{2} + (\mathcal{I}^{0})^{2} + 1.
\end{align}
We begin from this suboptimal decay estimate 
\begin{align}
\label{eq:suboptimal_recall}
e^{2\gamma T} \mathscr{E}^{low}(T) := \frac{e^{2\gamma T}}{2} \sum_{I\leq 2} \int_{M(T)} \langle \nabla^{I}\Sigma, \nabla^{I}\Sigma \rangle \, \mu_g
\lesssim e^{-(1-2\gamma)T} \mathcal{O}^2 + (\mathcal{I}^0)^2 + 1,  \forall T \in [T_0, T_\infty].
\end{align}

\noindent Observe that the bootstrap assumption \eqref{eq:strap2} allows one to control $e^{2\gamma T}\mathscr{E}^{low}(T)$ uniformly by a constant multiple of $(\mathcal{I}^0)^2 + 1$ provided that $\gamma \leq \frac{1}{2}$. We refrain from immediately fixing $\gamma$ at this threshold. Instead, we pursue an iterative improvement scheme to extend the admissible range of $\gamma$ up to the optimal value $\gamma = 1$.

\noindent By employing the decay rate for the coefficient $\frac{\varphi}{\tau} \lesssim e^{-T}$, and applying the estimate \eqref{model2}, setting $\gamma = \frac{1}{2}$ in \eqref{eq:suboptimal_recall} yields
\begin{align}
e^{T} \mathscr{E}^{low}(T) \lesssim 1 + (\mathcal{I}^0)^2 + \mathcal{O}^2.
\end{align}

\noindent Using this improved decay for the energy, we revisit the energy inequality. Incorporating the refined decay estimate for $||\nabla^{I} \Sigma||_{L^2(M)}$,
\[
||\nabla^{I} \Sigma||_{L^2(M)} \lesssim e^{-\frac{1}{2} T} \left( \mathcal{O} + (\mathcal{I}^0)^2 + 1 \right), \quad I \leq 2,
\]
corresponding to $\gamma = \frac{1}{2}$, the differential inequality governing $\mathscr{E}^{low}(T)$ takes the form
\begin{align}
\frac{d}{dT} \mathscr{E}^{low}\nonumber + 2(n-1) \mathscr{E}^{low} \lesssim e^{-(1+\frac{1}{2}) T} \mathcal{O}^2 + e^{-(1+4\gamma) T} \mathds{Y}^4 + e^{-(1+3\gamma) T} \mathds{Y}^3 + e^{-4\gamma T} \mathds{Y}^4 + e^{-(1+3\gamma) T} \mathcal{I}^0 \mathds{Y}^3.
\end{align}

\noindent Integrating this inequality, one obtains the improved bound
\begin{align}
e^{(1+\frac{1}{2}) T} \mathscr{E}^{low}(T) \lesssim 1 + (\mathcal{I}^0)^2 + \mathcal{O}^2.
\end{align}

\noindent Iterating this procedure—each time replacing $\gamma$ by the newly achieved decay rate—leads to a convergent sequence for $2 \gamma^{optimal}$ given by
\begin{align}
2 \gamma^{optimal} = 1 + \frac{1}{2} \left( 1 + \frac{1}{2} \left( 1 + \cdots \right) \right) = 2,
\quad \text{hence} \quad \gamma^{optimal} = 1.
\end{align}

\noindent Consequently, the optimal decay rate for the low-order energy is established:
\begin{align}
e^{2\gamma T} \mathscr{E}^{low}(T) \lesssim 1 + (\mathcal{I}^0)^2 + \mathcal{O}^2, \quad \gamma = 1.
\end{align}
\end{proof}
\noindent As an immediate corollary, we obtain 
\begin{corollary}
 $\mathcal{F}\lesssim 1+\mathcal{I}^{0}+\mathcal{O}$ and $\mathcal{N}^{\infty}\lesssim 1+(\mathcal{I}^{0})^{2}+\mathcal{O}^{2}$ by elliptic estimates \ref{elliptic}.   
\end{corollary}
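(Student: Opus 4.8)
The plan is to deduce both bounds from Theorem~\ref{model2} together with the Sobolev inequalities of Proposition~\ref{sobolev} and the lapse estimate of Proposition~\ref{elliptic}; no new analysis is needed, only bookkeeping with the exponential weights (recall $\gamma=1$ throughout).

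For $\mathcal{F}$ I would argue as follows. By definition $\mathcal{F}=\sum_{0\le I\le 3}\|e^{\gamma T}\nabla^{I}\Sigma\|_{L^{2}(M)}$, and since for finitely many non-negative numbers $\sum a_{I}^{2}\approx(\sum a_{I})^{2}$, while the weight $e^{\gamma T}$ depends on $T$ alone and hence commutes with the spatial covariant derivatives, one has $\mathcal{F}^{2}\approx e^{2\gamma T}\mathscr{E}^{low}$. Theorem~\ref{model2} then gives $\mathcal{F}^{2}\lesssim 1+(\mathcal{I}^{0})^{2}+\mathcal{O}^{2}$; taking square roots and using $\sqrt{a+b+c}\le\sqrt{a}+\sqrt{b}+\sqrt{c}$ for $a,b,c\ge 0$ yields $\mathcal{F}\lesssim 1+\mathcal{I}^{0}+\mathcal{O}$, the first assertion.

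For $\mathcal{N}^{\infty}=\|e^{\gamma T}\Sigma\|_{L^{\infty}(M)}+\|e^{2\gamma T}(\frac{N}{n}-1)\|_{L^{\infty}(M)}$ I would bound the two terms separately. For the first, Proposition~\ref{sobolev} gives $\|\Sigma\|_{L^{\infty}(M)}\lesssim\sum_{I\le 2}\|\nabla^{I}\Sigma\|_{L^{2}(M)}$; multiplying by $e^{\gamma T}$ and summing, $\|e^{\gamma T}\Sigma\|_{L^{\infty}(M)}\lesssim\sum_{I\le 2}\|e^{\gamma T}\nabla^{I}\Sigma\|_{L^{2}(M)}\le\mathcal{F}\lesssim 1+\mathcal{I}^{0}+\mathcal{O}$. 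For the second, Proposition~\ref{sobolev} gives $\|\frac{N}{n}-1\|_{L^{\infty}(M)}\lesssim\|\frac{N}{n}-1\|_{H^{2}(M)}$, and the lowest-order instance of the elliptic estimate established in the proof of Proposition~\ref{elliptic} (the injectivity/compactness argument, which for $I=0$ drops the lower-order term, hence the factor $1+\bbGamma$, because $\mathscr{L}\ge\frac1n>0$ makes the spectral gap uniform) gives $\|\frac{N}{n}-1\|_{H^{2}(M)}\lesssim\| |\Sigma|^{2} \|_{L^{2}(M)}\lesssim\|\Sigma\|_{H^{1}(M)}^{2}$ via the $3$-dimensional embedding $H^{1}\hookrightarrow L^{4}$. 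Multiplying by $e^{2\gamma T}=(e^{\gamma T})^{2}$ and using the previous step, $\|e^{2\gamma T}(\frac{N}{n}-1)\|_{L^{\infty}(M)}\lesssim(e^{\gamma T}\|\Sigma\|_{H^{1}(M)})^{2}\lesssim\mathcal{F}^{2}\lesssim 1+(\mathcal{I}^{0})^{2}+\mathcal{O}^{2}$. Adding the two contributions and absorbing linear terms via $\mathcal{I}^{0}\lesssim 1+(\mathcal{I}^{0})^{2}$ and $\mathcal{O}\lesssim 1+\mathcal{O}^{2}$ gives $\mathcal{N}^{\infty}\lesssim 1+(\mathcal{I}^{0})^{2}+\mathcal{O}^{2}$.

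Since the corollary is essentially a bookkeeping consequence of Theorem~\ref{model2}, there is no substantive obstacle. The only points that need a little care are: passing from quadratic control of $\mathcal{F}^{2}$ to linear control of $\mathcal{F}$; commuting the exponential weights past $\nabla$ (legitimate, since they are functions of $T$ only); and invoking the $\bbGamma$-free low-order version of the elliptic estimate for the lapse, so that the final bound for $\mathcal{N}^{\infty}$ does not reintroduce the bootstrap constant $\bbGamma$.
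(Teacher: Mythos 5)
Your proposal is correct and follows exactly the route the paper intends: the $\mathcal{F}$ bound is just the square root of the estimate $\mathcal{F}^{2}\approx e^{2\gamma T}\mathscr{E}^{low}\lesssim 1+(\mathcal{I}^{0})^{2}+\mathcal{O}^{2}$ of Theorem \ref{model2}, and the $\mathcal{N}^{\infty}$ bound comes from Proposition \ref{sobolev} applied to $\Sigma$ together with the $\bbGamma$-free low-order lapse estimate $\|\frac{N}{n}-1\|_{H^{2}(M)}\lesssim\|\Sigma\|^{2}_{H^{1}(M)}$ established inside the proof of Proposition \ref{elliptic}. Your explicit care about not reintroducing the bootstrap constant $\bbGamma$ is a point the paper leaves implicit, but it is consistent with the injectivity/compactness argument given there.
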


\subsection{Energy Estimates}
\label{energy}
\noindent With the individual estimates available, we are ready to complete the energy estimates for the pair $(\Sigma,\mathfrak{T})$. Let us recall the definition of the top order energy $\mathscr{E}^{top}$
\begin{definition}
\label{toporder}
\begin{align}
\mathscr{E}^{top}:=\frac{1}{2}\sum_{I\leq 3}\int_{M}\langle\nabla^{I}\Sigma,\nabla^{I}\Sigma\rangle \mu_{g}+\frac{1}{2}\sum_{I\leq 3}\int_{M}\langle\nabla^{I-1}\mathfrak{T},\nabla^{I-1}\mathfrak{T}\rangle\mu_{g}.    
\end{align}    
\end{definition}

\noindent First, we recall the following elementary proposition 
\begin{proposition}[Integrated energy estimate for the wave system]
\label{wave}
Let $(M,g)$ be a closed Riemannian manifold and consider a time interval $[T_0,T_{\infty}] \subset \mathbb{R}$. Suppose $(\Phi,\Psi)$ is a pair of smooth tensor fields of type $(K,L)$ on $[T_0,T_{\infty}]\times M$ satisfying the coupled system
\begin{align}
\partial_T \Phi &= -N \Psi + P_{\Phi}, \\
\partial_T \Psi &= -N \Delta_g \Phi + Q_{\Psi},
\end{align}
where $N = N(T,x)$ is the lapse function, and $P_{\Phi}$, $Q_{\Psi}$ are given source terms of the same tensorial type as $\Phi$ and $\Psi$, respectively. Denote by $M(T)$ the time slice at time $T$ equipped with the induced metric $g(T)$ and volume form $\mu_g$.

\noindent Then, the following integrated energy identity holds for all $T \in [T_0,T_{\infty}]$:
\begin{align}
\int_{M(T)} \big( |\nabla \Phi|^2 + |\Psi|^2 \big) \,\mu_g
= &\int_{M(T_0)} \big( |\nabla \Phi|^2 + |\Psi|^2 \big) \,\mu_g \\
&+ 2 \int_{T_0}^T \int_{M(t)} \big( \langle \nabla P_{\Phi}, \nabla \Phi \rangle + \langle Q_{\Psi}, \Psi \rangle \big) \,\mu_g\, dt \nonumber \\
&+ \int_{T_0}^T \mathscr{E}\mathscr{R}(t)\, dt, \nonumber
\end{align}
where the error functional $\mathscr{E}\mathscr{R}(t)$ admits the bound
\begin{align}
|\mathscr{E}\mathscr{R}(t)| \lesssim &\left\| \frac{N}{n} - 1 \right\|_{L^\infty(M(t))} \big( \|\nabla \Phi\|_{L^2(M(t))}^2 + \|\Psi\|_{L^2(M(t))}^2 \big) \\
&+ \frac{\varphi}{\tau} \|\Sigma\|_{L^\infty(M(t))} \big( \|\nabla \Phi\|_{L^2(M(t))}^2 + \|\Psi\|_{L^2(M(t))}^2 \big) \nonumber \\
&+ \left[ \frac{\varphi}{\tau} \big( \|\nabla \Sigma\|_{L^4(M(t))} + \|\Sigma\|_{L^\infty(M(t))} \|\nabla (\frac{N}{n} - 1)\|_{L^4(M(t))} \big) + \|\nabla(\frac{N}{n} - 1)\|_{L^4(M(t))} \right] \nonumber \\
&\quad \times \|\Phi\|_{L^4(M(t))} \|\nabla \Phi\|_{L^2(M(t))}. \nonumber
\end{align}
Here, all norms are computed with respect to the metric $g(T)$ on the slice $M(T)$, and the constants implied by $\lesssim$ are of numerical type.

\end{proposition}
\begin{proof}
Let $(\Phi, \Psi)$ be a smooth pair of tensor fields on the compact Riemannian manifold $(M,g)$, where the metric $g = g(T)$ depends smoothly on time $T \in [T_0, T_\infty]$. We consider the energy functional
\[
\mathscr{E}(T) := \frac{1}{2} \int_{M(T)} \big( |\nabla \Phi|^2 + |\Psi|^2 \big) \, \mu_g,
\]
where $\mu_g$ denotes the Riemannian volume form induced by $g(T)$ on the hypersurface $M(T)$, and $|\cdot|$ and $\nabla$ denote the metric norm and Levi-Civita connection respectively.
 
\noindent By the transport theorem for evolving hypersurfaces (cf.~\cite{marsden}), we have the exact identity
\begin{align} \label{eq:transport_energy}
\frac{d}{dT} \int_{M(T)} \left( |\nabla \Phi|^2 + |\Psi|^2 \right) \mu_g
= \int_{M(T)} \partial_T \left( |\nabla \Phi|^2 + |\Psi|^2 \right) \mu_g
+ \int_{M(T)} \left( |\nabla \Phi|^2 + |\Psi|^2 \right) \partial_T \mu_g.
\end{align}
We rewrite \eqref{eq:transport_energy} as
\begin{align} \label{eq:energy_derivative}
\frac{d}{dT} \mathscr{E}(T) 
= \int_{M(T)} \left( \langle \nabla \partial_T \Phi, \nabla \Phi \rangle + \langle \partial_T \Psi, \Psi \rangle \right) \mu_g 
+ \mathscr{ER}(T),
\end{align}
where the \emph{error term} $\mathscr{ER}(T)$ collects the contributions
\[
 \mathscr{ER}(T)\sim \int_{M(T)}\langle[\partial_{T},\nabla]\Phi,\nabla\Phi\rangle\mu_{g}  +\int_{M(T)}\partial_{T}(g^{-1}....g^{-1}g.....g)\nabla\Phi\nabla\Phi\mu_{g}\]
 \[+\int_{M(T)}\partial_{T}(g^{-1}....g^{-1}g.....g)\Psi\Psi\mu_{g}+\int_{M(T)}(|\nabla\Phi|^{2}+|\Psi|^{2})\partial_{T}\mu_{g}  
\]
 
\noindent Recall that the time evolution of the metric $g = g(T)$
\begin{align} \label{eq:metric_evolution}
\partial_T g_{ij} = -\frac{2\varphi}{\tau} N \Sigma_{ij} - 2 \left(1 - \frac{N}{n}\right) g_{ij},
\quad\text{and}\quad
\partial_T g^{ij} = \frac{2\varphi}{\tau} N \Sigma^{ij} + 2 \left(1 - \frac{N}{n}\right) g^{ij},
\end{align}
where $\varphi, \tau, N, n$, and $\Sigma$ are as in the geometric setup.

\noindent Consequently, the time derivative of the volume form $\mu_g$ is given by
\[
\partial_T \mu_g = \frac{1}{2} \mu_g \, \mathrm{tr}_g (\partial_T g) = - n \left(1 - \frac{N}{n}\right) \mu_g,
\]
which follows from \eqref{eq:metric_evolution}.

\noindent The commutator $[\partial_T, \nabla] \Phi$ is expressible via the time derivative of the Christoffel symbols $\Gamma[g]$:
\[
[\partial_T, \nabla] \Phi = (\partial_T \Gamma) \ast \Phi,
\]
where $\ast$ denotes appropriate tensor contractions. By direct computation,
\[
|\partial_T \Gamma| \lesssim \frac{\varphi}{\tau} |\nabla (N \Sigma)| + |\nabla \left(\tfrac{N}{n} - 1\right)|,
\]
hence
\[
|[\partial_T, \nabla] \Phi| \lesssim \frac{\varphi}{\tau} \left( |\nabla \Sigma| + |\Sigma| |\nabla(\tfrac{N}{n} - 1)| \right) + |\nabla(\tfrac{N}{n} - 1)|.
\]

\noindent Applying Hölder's inequality with Sobolev embeddings yields
\begin{align*}
\left| \int_{M(T)} \langle [\partial_T, \nabla] \Phi, \nabla \Phi \rangle \mu_g \right|
&\lesssim \frac{\varphi}{\tau} \left( \|\nabla \Sigma\|_{L^4} + \|\Sigma\|_{L^\infty} \|\nabla(\tfrac{N}{n} - 1)\|_{L^4} \right) \|\Phi\|_{L^4} \|\nabla \Phi\|_{L^2} \\
&\quad + \|\nabla(\tfrac{N}{n} - 1)\|_{L^4} \|\Phi\|_{L^4} \|\nabla \Phi\|_{L^2}.
\end{align*}

\noindent Similarly, the terms involving $\partial_T \mu_g$ are controlled by
\[
\left| \int_{M(T)} \left( |\nabla \Phi|^2 + |\Psi|^2 \right) \partial_T \mu_g \right| \lesssim \left\| 1 - \frac{N}{n} \right\|_{L^\infty} \left( \|\nabla \Phi\|_{L^2}^2 + \|\Psi\|_{L^2}^2 \right).
\]

\noindent Collecting these bounds, we deduce the estimate
\begin{align} \label{eq:error_estimate}
|\mathscr{ER}(T)| \lesssim & \left( \left\| \frac{N}{n} - 1 \right\|_{L^\infty(M(T))} + \frac{\varphi}{\tau} \|\Sigma\|_{L^\infty(M(T))} \right) \left( \|\nabla \Phi\|_{L^2(M(T))}^2 + \|\Psi\|_{L^2(M(T))}^2 \right) \\
& + \left[ \frac{\varphi}{\tau} \left( \|\nabla \Sigma\|_{L^4(M(T))} + \|\Sigma\|_{L^\infty(M(T))} \|\nabla(\tfrac{N}{n} - 1)\|_{L^4(M(T))} \right) + \|\nabla(\tfrac{N}{n} - 1)\|_{L^4(M(T))} \right] \|\Phi\|_{L^4(M(T))} \|\nabla \Phi\|_{L^2(M(T))}. \nonumber
\end{align}

 \noindent We now turn to the principal terms in \eqref{eq:energy_derivative}:
\[
\int_{M(T)} \langle \nabla \partial_T \Phi, \nabla \Phi \rangle \mu_g + \int_{M(T)} \langle \partial_T \Psi, \Psi \rangle \mu_g.
\]
\noindent Substituting the evolution equations
\[
\partial_T \Phi = - N \Psi + P_{\Phi}, \quad \partial_T \Psi = - N \Delta_g \Phi + Q_{\Psi},
\]
we write
\begin{align*}
&\int_{M(T)} \langle \nabla \partial_T \Phi, \nabla \Phi \rangle \mu_g + \int_{M(T)} \langle \partial_T \Psi, \Psi \rangle \mu_g \\
&= \int_{M(T)} \langle \nabla (- N \Psi + P_{\Phi}), \nabla \Phi \rangle \mu_g + \int_{M(T)} \langle - N \Delta_g \Phi + Q_{\Psi}, \Psi \rangle \mu_g \\
&= \int_{M(T)} \langle \nabla P_{\Phi}, \nabla \Phi \rangle \mu_g + \int_{M(T)} \langle Q_{\Psi}, \Psi \rangle \mu_g + \underbrace{\int_{M(T)} \langle \nabla (- N \Psi), \nabla \Phi \rangle \mu_g + \int_{M(T)} \langle - N \Delta_g \Phi, \Psi \rangle \mu_g}_{=: I}.
\end{align*}

\noindent By integration by parts and the absence of boundary (since $M$ is closed), the integral
\[
I = \int_{M(T)} \langle \nabla (- N \Psi), \nabla \Phi \rangle \mu_g + \int_{M(T)} \langle - N \Delta_g \Phi, \Psi \rangle \mu_g
\]
vanishes exactly. This cancellation is a fundamental feature reflecting the underlying hyperbolic structure of the system and guarantees no loss of derivatives in the energy estimates, preserving regularity. This completes the proof.
\end{proof}
\begin{remark}
The terms such as $||\Phi||_{L^{4}(M)}$ can further be controlled by $\sum_{I\leq 1}||\nabla^{I}\Phi||_{L^{2}(M)}$ by means of proposition 
\ref{sobolev}.
\end{remark}

\begin{proposition}[Top-order energy bound]\label{topestimate}
Assume the hypotheses of the main theorem~\ref{main} are satisfied, and let $(\Sigma,\mathfrak{T})$ denote the coupled wave pair associated with the rescaled Einstein system in the gauge fixed framework of this work.  
Let $\mathscr{E}^{\mathrm{top}}(T)\ge0$ denote the top--order energy functional corresponding to $(\Sigma,\mathfrak{T})$ as defined in \ref{toporder}, and let $\mathcal{I}^{0}$ denote the initial size of the data at time $T_0=a\gg 1$.  

\noindent Then for every $T$ in the lifespan of the solution one has
\[
\mathscr{E}^{\mathrm{top}}\ \lesssim\ 1+\mathcal{I}^{0},
\]
where the involved constant is of purely numerical type.
\end{proposition}

\begin{proof}
The proof of the energy estimates for the top-order term is based on the proposition \ref{wave}. First, recall the definition of the top-order energy
\begin{align}
    \mathscr{E}^{top}:=\frac{1}{2}\sum_{I\leq 2}\int_{M}\langle\nabla^{I+1}\Sigma,\nabla^{I+1}\Sigma\rangle \mu_{g}+\frac{1}{2}\sum_{I\leq 2}\int_{M}\langle\nabla^{I}\mathfrak{T},\nabla^{I}\mathfrak{T}\rangle\mu_{g}.
\end{align}
We want to use the proposition \ref{wave} with $\Phi=\nabla^{I}\Sigma$ and $\Psi=\nabla^{I}\mathfrak{T}$ i.e., 
\begin{align}
\label{eq:wave1}
 \partial_{T}\nabla^{I}\Sigma=-N\frac{\varphi}{\tau}\nabla^{I}\mathfrak{T}+P^{I},\\
 \label{eq:wave2}
 \partial_{T}\nabla^{I}\mathfrak{T}=-N\frac{\varphi}{\tau}\Delta_{g}\nabla^{I}\Sigma+Q^{I},
\end{align}
where the error terms $P^{I}$ are $Q^{I}$ are expressed schematically as follows 
\begin{align}
P^{I} &:=& [\partial_{T},\nabla^{I}]\Sigma 
+ \frac{\varphi}{\tau}[\nabla^{I},N]\mathfrak{T} 
+ \nabla^{I}\Bigg( 
  -(n-1)\Sigma_{ij} 
  + \frac{\varphi}{\tau} \nabla_{i}\nabla_{j}\left(\frac{N}{n}-1\right) \nonumber \\
&&\quad 
  + \frac{2\varphi}{\tau}N \Sigma_{ik}\Sigma^{k}_{j}
  - \frac{\varphi}{n\tau}\left(\frac{N}{n}-1\right)g_{ij} 
  - (n-2)\left(\frac{N}{n}-1\right)\Sigma_{ij} 
\Bigg), \\
Q^{I} &:=& [\partial_{T},\nabla^{I}]\mathfrak{T} 
- \frac{\varphi}{\tau}[\nabla^{I},N\Delta_{g}]\Sigma 
+ \nabla^{I}\Bigg(
    n\frac{\varphi}{\tau} \nabla^{l}\nabla_{i}\left(\frac{N}{n}-1\right)\Sigma_{jl} \nonumber \\
&&\quad 
  + n\frac{\varphi}{\tau} \nabla^{l}\left(\frac{N}{n}-1\right)\nabla_{i}\Sigma_{jl}
  + (2 - n)\nabla_{i}\nabla_{j}\left(\frac{N}{n}-1\right) \nonumber \\
&&\quad 
  + n\frac{\varphi}{\tau} \nabla^{l}\nabla_{j}\left(\frac{N}{n}-1\right)\Sigma_{il}
  + n\frac{\varphi}{\tau} \nabla^{l}\left(\frac{N}{n}-1\right)\nabla_{j}\Sigma_{il} \nonumber \\
&&\quad 
  - n\frac{\varphi}{\tau} \Delta_{g}\left(\frac{N}{n}-1\right)\Sigma_{ij}
  - 2n\frac{\varphi}{\tau} \nabla^{l}\left(\frac{N}{n}-1\right)\nabla_{l}\Sigma_{ij} \nonumber \\
&&\quad 
  - \Delta_{g}\left(\frac{N}{n}-1\right)g_{ij}
  + N\frac{\varphi}{\tau}\left( \mathfrak{T}_{ki}\Sigma^{k}_{j} + \mathfrak{T}_{kj}\Sigma^{k}_{i} \right)
  + \frac{2(n-1)}{n^{2}}\left(\frac{N}{n}-1\right)g_{ij}
\Bigg).
\end{align}
A couple of important points to note here. In the expression or $P^{I}$, term $\frac{\varphi}{n\tau}\nabla^{I}(\frac{N}{n}-1)g_{ij}$ is pure trace while $\Sigma$ is transe-verse traceless. Therefore, this term does not contribute to the energy estimates. Similarly, in the expression of $Q^{I}$, the terms $\nabla^{I+2}(\frac{N}{n}-1)g_{ij}$ and $\nabla^{I}(\frac{N}{n}-1)g_{ij}$ are of pure trace type and therefore in the energy estimate for $\mathfrak{T}$ they contribute to the nonlinear term $|\Sigma|^{2}$ since $\tr_{g}\mathfrak{T}=|\Sigma|^{2}$ by the Hamiltonian constraint (\ref{eq:HC11}). 
Now apply proposition \ref{wave} to the system (\ref{eq:wave1})-(\ref{eq:wave2}) to obtain 
\begin{align}
\mathscr{E}^{top}(T)= \mathscr{E}^{top}(T_{0})\nonumber+2\sum_{I\leq 2}\int_{T_{0}}^{T}\int_{M(T)}\left(\langle\nabla P^{I},\nabla^{I+1}\Sigma\rangle+\langle Q^{I},\nabla^{I}\mathfrak{T}\rangle\right)\mu_{g}dt+\int_{T_{0}}^{T}\mathscr{ER}(t)dt,  
\end{align}
where  
\begin{align}
 |\mathscr{ER}(t)|\lesssim \left(||\frac{N}{n}-1||_{L^{\infty}(M(T))}\nonumber+\frac{\varphi}{\tau}||\Sigma||_{L^{\infty}(M(T))}\right)\sum_{I\leq 2}(||\nabla^{I+1}\Sigma||^{2}_{L^{2}(M(T))}+|\nabla^{I}\mathfrak{T}|^{2}_{L^{2}(M(T))})\\\
\nonumber+[\frac{\varphi}{\tau}(||\nabla\Sigma||_{L^{4}(M(T))}+||\Sigma||_{L^{\infty}(M(T))}||\nabla(\frac{N}{n}-1)||_{L^{4}(M(T))})+||\nabla(\frac{N}{n}-1)||_{L^{4}(M(T))}]\\\nonumber\sum_{I\leq 2}||\nabla^{I}\Sigma||_{L^{4}(M(T))}||\nabla^{I+1}\Sigma||_{L^{2}(M(T))}.   
\end{align}
Our goal is to control the error terms and the spacetime integral terms involving $P^{I}$ and $Q^{I}$. We do so schematically as follows 
\begin{align}
 &&\int_{T_{0}}^{T}\int_{M(T)}\langle\nabla P^{I},\nabla^{I+1}\Sigma\rangle\mu_{g}dt\\\nonumber 
 &&= 
 \underbrace{-2(n-1)\int_{T_{0}}^{T}\int_{M(T)}|\nabla^{I+1}\Sigma|^{2}\mu_{g}dt}_{I-decay~term}\\\nonumber
 &&+\sum_{I\leq 2}\sum_{m=0}^{I}\sum_{J_{1}+J_{2}=m}\int_{T_{0}}^{T}\frac{\varphi}{\tau}\int_{M(T)}\nonumber \nabla^{J_{1}+1}(\frac{N}{n}-1)\nabla^{J_{2}+I-m-1}\Sigma\nabla^{I+1}\Sigma\mu_{g}dt\\\nonumber&&+\sum_{I\leq 2}\sum_{m=0}^{I}\sum_{J_{1}+J_{2}+J_{3}=m}\int_{T_{0}}^{T}\frac{\varphi}{\tau}\int_{M(T)}\nabla^{J_{1}}N\nabla^{J_{2}+1}\Sigma\nabla^{J_{3}+I-m-1}\Sigma\nabla^{I+1}\Sigma\mu_{g}dt\\\nonumber 
 &&+\sum_{I\leq 2}\sum_{m=0}^{I}\sum_{J_{1}+J_{2}=m}\int_{T_{0}}^{T}\int_{M(T)}\nabla^{J_{1}+1}(\frac{N}{n}-1)\nabla^{J_{2}+I-m-1}\Sigma\nabla^{I+1}\Sigma\mu_{g}dt\\\nonumber 
 &&+\sum_{I\leq 2}\int_{T_{0}}^{T}\frac{\varphi}{\tau}\int_{M(T)}\nabla^{I+3}(\frac{N}{n}-1)\nabla^{I+1}\Sigma\mu_{g}\\\nonumber 
&&+\sum_{I\leq 2}\sum_{J_{1}+J_{2}+J_{3}=I+1}\int_{T_{0}}^{T}\frac{\varphi}{\tau}\int_{M(T)}\nabla^{J_{1}}N\nabla^{J_{2}}\Sigma\nabla^{J_{3}}\Sigma\nabla^{I+1}\Sigma\mu_{g}\\\nonumber &&+\sum_{I\leq 2}\sum_{J_{1}+J_{2}=I+1}\int_{T_{0}}^{T}\int_{M(T)}\nabla^{J_{1}}(\frac{N}{n}-1)\nabla^{J_{2}}\Sigma\nabla^{I+1}\Sigma\mu_{g}\\\nonumber 
&&+\sum_{I\leq 2}\sum_{J_{1}+J_{2}=I}\int_{T_{0}}^{T}\frac{\varphi}{\tau}\int_{M(T)}\nabla^{J_{1}+1}(\frac{N}{n}-1)\nabla^{J_{2}}\mathfrak{T}\nabla^{I+1}\Sigma\mu_{g}.
\end{align}
We can estimate each term as follows. We use the elliptic estimate whenever necessary. First note that $\frac{\varphi}{\tau}=\frac{e^{-T}}{\sqrt{e^{-2T}+3n(n+1)}}\lesssim e^{-T}$ 
\begin{align}
| \sum_{I\leq 2}\sum_{m=0}^{I}\sum_{J_{1}+J_{2}=m}\int_{T_{0}}^{T}\frac{\varphi}{\tau}\int_{M(T)}\nonumber \nabla^{J_{1}+1}(\frac{N}{n}-1)\nabla^{J_{2}+I-m}\Sigma\nabla^{I+1}\Sigma |\lesssim \int_{T_{0}}^{T}e^{-(1+3\gamma)t}\mathds{Y}^{4}dt\\\nonumber \lesssim (e^{-(1+3\gamma)T_{0}}-e^{-(1+3\gamma)T})\mathds{Y}^{3}\Gamma\lesssim  e^{-(1+3\gamma)T_{0}}(1-e^{-(1+3\gamma)(T-T_{0})})\mathds{Y}^{3}\Gamma\lesssim 1. 
\end{align}
The maximum derivative on the lapse function in the previous estimate is $I+1$ and therefore by the elliptic estimate, it is bounded by $||\Sigma||^{2}_{H^{I-1}(M)}$ which yields $e^{-2\gamma t}$ decay. The next term is estimated as 
\begin{align}
 |\sum_{I\leq 2}\sum_{m=0}^{I}\sum_{J_{1}+J_{2}+J_{3}=m}\int_{T_{0}}^{T}\frac{\varphi}{\tau}\int_{M(T)}\nabla^{J_{1}}N\nabla^{J_{2}+1}\Sigma\nabla^{J_{3}+I-m}\Sigma\nonumber\nabla^{I+1}\Sigma\mu_{g}dt|\lesssim \int_{T_{0}}^{T}e^{-(1+\gamma)t}\mathds{Y}\Gamma^{2}dt\\\nonumber 
 \lesssim e^{-(1+\gamma)T_{0}}(1-e^{-(1+\gamma)(T-T_{0})})\mathds{Y}\Gamma^{2}\lesssim 1.
\end{align}
The next term is estimated as 
\begin{align}
|\sum_{I\leq 2}\sum_{m=0}^{I}\sum_{J_{1}+J_{2}=m}\int_{T_{0}}^{T}\int_{M(T)}\nabla^{J_{1}+1}(\frac{N}{n}-1)\nabla^{J_{2}+I-m}\Sigma\nabla^{I+1}\Sigma\mu_{g}dt|\lesssim \int_{T_{0}}^{T}e^{-3\gamma t} \Gamma\mathds{Y}^{3}dt\\\nonumber \lesssim e^{-3\gamma T_{0}}(1-e^{-3\gamma(T-T_{0})})\Gamma\mathds{Y}^{3}\lesssim 1, 
\end{align}
\begin{align}
|\sum_{I\leq 2}\int_{T_{0}}^{T}\frac{\varphi}{\tau}\int_{M(T)}\nabla^{I+3}(\frac{N}{n}-1)\nabla^{I+1}\Sigma\mu_{g}|\lesssim \int_{T_{0}}^{T}e^{-t}\Gamma^{3}dt\lesssim e^{-T_{0}}(1-e^{-(T-T_{0})})\Gamma^{3}\lesssim 1,  
\end{align}
\begin{align}
|\sum_{I\leq 2}\sum_{J_{1}+J_{2}+J_{3}=I+1}\int_{T_{0}}^{T}\frac{\varphi}{\tau}\int_{M(T)}\nabla^{J_{1}}N\nabla^{J_{2}}\Sigma\nabla^{J_{3}}\Sigma\nabla^{I+1}\Sigma\mu_{g}|\lesssim   \int_{T_{0}}^{T}e^{-(1+\gamma)t}\Gamma^{2}\mathds{Y}dt\\\nonumber 
\lesssim e^{-(1+\gamma)T_{0}}(1-e^{-(1+\gamma)(T-T_{0})})\Gamma^{2}\mathds{Y}\lesssim 1.
\end{align}
The next term is estimated as follows 
\begin{align}
|\sum_{I\leq 2}\sum_{J_{1}+J_{2}=I+1}\int_{T_{0}}^{T}\int_{M(T)}\nabla^{J_{1}}(\frac{N}{n}-1)\nabla^{J_{2}}\Sigma\nabla^{I+1}\Sigma\mu_{g}|\lesssim \int_{T_{0}}^{T}e^{-2\gamma t}\Gamma^{2}\mathds{Y}^{2}dt\\\nonumber \lesssim e^{-2\gamma T_{0}}(1-e^{-2\gamma(T-T_{0})})\Gamma^{2}\mathds{Y}^{2}\lesssim 1,    
\end{align}
where we noted that the top order term in lapse $||\nabla^{4}(\frac{N}{n}-1)||_{L^{2}(M)}$ is estimated by $||\Sigma||^{2}_{H^{2}(M)}$ which exhibits $e^{-2\gamma t}$ decay. The last term in the expression of $P^{I}$ is estimated as 
\begin{align}
 |\sum_{I\leq 2}\sum_{J_{1}+J_{2}=I}\int_{T_{0}}^{T}\frac{\varphi}{\tau}\int_{M(T)}\nabla^{J_{1}+1}(\frac{N}{n}-1)\nabla^{J_{2}}\mathfrak{T}\nabla^{I+1}\Sigma\mu_{g}|\lesssim \int_{T_{0}}^{T}e^{-(1+2\gamma) t}\Gamma^{2}\mathds{Y}^{2} dt\\\nonumber 
 \lesssim e^{-(1+2\gamma) T_{0}}(1-e^{-(1+2\gamma)(T-T_{0})})\Gamma^{2}\mathds{Y}^{2}\lesssim 1.
\end{align}
Now we estimate the terms in the expressions of $Q^{I}$. First, the spacetime integral involving $Q^{I}$ is explicitly evaluated as follows (written in schematic notation) 
\begin{align}
&&\int_{T_{0}}^{T}\int_{M(T)}\langle Q^{I},\nabla^{I}\mathfrak{T}\rangle\mu_{g}dt  \\\nonumber 
&&\sim \sum_{I\leq 2}\sum_{m=0}^{I-1}\sum_{J_{1}+J_{2}=m}\int_{T_{0}}^{T}\frac{\varphi}{\tau}\int_{M}\nabla^{J_{1}+1}(\frac{N}{n}-1)\nabla^{J_{2}+I-m-1}\mathfrak{T}\nabla^{I}\mathfrak{T}\mu_{g}dt\\\nonumber 
&&+\sum_{I\leq 2}\sum_{m=0}^{I-1}\sum_{J_{1}+J_{2}+J_{3}=m}\int_{T_{0}}^{T}\frac{\varphi}{\tau}\int_{M(T)}\nabla^{J_{1}}N\nabla^{J_{2}+1}\Sigma\nabla^{J_{3}+I-m-1}\mathfrak{T}\nabla^{I}\mathfrak{T}\mu_{g}dt\\\nonumber 
&&+\sum_{I\leq 2}\sum_{m=0}^{I-1}\sum_{J_{1}+J_{2}=m}\int_{T_{0}}^{T}\int_{M(T)}\nabla^{J_{1}+1}(\frac{N}{n}-1)\nabla^{J_{2}+I-m-1}\mathfrak{T}\nabla^{I}\mathfrak{T}\mu_{g}dt\\\nonumber 
&&+\sum_{I\leq 2}\int_{T_{0}}^{T}\frac{\varphi}{\tau}\int_{M}\nabla(\frac{N}{n}-1)\nabla^{I+1}\Sigma\nabla^{I}\mathfrak{T}\mu_{g}dt\\\nonumber
&&+\sum_{I\leq 2}\sum_{m=0}^{I-1}\sum_{J_{1}+J_{2}=m}\int_{T_{0}}^{T}\frac{\varphi}{\tau}\int_{M(T)}N\nabla^{J_{1}}\text{Riem}\nabla^{J_{2}+I-m}\Sigma\nabla^{I}\mathfrak{T}\mu_{g}dt\\\nonumber 
&&+\sum_{I\leq 2}\sum_{m=0}^{I-1}\sum_{J_{1}+J_{2}=m}\int_{T_{0}}^{T}\frac{\varphi}{\tau}\int_{M(T)}\nabla^{J_{1}+1}\text{Riem}\nabla^{J_{2}+I-m-1}\Sigma,\nabla^{I}\mathfrak{T}\rangle\mu_{g}dt\\\nonumber 
&&+\sum_{I\leq 2}\sum_{J_{1}+J_{2}=I}\int_{T_{0}}^{T}\frac{\varphi}{\tau}\int_{M(T)}\nabla^{J_{1}}\text{Riem}\nabla^{J_{2}}\Sigma\nabla^{I}\mathfrak{T}\mu_{g}dt\\\nonumber 
&&+\sum_{I\leq 2}\sum_{J_{1}+J_{2}=I}\int_{T_{0}}^{T}\frac{\varphi}{\tau}\int_{M(T)}\nabla^{J_{1}+2}(\frac{N}{n}-1)\nabla^{J_{2}}\Sigma,\nabla^{I}\mathfrak{T}\rangle\mu_{g}dt\\\nonumber &&+\sum_{I\leq 2}\sum_{J_{1}+J_{2}=I}\int_{T_{0}}^{T}\frac{\varphi}{\tau}\int_{M(T)}\nabla^{J_{1}+1}(\frac{N}{n}-1)\nabla^{J_{2}+1}\Sigma\nabla^{I}\mathfrak{T}\mu_{g}dt\\\nonumber 
&&+\sum_{I\leq 2}\sum_{J_{1}+J_{2}=I}\int_{T_{0}}^{T}\frac{\varphi}{\tau}\int_{M}\langle\nabla^{J_{1}}\mathfrak{T}\nabla^{J_{2}}\Sigma \nabla^{I}\mathfrak{T}\rangle\mu_{g}dt\\\nonumber 
&&+\sum_{I\leq 2}\sum_{J_{1}+J_{2}=I-1}\int_{T_{0}}^{T}\frac{\varphi}{\tau}\int_{M(T)}\nabla^{J_{1}+1}(\frac{N}{n}-1)\nabla^{J_{2}+1}\Sigma\nabla^{I}\mathfrak{T}\mu_{g}dt.
\end{align}
Now we estimate each term separately. The first term is estimated as follows 
\begin{align}
|\sum_{I\leq 2}\sum_{m=0}^{I-1}\sum_{J_{1}+J_{2}=m}\int_{T_{0}}^{T}\frac{\varphi}{\tau}\int_{M}\nabla^{J_{1}+1}(\frac{N}{n}-1)\nabla^{J_{2}+I-m-1}\mathfrak{T}\nabla^{I}\mathfrak{T}\mu_{g}dt|\\\nonumber 
\lesssim \int_{T_{0}}^{T}e^{-(1+2\gamma)t}\mathds{Y}^{2}\Gamma^{2}dt \lesssim e^{-(1+2\gamma) T_{0}}(1-e^{-(1+2\gamma)(T-T_{0})})\Gamma^{2}\mathds{Y}^{2} \lesssim 1. 
\end{align}
\noindent Here, the elliptic estimate \ref{elliptic} for the lapse function is used. The next term is estimated as 
\begin{align}
|\sum_{I\leq 2}\sum_{m=0}^{I-1}\sum_{J_{1}+J_{2}+J_{3}=m}\int_{T_{0}}^{T}\frac{\varphi}{\tau}\int_{M(T)}\nabla^{J_{1}}N\nabla^{J_{2}+1}\Sigma\nabla^{J_{3}+I-m-1}\mathfrak{T}\nabla^{I}\mathfrak{T}\mu_{g}dt|\\\nonumber 
\lesssim \int_{T_{0}}^{T}e^{-(1+\gamma)t}\Gamma^{2}\mathds{Y}dt\lesssim e^{-(1+\gamma) T_{0}}(1-e^{-(1+\gamma)(T-T_{0})})\Gamma^{2}\mathds{Y}\lesssim 1. 
\end{align}
Notice that $\Sigma$ appears as $\nabla^{I}\Sigma$ at the top most order and therefore is estimated in $L^{2}$ yielding extra $e^{-\gamma t}$ decay. The next term is estimated as 
\begin{align}
|\sum_{I\leq 2}\sum_{m=0}^{I-1}\sum_{J_{1}+J_{2}=m}\int_{T_{0}}^{T}\int_{M(T)}\nabla^{J_{1}+1}(\frac{N}{n}-1)\nabla^{J_{2}+I-m-1}\mathfrak{T}\nabla^{I}\mathfrak{T}\mu_{g}dt|\\\nonumber 
\lesssim \int_{T_{0}}^{T}e^{-2\gamma t}\mathds{Y}^{2}\Gamma^{2}dt\lesssim e^{-2\gamma T_{0}}(1-e^{-2\gamma(T-T_{0})})\Gamma^{2}\mathds{Y}^{2}\lesssim 1.
\end{align}
The next terms are estimated as 
\begin{align}
|\sum_{I\leq 2}\int_{T_{0}}^{T}\frac{\varphi}{\tau}\int_{M}\nabla(\frac{N}{n}-1)\nabla^{I+1}\Sigma\nabla^{I}\mathfrak{T}\mu_{g}dt|\lesssim \int_{T_{0}}^{T}e^{-(1+2\gamma)t}\mathds{Y}^{2}\Gamma^{2}dt\\\nonumber \lesssim e^{-(1+2\gamma) T_{0}}(1-e^{-(1+2\gamma)(T-T_{0})})\Gamma^{2}\mathds{Y}^{2} \lesssim 1,     
\end{align}
\begin{align}
|\sum_{I\leq 2}\sum_{m=0}^{I-1}\sum_{J_{1}+J_{2}=m}\int_{T_{0}}^{T}\frac{\varphi}{\tau}\int_{M(T)}N\nabla^{J_{1}}\text{Riem}\nabla^{J_{2}+I-m}\Sigma\nabla^{I}\mathfrak{T}\mu_{g}dt|\\\nonumber 
\lesssim \int_{T_{0}}^{T}e^{-(1+\gamma)t}\Gamma^{2}\mathds{Y}dt\lesssim e^{-(1+\gamma) T_{0}}(1-e^{-(1+\gamma)(T-T_{0})})\Gamma^{2}\mathds{Y} \lesssim 1,    
\end{align}
\begin{align}
 |\sum_{I\leq 2}\sum_{m=0}^{I-1}\sum_{J_{1}+J_{2}=m}\int_{T_{0}}^{T}\frac{\varphi}{\tau}\int_{M(T)}\nabla^{J_{1}+1}\text{Riem}\nabla^{J_{2}+I-m-1}\Sigma,\nabla^{I}\mathfrak{T}\rangle\mu_{g}dt|\\\nonumber 
 \lesssim \int_{T_{0}}^{T}e^{-(1+\gamma)t}\Gamma^{2}\mathds{Y}dt\lesssim e^{-(1+\gamma) T_{0}}(1-e^{-(1+\gamma)(T-T_{0})})\Gamma^{2}\mathds{Y} \lesssim 1,
\end{align}
\begin{align}
 |\sum_{I\leq 2}\sum_{J_{1}+J_{2}=I}\int_{T_{0}}^{T}\frac{\varphi}{\tau}\int_{M(T)}\nabla^{J_{1}}\text{Riem}\nabla^{J_{2}}\Sigma\nabla^{I}\mathfrak{T}\mu_{g}dt|\\\nonumber
 \lesssim \int_{T_{0}}^{T}e^{-(1+\gamma)t}\Gamma^{2}\mathds{Y}dt\lesssim e^{-(1+\gamma) T_{0}}(1-e^{-(1+\gamma)(T-T_{0})})\Gamma^{2}\mathds{Y} \lesssim 1,
\end{align}
\begin{align}
 |\sum_{I\leq 2}\sum_{J_{1}+J_{2}=I}\int_{T_{0}}^{T}\frac{\varphi}{\tau}\int_{M(T)}\nabla^{J_{1}+2}(\frac{N}{n}-1)\nabla^{J_{2}}\Sigma\nabla^{I}\mathfrak{T}\mu_{g}dt|\\\nonumber 
 \lesssim \int_{T_{0}}^{T}e^{-(1+3\gamma)t}\mathds{Y}^{3}\Gamma dt\lesssim e^{-(1+3\gamma) T_{0}}(1-e^{-(1+3\gamma)(T-T_{0})})\Gamma\mathds{Y}^{3} \lesssim 1,
\end{align}
\begin{align}
 |\sum_{I\leq 2}\sum_{J_{1}+J_{2}=I}\int_{T_{0}}^{T}\frac{\varphi}{\tau}\int_{M(T)}\nabla^{J_{1}+1}(\frac{N}{n}-1)\nabla^{J_{2}+1}\Sigma\nabla^{I}\mathfrak{T}\mu_{g}dt|\\\nonumber 
 \lesssim \int_{T_{0}}^{T}e^{-(1+2\gamma)t}\mathds{Y}^{2}\Gamma^{2}dt\lesssim e^{-(1+2\gamma) T_{0}}(1-e^{-(1+2\gamma)(T-T_{0})})\Gamma^{2}\mathds{Y}^{2} \lesssim 1,
\end{align}
\begin{align}
|\sum_{I\leq 2}\sum_{J_{1}+J_{2}=I}\int_{T_{0}}^{T}\frac{\varphi}{\tau}\int_{M}\nabla^{J_{1}}\mathfrak{T}\nabla^{J_{2}}\Sigma \nabla^{I}\mathfrak{T}\mu_{g}dt|\lesssim \int_{T_{0}}^{T}e^{-(1+\gamma)t}\mathds{Y}\Gamma^{2}dt\\\nonumber \lesssim e^{-(1+\gamma) T_{0}}(1-e^{-(1+\gamma)(T-T_{0})})\Gamma^{2}\mathds{Y}\lesssim 1,
\end{align}
and 
\begin{align}
|\sum_{I\leq 2}\sum_{J_{1}+J_{2}=I-1}\int_{T_{0}}^{T}\frac{\varphi}{\tau}\int_{M(T)}\nabla^{J_{1}+1}(\frac{N}{n}-1)\nabla^{J_{2}+1}\Sigma\nabla^{I}\mathfrak{T}\mu_{g}dt|\\\nonumber \lesssim \int_{T_{0}}^{T}e^{-(1+2\gamma)t}\Gamma^{2}\mathds{Y}^{2}dt\lesssim e^{-(1+2\gamma) T_{0}}(1-e^{-(1+2\gamma)(T-T_{0})})\Gamma^{2}\mathds{Y}^{2}\lesssim 1.     
\end{align}
Now note that the first term in the expression involving $P^{I}$ contains term $I$ which is negative definite. Therefore, the collection of every term yields 
\begin{align}
 \mathscr{E}^{top}(T)\lesssim \mathscr{E}^{top}(T_{0})+1\lesssim \mathcal{I}^{0}+1   
\end{align}
which completes the proof of the theorem. 
\end{proof}
\begin{corollary}
 $\mathcal{O}\lesssim 1+\mathcal{I}^{0}$   
\end{corollary}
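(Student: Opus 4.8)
The plan is to read this off directly from Theorem~\ref{topestimate}, which already supplies the uniform bound $\mathscr{E}^{top}\lesssim 1+\mathcal{I}^{0}$ on $[T_{0},T_{\infty}]$; the only thing left is to pass from the top-order energy $\mathscr{E}^{top}$ to the total norm $\mathcal{O}$.

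First I would note that $\mathcal{O}$ and $\mathscr{E}^{top}$ are assembled, up to relabeling of indices, from the very same finite family of $L^{2}(M)$-norms of $\Sigma$ and $\mathfrak{T}$ and their covariant derivatives --- through order $4$ for $\Sigma$ and order $3$ for $\mathfrak{T}$ --- all taken with respect to the same dynamical metric $g(T)$. Listing these nonnegative quantities as $a_{1},\dots,a_{m}$ with $m$ a fixed, dimension-dependent number, the elementary inequalities $\tfrac{1}{m}\big(\textstyle\sum_{i}a_{i}\big)^{2}\le \sum_{i}a_{i}^{2}\le \big(\sum_{i}a_{i}\big)^{2}$ give $\mathcal{O}^{2}\approx \mathscr{E}^{top}$, hence $\mathcal{O}\lesssim (\mathscr{E}^{top})^{1/2}$. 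It matters here only that both norms are defined relative to the \emph{same} metric $g$, so no further appeal to the metric comparison of Propositions~\ref{metric1}--\ref{metric2} is needed at this step.

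Combining with Theorem~\ref{topestimate} yields $\mathcal{O}\lesssim (1+\mathcal{I}^{0})^{1/2}$, and since $1+\mathcal{I}^{0}\ge 1$ we have $(1+\mathcal{I}^{0})^{1/2}\le 1+\mathcal{I}^{0}$, which is exactly the claimed estimate, uniformly in $T\in[T_{0},T_{\infty}]$. As the bound is independent of $T_{\infty}$, it persists on all of $[T_{0},\infty)$ once global existence has been established by the continuity argument of Section~\ref{bootstrapargument}. Together with the corollary to Theorem~\ref{model2}, this also closes the bootstrap hierarchy $\mathcal{N}^{\infty}\to\mathcal{F}\to\mathcal{O}$ and thereby improves the assumption~\ref{eq:strap2}.

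There is no real obstacle to this corollary: every analytic ingredient --- the cancellation of the genuinely top-order pair in the hyperbolic energy identity of Proposition~\ref{wave} (the term labelled $I$ there), the bookkeeping of the error terms $P^{I},Q^{I}$ via the elliptic estimate~\ref{elliptic} for the lapse and the bootstrap decay of $\mathcal{F}$ and $\mathcal{N}^{\infty}$, and the smallness arrangement $(\bbGamma+\mathds{Y}+\mathds{L})^{20}\le e^{T_{0}/10}$ that absorbs the structural constants --- has already been carried out inside the proof of Theorem~\ref{topestimate}. What remains is the one-line reduction above, whose only mild subtlety, the equivalence $\mathcal{O}^{2}\approx\mathscr{E}^{top}$, is immediate from the definitions.
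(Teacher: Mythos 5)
Your proposal is correct and is exactly the route the paper intends: the corollary is stated without proof as an immediate consequence of Theorem \ref{topestimate}, using only the equivalence $\mathcal{O}^{2}\approx\mathscr{E}^{top}$ (finitely many nonnegative terms, all measured in the same metric $g$). Your extra observation that this actually yields $\mathcal{O}\lesssim(1+\mathcal{I}^{0})^{1/2}\le 1+\mathcal{I}^{0}$ is a correct and slightly sharper reading of the theorem as stated.
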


\begin{remark}
 Notice that the estimates are uniform in $T$, so one can take the limit $T_{\infty}=\infty$.   
\end{remark}

\section{Proof of the main results}
\label{mainproof}
\subsection{Proof of theorem \ref{main}}
\label{proof1}
\noindent In this section, we prove the main theorem with the aid of the uniform estimates obtained in section \ref{energy}. Note that we have already presented the idea of the proof in section \ref{bootstrapargument}. Here we will use a contradiction argument. First, recall the statement of the main theorem 
\begin{theorem-non} [Global existence, $\Lambda>0, \sigma(M)\leq 0$]
Let $(\widehat{M}^{3+1}, \widehat{g})$ be a globally hyperbolic Lorentzian spacetime satisfying the Einstein vacuum equations with positive cosmological constant $\Lambda > 0$, and suppose that $\widehat{M}$ admits a constant mean curvature (CMC) foliation by compact spacelike hypersurfaces diffeomorphic to a closed $3$-manifold $M$. Assume furthermore that $M$ is of negative Yamabe type, i.e., $\sigma(M) \leq 0$.

\noindent Fix a smooth background Riemannian metric $\xi_0$ on $M$ and a constant $C > 1$. For any sufficiently large initial energy quantity $\mathcal{I}^{0} > 0$, there exists a constant $a = a(\mathcal{I}^{0}) > 0$, sufficiently large so that $\mathcal{I}^{0} e^{-a/10} < 1$.

\noindent Let $(g_0, \Sigma_0)$ be an initial data set verifying the Einstein constraint equations at initial CMC time $T_0 = a$, written in CMC-transported spatial coordinates and satisfying:
\begin{align}
\label{eq:1}
C^{-1} \xi_0 \leq g_0 \leq C \xi_0,
\end{align}
\begin{align}
\label{eq:2}
\sum_{I = 0}^{3} \| \nabla^I \Sigma_0 \|_{L^2(M)} + \sum_{I = 0}^{2} \left( \| \nabla^I \mathfrak{T}[g_0] \|_{L^2(M)} + \| e^a \nabla^I \Sigma_0 \|_{L^2(M)} \right) \leq \mathcal{I}^0,
\end{align}
where $\mathfrak{T}[g_0]$ denotes the renormalized trace-free spatial Ricci curvature tensor of $g_0$.

\noindent Then, the Einstein-$\Lambda$ evolution equations admit a unique classical solution
\[
T \mapsto (g(T), \Sigma(T)) \in \mathcal{C}^\infty([T_0, \infty) \times M)
\]
in CMC-transported spatial coordinates, satisfying the constraint equations at each slice $T$ and obeying the following uniform a priori estimates for all $T \in [T_0, \infty)$:
\begin{align}
\sum_{I = 0}^{3} \| \nabla^I \Sigma(T) \|_{L^2(M)} + \sum_{I = 0}^{2} \left( \| \nabla^I \mathfrak{T}[g(T)] \|_{L^2(M)} + \| e^T \nabla^I \Sigma(T) \|_{L^2(M)} \right) \leq C_1(1 + \mathcal{I}^0),
\end{align}
\begin{align}
C_2^{-1} g_0 \leq g(T) \leq C_2 g_0.
\end{align}
Here, $C_1, C_2 > 0$ are numerical constants depending only on the universal geometric and analytic data of the problem (e.g., Sobolev constants of $(M, g_0)$ and the constants in the structure equations), but independent of $T$. The developed spacetime is future geodesically complete.

\noindent Moreover, the solution $(g(T), \Sigma(T))$ converges in the $C^\infty$ topology, as $T \to \infty$, to a limiting Riemannian metric $\widetilde{g}$ on $M$ of pointwise constant negative scalar curvature, in the sense that
\[
\Sigma(T) \to 0 \quad \text{and} \quad g(T) \to \widetilde{g}, \quad \text{as } T \to \infty,
\]
with convergence holding in all Sobolev norms. In particular, the spacetime $(\widehat{M}, \widehat{g})$ admits a future-complete CMC foliation asymptotic to a constant negative scalar curvature slice.  
\end{theorem-non}

\noindent We begin by establishing global existence of the CMC Einstein-$\Lambda$ flow under the stated assumptions. The proof proceeds by a standard continuation argument, combining local well-posedness for the elliptic-hyperbolic system with a contradiction argument based on a priori estimates and weak compactness.

\noindent Let us first observe that the coupled Einstein-$\Lambda$ system, expressed in CMC-transported spatial coordinates, reduces to a manifestly elliptic-hyperbolic formulation for the unknowns $(g, \Sigma)$, consisting of a hyperbolic evolution equation for the metric $g$, a transport-type equation for $\Sigma$, and elliptic constraint equations for the lapse and shift. The system satisfies the structural conditions of the classical theory developed for mixed elliptic-hyperbolic systems (see, e.g., \cite{elliptichyperbolic}). In particular, the associated initial value problem admits a unique classical solution on a time interval $[T_0, T)$ for some $T > T_0$, with the initial data $(g_0, \Sigma_0)$ prescribed at time $T_0 = a$.

\noindent We assume, for contradiction, that the maximal existence interval is bounded above by some finite time $T < \infty$, i.e., the solution ceases to exist beyond time $T$. By the uniform a priori estimates derived in Section~\ref{energy}, combined with Sobolev persistence of regularity, the fields $g(T'), \Sigma(T'), \mathfrak{T}(T')$ remain uniformly bounded in the Sobolev spaces $H^3(M)$, $H^3(M)$, and $H^2(M)$ respectively, for all $T' \in [T_0, T)$.

\noindent Using reflexivity and the Banach-Alaoglu theorem, we extract a weakly convergent subsequence $\{T_n'\}_{n \in \mathbb{N}} \subset [T_0, T)$ with $T_n' \to T^-$ such that:
\begin{align*}
\Sigma(T_n') &\rightharpoonup \Sigma(T) \quad \text{weakly in } H^3(M), \\
\mathfrak{T}(T_n') &\rightharpoonup \mathfrak{T}(T) \quad \text{weakly in } H^2(M).
\end{align*}
Moreover, by weak lower semicontinuity of the Sobolev norms and the uniform estimates, we have:
\begin{align*}
\sum_{I = 0}^{3} \| \nabla^I \Sigma(T) \|_{L^2(M)} &\leq \liminf_{n \to \infty} \sum_{I = 0}^{3} \| \nabla^I \Sigma(T_n') \|_{L^2(M)} \lesssim 1 + \mathcal{I}^0, \\
\sum_{I = 0}^{2} \| \nabla^I \mathfrak{T}(T) \|_{L^2(M)} &\leq \liminf_{n \to \infty} \sum_{I = 0}^{2} \| \nabla^I \mathfrak{T}(T_n') \|_{L^2(M)} \lesssim 1 + \mathcal{I}^0.
\end{align*}
Since the constraint equations are preserved along the flow and depend smoothly on the variables, it follows that $(g(T), \Sigma(T))$ defines a compatible initial data set at time $T$. By the same local well-posedness theory, we may extend the solution to a strictly larger time interval $[T, T + \epsilon)$ for some $\epsilon > 0$ depending only on the bounds for $\Sigma(T)$ and $\mathfrak{T}(T)$. This contradicts the assumption that $[T_0, T)$ was the maximal interval of existence. We therefore conclude that the solution extends globally in time, i.e., $T = \infty$.

\noindent We now address the asymptotic behavior as $T \to \infty$. From the a priori energy estimates derived in Section~\ref{energy}, and in particular from exponential decay of the appropriate norms, the time derivative of the metric satisfies:
\[
\|\partial_T g(T')\|_{H^2(M)} \lesssim \mathcal{I}^0 e^{-(1 + \gamma) T'} + (\mathcal{I}^0)^2 e^{-2\gamma T'}
\]
for all $T' \geq T_0$. Integrating this differential inequality from $T$ to infinity yields
\begin{align*}
\| g(T) - g(\infty) \|_{H^2(M)} &\leq \int_T^{\infty} \| \partial_{T'} g(T') \|_{H^3(M)} \, dT' \\
&\lesssim \mathcal{I}^0 e^{-(1 + \gamma) T} + (\mathcal{I}^0)^2 e^{-2\gamma T},
\end{align*}
which shows that $g(T)$ converges strongly in $H^2(M)$ to a limiting Riemannian metric $\widetilde{g} := g(\infty)$ as $T \to \infty$. Similarly,
\[
\| \Sigma(T) \|_{H^2(M)} \lesssim \mathcal{I}^0 e^{-\gamma T} \to 0,
\]
so that $\Sigma(T) \to 0$ strongly in $H^2(M)$, and hence in $C^0(M)$ by Sobolev embedding.

\noindent Finally, we consider the scalar curvature of the limiting metric. The Hamiltonian constraint equation takes the form
\[
R(g(T)) + \frac{n - 1}{n} = |\Sigma(T)|^2_g,
\]
and from the above convergence and the Sobolev algebra property of $H^s(M)$ for $s > \frac{3}{2}$, it follows that:
\begin{align*}
\| R(g(T)) + \tfrac{n - 1}{n} \|_{H^2(M)} &= \| |\Sigma(T)|^2_g \|_{H^2(M)} \\
&\lesssim \| \Sigma(T) \|_{H^2(M)}^2 \to 0.
\end{align*}
Therefore, in the limit $T \to \infty$, we obtain:
\[
R(\widetilde{g}) = \lim_{T \to \infty} R(g(T)) = - \frac{n - 1}{n},
\]
with convergence strongly in $H^2(M)$ and hence pointwise. Thus, the limiting Riemannian manifold $(M, \widetilde{g})$ is a smooth manifold equipped with a metric of constant negative scalar curvature, and the full spacetime $(\widetilde{M}, \widehat{g})$ admits a future-complete CMC foliation asymptotic to this constant curvature geometry.
Let $\mathcal{C}(\lambda)$ be a future-directed causal geodesic (timelike or null), with affine parameter $\lambda$ and tangent vector $\alpha^\mu = \frac{d \mathcal{C}^\mu}{d\lambda}$ satisfying $\widehat{g}(\alpha, \alpha) = -1$ (timelike) or $0$ (null). In CMC-transported coordinates $(T, x^i)$, we write $\mathcal{C}(\lambda) = (T(\lambda), x^i(\lambda))$, and define $\alpha^0 := \frac{dT}{d\lambda}$. To prove future completeness, it suffices to show
\[
\lambda(T) \to \infty \quad \text{as} \quad T \to \infty \quad \Longleftrightarrow \quad \int_{T_0}^\infty \frac{1}{\alpha^0(T)} \, dT = \infty.
\]

\noindent Let $N$ denote the rescaled lapse in the CMC gauge so that the spacetime metric reads
\[
\widehat{g} = -N^2 dT^2 + g_{ij}(T) dx^i dx^j.
\]
Introduce the future-directed co-vector field $Z := N \partial_T$, and decompose the geodesic tangent as
\[
\alpha = N \alpha^0 Z + W,
\]
where $W$ is tangent to the constant-$T$ hypersurfaces. Then
\[
\widehat{g}(\alpha, \alpha) = -N^2 (\alpha^0)^2 + |W|^2_{g(T)} = \begin{cases}
-1 & \text{(timelike)} \\
0 & \text{(null)}
\end{cases},
\]
so that $|W|^2_{g(T)} = N^2 (\alpha^0)^2 + \epsilon$, where $\epsilon = 1$ or $0$ respectively. Thus,
\[
|W|^{2}_{g(T)}\geq N^2 (\alpha^0)^2, \quad \text{and in all cases} \quad f := N^2 (\alpha^0)^2 \geq 0.
\]

\noindent Differentiating $f$ along $\mathcal{C}$ and using the geodesic equation $\nabla[\widehat{g}]_{\alpha} \alpha = 0$, we obtain
\[
\frac{d}{dT} f = \frac{2}{\alpha^0} \widehat{g}(\alpha, Z) \cdot \widehat{g}(\alpha, \nabla[\widehat{g}]_{\alpha} Z),
\]
and using $\widehat{g}(\alpha, Z) = -N \alpha^0$, we compute
\[
\widehat{g}(\alpha, \nabla_\alpha Z) = -\alpha^0 \nabla_W N + K_{ij} W^i W^j,
\]
where $K$ is the second fundamental form of the constant-$T$ slices. Thus,
\[
\frac{d}{dT} f = -2 N \left( \alpha^0 \nabla_W N - K_{ij} W^i W^j \right).
\]

\noindent Decompose $K = \Sigma + \tfrac{\tau}{n} g$, with $\tau = \mathrm{tr}_g K < 0$ and $\Sigma$ trace-free. Using Cauchy--Schwarz and Sobolev embedding, we estimate
\[
\left| \frac{d}{dT} \log f \right| \leq \| \nabla N \|_{L^\infty} + \| N \Sigma \|_{L^\infty}.
\]
By the main decay estimates (Theorem~\ref{main}), we have
\[
\| \nabla N \|_{L^\infty} + \| N \Sigma \|_{L^\infty} \lesssim e^{-T},
\]
and thus
\[
\left| \frac{d}{dT} \log f \right| \lesssim e^{-T}, \quad \Rightarrow \quad f(T) \to C > 0 \quad \text{as } T \to \infty.
\]

\noindent Therefore, $N^2 (\alpha^0)^2 \leq C$ uniformly for large $T$, and
\[
\int_{T_0}^\infty \frac{1}{\alpha^0} \, dT \geq \int_{T_0}^\infty \frac{N}{\sqrt{f}} \, dT \gtrsim \int_{T_0}^\infty N \, dT.
\]
Since $N = n + \mathcal{O}(e^{-T})$, the lapse remains uniformly bounded below by a positive constant, implying
\[
\int_{T_0}^\infty N \, dT = \infty.
\]
It follows that $\lambda(T) \to \infty$ as $T \to \infty$, establishing future completeness for all causal geodesics.

\hfill$\square$
\begin{remark}[Non-convergence to an Einstein Metric]
\label{noconvergence}
We emphasize that in general, the trace-free renormalized Ricci tensor $\mathfrak{T}$ does not decay to zero as $T \to \infty$. This indicates that the limiting metric $\widetilde{g} := \lim_{T \to \infty} g(T)$ fails to be Einstein, even though it has constant scalar curvature. The underlying obstruction is geometric: the compact manifold $M$ may not admit any Einstein metric (e.g., hyperbolic metric in the case $n=3$), and hence the evolution governed by the Einstein equations with $\Lambda > 0$ does not necessarily drive the geometry toward an Einstein configuration.

\noindent To substantiate this, consider the evolution of the squared $L^2$-norm of $\mathfrak{T}$,
\begin{align}
S(T) := \| \mathfrak{T}(T) \|_{L^2(M)}^2.
\end{align}
Using the evolution equation for $\mathfrak{T}$ and the uniform high-order energy bounds obtained in Section~\ref{proof1}, along with standard Sobolev inequalities and Grönwall-type arguments, one derives the differential inequality
\begin{align}
\frac{d}{dT} S(T) \geq -C (\mathcal{I}^0)^2 e^{-2T},
\end{align}
where $C>0$ is a constant depending only on the background geometry and the constants in the elliptic-hyperbolic structure of the system. Integrating this inequality from $T_0$ to $T$, we obtain
\begin{align}
S(T) \geq S(T_0) - C (\mathcal{I}^0)^2 e^{-2T_0} \left(1 - e^{-2(T - T_0)}\right).
\end{align}
Letting $T \to \infty$, we find
\begin{align}
\liminf_{T \to \infty} \| \mathfrak{T}(T) \|_{L^2(M)}^2 \geq S(T_0) - C (\mathcal{I}^0)^2 e^{-2T_0}.
\end{align}
Hence, for sufficiently large initial energy $\mathcal{I}^0, T_{0}$, and correspondingly large $S(T_0)$ (which is permitted by our assumptions), one may ensure that
\begin{align}
\liminf_{T \to \infty} \| \mathfrak{T}(T) \|_{L^2(M)}^2 \gtrsim 1,
\end{align}
i.e., $\mathfrak{T}$ remains bounded away from zero as $T \to \infty$. This implies that the limiting metric $\widetilde{g}$ fails to satisfy the Einstein condition $\operatorname{Ric}_{\widetilde{g}} = \lambda \widetilde{g}$ for any $\lambda \in \mathbb{R}$, although it satisfies the weaker condition of constant scalar curvature. Notably, the evolution does not asymptote to a hyperbolic geometry even when $M$ admits one. Thus, the asymptotic geometry is generally not Einstein in the presence of a positive cosmological constant.
\end{remark}

\subsection{Proof of Corollary \ref{nocollapse}}
\label{noncollapse}
\noindent \begin{proof}
\noindent Recall that the space of Riemannian metrics on a closed 3-manifold \(M\) satisfying the uniform bounds
\[
|\mathrm{Riem}(g)| \leq C, \quad \operatorname{Vol}(M,g) \geq v > 0, \quad \operatorname{diam}(M,g) \leq D,
\]
for fixed constants \(C, v, D > 0\), is precompact in the \(C^{1,\alpha}\) and \(L^{2,p}\) topologies for appropriate \(\alpha \in (0,1)\) and \(p > 3\).

\noindent For any \(\epsilon > 0\), define the \(\epsilon\)-thick and \(\epsilon\)-thin parts of \((M,g)\) by
\[
M^{\epsilon} := \{ x \in M \mid \operatorname{Vol}(B_x(1)) \geq \epsilon \}, \quad M_{\epsilon} := \{ x \in M \mid \operatorname{Vol}(B_x(1)) < \epsilon \},
\]
where \(B_x(1)\) is the geodesic ball of radius 1 centered at \(x\) with respect to \(g\). The Bishop-Gromov volume comparison theorem \cite{bg} implies that for any sequence of metrics \(\{g_i\}\) satisfying the uniform bounds above, the thin part is empty for sufficiently small \(\epsilon\), i.e., \(M = M^{\epsilon}\).

\noindent By Theorem \ref{main}, the uniform \(L^2\)-based energy estimates on curvature and the metric components imply uniform control on \(|\mathrm{Riem}(g(T))|\), diameter, and volume lower bounds for all \(T \geq T_0\).

\noindent Furthermore, consider the variation of the volume of a fixed geodesic ball \(B(1)\) under the CMC evolution:
\[
\left| \operatorname{Vol}(B(1))_T - \operatorname{Vol}(B(1))_{T_0} \right| = \left| \int_{T_0}^T \frac{d}{dT'} \operatorname{Vol}(B(1))_{T'} \, dT' \right|.
\]
By the first variation formula for volume under the flow, and using the uniform exponential decay estimates for the lapse function \(N\) (cf. Section \ref{energy}), we obtain
\[
\left| \operatorname{Vol}(B(1))_T - \operatorname{Vol}(B(1))_{T_0} \right| \lesssim \operatorname{Vol}(B(1))_{T_0} \int_{T_0}^T \left\| \frac{N}{n} - 1 \right\|_{L^\infty(B(t'))} dt' \lesssim e^{-2T_0}(1 - e^{-2(T - T_0)}).
\]
Thus, for sufficiently large initial time \(T_0\), the local volume \(\operatorname{Vol}(B(1))_T\) remains uniformly comparable to \(\operatorname{Vol}(B(1))_{T_0}\), establishing the uniform boundedness claimed in Corollary \ref{nocollapse}.
\end{proof}

\begin{theorem-non}[Ringström's Conjecture in the $\Lambda > 0$, $\sigma(M) \leq 0$ Setting]
Let $(\widehat{M}^{3+1}, \widehat{g})$ be a globally hyperbolic Einstein vacuum spacetime with positive cosmological constant $\Lambda > 0$, admitting a global CMC foliation by compact spacelike hypersurfaces diffeomorphic to a closed $3$-manifold $M$ of negative Yamabe type, and initial data satisfying Theorem~\ref{main}. Then there exists a CMC slice $M_{T}$ such that for every future-directed inextendible causal curve $\gamma$,
\[
M_{T} \not\subset J^{-}(\gamma).
\]
\end{theorem-non}

\begin{proof}
Let $\gamma(s) = (T(s),x^i(s))$ be any future-directed inextendible causal curve. In CMC-transported coordinates, the rescaled spacetime metric takes the form
\[
\widehat{g} = -N^2 dT^2 + g_{ij}(T) dx^i dx^j,
\]
with lapse $N$ uniformly bounded above by the estimates $\|\frac{N}{n}-1\|_{L^{\infty}(M)}\lesssim e^{-2T},~T\gg 1$. Causality implies
\[
g_{ij}(T) \frac{dx^i}{dT} \frac{dx^j}{dT} \leq N^2(T) \lesssim |\tau|^{-2}.
\]
In light of the estimates of theorem~\ref{main} and the scaling property of \ref{eq:scaling} in section \ref{mainsection}, the physical metric $\widetilde{g}(T)\approx \varphi^{-2}g(T)$ for the re-scaled nonphysical metric $g(T)$. Hence,
\[
\left\| \frac{dx^i}{dT} \right\|_{g(T)} \lesssim \bigg|\frac{\varphi}{\tau}\bigg|.
\]
Note that for the pure vacuum i.e., $\Lambda=0$, one has $\varphi=\tau$ and therefore the right hand side does not have a decay structure. For $\Lambda>0$, one observes $|\varphi(T)|\lesssim e^{-T}$ and $|\tau|= \sqrt{e^{-2T}+3\Lambda}=\sqrt{e^{-2T}+3n(n+1)}$ since we fixed $\Lambda=n(n+1)$ in our framework. 
The length of the spatial projection $\pi_M \circ \gamma$ of $\gamma$ in $(M, g(T))$ satisfies
\[
\text{Length}_{\bar{g}}[\pi_M \circ \gamma] \leq \int_{T}^\infty Cn^{-1}(n+1)^{-1} e^{-T^{'}} dT^{'} = Cn^{-1}(n+1)^{-1} e^{-T}\lesssim n^{-1}(n+1)^{-1}=\Lambda^{-1}.
\]
Thus, $\gamma$'s spatial projection $\pi_M \circ \gamma$ cannot intersect all of $M_{T}$. Therefore, $M_{T} \not\subset J^{-}(\gamma)$.
\end{proof}

\section{Funding and/or Conflicts of interests/Competing interests}
\noindent The authors have no conflicts to disclose. 

\section{Data Availability}
\noindent Data sharing is not applicable to this article as no new data were created or analyzed in this study.

\end{document}